\def\@eqnnum{{\normalsize \normalcolor (\theequation)}} 
\begin{document}
\begin{titlepage}

\newcommand{\HRule}{\rule{\linewidth}{0.6mm}}
\newcommand{\HRules}{\rule{\linewidth}{0.3mm}} 
\center 

\textsc{\LARGE Partially Specified Spatial Autoregressive Model with Artificial Neural Network}\\[1.5cm] 

 
{Wenqian Wang}\\[0.3cm]
{Beth Andrews}\\[1cm]

{Department of Statistics}\\[0.3cm]
{Northwestern University}\\[1cm]

\begin{abstract}
The spatial autoregressive model has been widely applied in science, in areas such as economics, public finance, political science, agricultural economics, environmental studies and transportation analyses. The classical spatial autoregressive model is a linear model for describing spatial correlation. In this work, we expand the classical model to include related exogenous variables, possibly non-Gaussian, high volatility errors, and a nonlinear neural network component. The nonlinear neural network component allows for more model flexibility — the ability to learn and model nonlinear and complex relationships.
We use a maximum likelihood approach for model parameter estimation. We establish consistency and asymptotic normality for these estimators under some standard conditions on the spatial model and neural network component. We investigate the quality of the asymptotic approximations for finite samples by means of numerical simulation studies. For illustration, we include a real world application. 

\end{abstract}

\vfill 

\end{titlepage}



\section{Introduction}
One commonly used assumption in regression analysis is that observations are uncorrelated, but this assumption is sometimes impossible to be defended in the analysis of spatial data when one observation may be related to neighboring entities. The nature of the covariance among observations may not be known precisely and researchers have been dedicated for years to building appropriate models to describe such correlation. 
The collection of techniques to investigate properties in the spatial models is considered to have begun in the domain of spatial econometrics first proposed by Paelinck in the early 1970s \cite{paelinck1978spatial}. Later, the books by Cliff and Ord \cite{getis1995cliff}, Anselin \cite{anselin2013spatial} and Cressie \cite{cressie2015statistics} detailed research results related to spatial autocorrelation, purely spatial dependence as well as cross-sectional and/or panel data. 

So why has estimating the spatial correlation drawn so much attention? In some applications estimating the spatial structure of the dependence may be a subject of interest or provide a key insight; in other contexts, it may be regarded as serial correlations. However, in either case, inappropriate treatment of data with spatial dependence can lead to inefficient or biased and inconsistent estimates. These consequences may result in misleading conclusions in the analysis of real world problems.
Therefore, it is important to describe spatial dependence; some standard parametric models are spatial autoregressive models (SAR), spatial error models (SEM) and spatial Durbin models (LeSage, R. Pace,  \cite{lesage2009introduction}). According to the spatial autoregressive model, values of the dependent variable are linearly related to observations  in neighboring regions. The SAR model has been widely discussed in the literature, and researchers have proposed various parameter estimation techniques such as the method of maximum likelihood by Ord \cite{ord1975estimation} and Smirnov and Anselin \cite{smirnov2001fast}, the method of moments by Kelejian and Prucha \cite{kelejian1998generalized,kelejian2010specification,kelejian1999generalized} and the method of quasi-maximum likelihood estimation by Lee \cite{lee2004asymptotic}.  

In a SAR model with covariates, the observations are modeled as a weighted average of neighboring observations with weights determined by the distance between them plus a function of the covariates:
{\small\setlength{\abovedisplayskip}{3pt}
  \setlength{\belowdisplayskip}{\abovedisplayskip}
  \setlength{\abovedisplayshortskip}{0pt}
  \setlength{\belowdisplayshortskip}{3pt}
\begin{equation*}
y_s=\rho\sum_{i=1}^{n}w_{si}y_i + x_s^{\prime}\beta +\varepsilon_s \quad s=1,2,\ldots,n
\end{equation*}} %
\vspace{-20pt}

\noindent
where $y_s$ denotes the observation of interest and $x_s$ denotes the value of a $p$ dimensional independent variable at location $s \in \{1,2,\ldots,n\}$. $w_{ij}$ is the $(i,j)$ entry of a $n \times n$ weight matrix $W_n$; it is a nonnegative weight which measures the degree of interaction between units $i$ and $j$. By convention, we always set $w_{ii}=0$. The random disturbances $\{\varepsilon_s\}_{s=1}^n$ are uncorrelated with zero means and equal variances; often in the literature these are taken to be normally distributed. The model has parameter vector $\theta =(\rho, \beta^{\prime})$. However,  parametric models are vulnerable to the preciseness of model specification: a misspecified model can draw misleading inferences. Whereas a nonparametric model is more robust even though it sacrifices the precision. In this sense, to combine the advantages of these two models, we consider a semi-parametric model in the spatial context. The suggested model, a partially specified spatial autoregressive model (PSAR) \cite{su2010profile}, is defined as follows:  
{\small
  \setlength{\abovedisplayskip}{3pt}
  \setlength{\belowdisplayskip}{\abovedisplayskip}
  \setlength{\abovedisplayshortskip}{0pt}
\begin{equation}\label{psar}
y_s=\rho\sum_{i=1}^{n}w_{si}y_i + x_s^{\prime}\beta +g(z_s)+\varepsilon_s \quad s=1,2,\ldots ,n
\end{equation}} %
\vspace{-20pt}

\noindent
where $g(\cdot)$ is an unknown function and $z_s$ denotes a $q$ dimensional vector of explanatory variables at location $s$. This PSAR model has a more flexible functional form than the ordinary spatial autoregressive model. Methods of parameter estimation for the PSAR model include profile quasi-maximum likelihood estimation by Su and Jin \cite{su2010profile} and a sieve method by Zhang \cite{zhang2015statistical}. In Su and Jin \cite{su2010profile}, they used profile quasi-maximum likelihood estimators for independent and identically distributed errors and gave an asymptotic analysis using local polynomials to describe $g$. This method showed its advantage in dimension reduction when maximizing concentrated likelihood function with respect to one parameter $\rho$ but involved in two-stage maximization if we wanted to obtain other parameter estimators such as $\beta$'s. However, in Zhang \cite{zhang2015statistical}, they were using a sieve method (Ai, Chen \cite{ai2003efficient}) to approximate the nonparametric function. They applied a sequence of known basis functions to approximate $g(\cdot)$ in equation (\ref{psar}), and used the two-stage least squares estimation with some instrumental variables to obtain consistent estimators for the PSAR model. 

Both methods use Gaussian likelihood techniques. But normality is unreasonable in many cases when we observe errors with heavy tails or abnormal patterns. If this is the case, maximum likelihood estimation can be more efficient than Gaussian-based quasi-maximum likelihood estimation. 
Another difference is that we are using neural network models to estimate the nonlinear function $g(\cdot)$ whereas Su and Jin \cite{su2010profile} applied a finite order of local polynomials about some explanatory variables and Zhang \cite{zhang2015statistical} used a linear combination of a sequence of known functions to estimate $g(\cdot)$.

The purpose of this paper is to extend an autoregressive artificial neural network model (Medeiros, Teräsvirta, Rech \cite{medeiros2006building}) developed in the context of time series data to a partially specified spatial autoregressive model and we regard the artificial neural network part as a nonlinear statistical component to approximate the nonparametric function $g(\cdot)$ in the PSAR model (\ref{psar}). The use of an ANN (Artificial Neural Network) model is motivated by mathematical results showing that under mild conditions, a relatively simple ANN model is competent in approximating any Borel-measurable function to any given degree of accuracy (see for example Hornik \emph{et al.} \cite{hornik1990universal}, Gallant and White \cite{gallant1992learning}). Under this theoretical foundation, we would expect our model to perform well when modeling nonparamatric components in spatial contexts.
Another improvement is that, in our model, the random error is independent and identically distributed but does not necessarily follow a normal distribution. We derive parameter estimates by maximizing the corresponding likelihood function and discuss the asymptotic properties of our estimators under conditions that the spatial weight matrix is nonsingular and the log likelihood function has some dominated function with a finite mean. 

In Sections 2 and 3, our model PSAR-ANN is given and a likelihood function for corresponding parameters is derived. In Sections 4 and 5, we discuss model identification and establish consistency and asymptotic normality for MLEs of model parameters. In section 6, we describe numerical simulation studies to investigate how well the behavior of estimators for finite samples matches the limiting theory, i.e., the quality of the normal approximation. In the real data example, we would like to explore spatial dynamics in U.S. presidential elections and a PSAR-ANN model is fit to the proportion of votes cast for 2004 U.S. presidential candidates at the county level.

\theoremstyle{definition}
\newtheorem{ass}{Assumption}
\newtheorem{thm}{Theorem}
\newtheorem{lem}{Lemma}
\newtheorem{defn}{Definition}

\section{Model Specification}
The main focus of this paper is to approximate the nonparametric function $g(\cdot)$ in the partially specified spatial autoregressive model (\ref{psar}) by an artificial neural network model. The model in matrix form is defined as
{\small\setlength{\abovedisplayskip}{3pt}
  \setlength{\belowdisplayskip}{\abovedisplayskip}
  \setlength{\abovedisplayshortskip}{0pt}
  \setlength{\belowdisplayshortskip}{3pt}
\begin{equation}\label{model-m}
Y_n= X_n\beta+\rho W_nY_n+ F(X_n\boldsymbol{\gamma}^{\prime})\lambda+\boldsymbol{\varepsilon}_n
\end{equation}} %
\vspace{-20pt}

\noindent
where $Y_n= \{y_s\}_{s=1}^n$ contains observations of the dependent variable at $n$ locations. The independent variable matrix $X_n=(x_1,x_2,\ldots, x_n)^{\prime} \in \mathbb{R}^{n\times q}$ contains values of exogenous regressors for the $n$ regions, where for each region, $x_s = (x_{s1},\ldots, x_{sq})^{\prime}$, $s=1,2,\ldots,n$, is a $q$ dimensional vector. $\boldsymbol{\varepsilon}_n =  \{\varepsilon_s\}_{s=1}^n$ denotes a vector of $n$ independent identically distributed random noises with density function $f(\cdot)$, mean $0$ and variance $\sigma^2=1$.

Exogenous parameters $\beta = (\beta_1,\ldots,\beta_q)^{\prime} \in \mathbb{R}^{q}$ and scalar $\rho$, the spatial autoregressive parameter, are assumed to be the same over all regions. $W_n=\{w_{ij}\} \in \mathbb{R}^{n \times n}$ denotes a spatial weight matrix which characterizes the connections between neighboring regions. For the ease of illustration, we define some additional notations. Given a function $f\in C^{1}(R^{1})$ continuous on $\mathbb{R}$, we define a new matrix mapping $R^{n}\rightarrow R^{n}$ as $\boldsymbol{f}$ s.t. $\boldsymbol{f}(x_1,\ldots,x_n)=(f(x_1),\ldots,f(x_n))^{\prime}$.
Using this notation, the artificial neural network component (Medeiros {\em et al.} \cite{medeiros2006building}) can be written as $ \boldsymbol{F}(X_n\boldsymbol{\gamma}^{\prime})\lambda$ with
{\small
\setlength{\abovedisplayskip}{3pt}
  \setlength{\belowdisplayskip}{\abovedisplayskip}
  \setlength{\abovedisplayshortskip}{0pt}
  \setlength{\belowdisplayshortskip}{3pt}
\begin{align*}
\boldsymbol{F}(X_n\boldsymbol{\gamma}^{\prime})=\begin{bmatrix}
F(x_1^{\prime}\boldsymbol{\gamma}_1)&
F(x_1^{\prime}\boldsymbol{\gamma}_2)&
\ldots&
F(x_1^{\prime}\boldsymbol{\gamma}_h)\\
 F(x_2^{\prime}\boldsymbol{\gamma}_1)&
F(x_2^{\prime}\boldsymbol{\gamma}_2)&\ldots&
F(x_2^{\prime}\boldsymbol{\gamma}_h)\\
\vdots&\vdots&\ddots&\vdots\\
F(x_n^{\prime}\boldsymbol{\gamma}_1)&
F(x_n^{\prime}\boldsymbol{\gamma}_2)&
\ldots&
F(x_n^{\prime}\boldsymbol{\gamma}_h)
\end{bmatrix}
\in \mathbb{R}^{n\times h}
\end{align*}
} %
\noindent
This matrix represents a single layer neural network with $h$ neurons for every location. The value of $h$ is determined by researchers and can be selected by comparing AIC/BIC. Under this setting, the parameter matrix $\boldsymbol{\gamma}=(\boldsymbol{\gamma}_1,\boldsymbol{\gamma}_2,\ldots, \boldsymbol{\gamma}_h)^{\prime} \in \mathbb{R}^{h \times q}$, $\boldsymbol{\gamma}_i=(\gamma_{i1},\ldots,\gamma_{iq})^{\prime}\in \mathbb{R}^{q}$, $i=1,2,\ldots,h$, contains all the weights in a neural network model. $F(\cdot)$ is called the activation function and we discuss the situation when it is the logistic function with range from 0 to 1 (the logistic activation function is the most common choice in neural network modeling \cite{medeiros2006building}). For given information $x_s$ at region $s$, the corresponding output of $i$th neuron in a single layer neural network is 
{\small
\setlength{\abovedisplayskip}{3pt}
  \setlength{\belowdisplayskip}{\abovedisplayskip}
  \setlength{\abovedisplayshortskip}{0pt}
  \setlength{\belowdisplayshortskip}{3pt}
\begin{equation*}
F(x_s^{\prime}\boldsymbol{\gamma}_i)=(1+e^{-x_s^{\prime}\boldsymbol{\gamma}_i})^{-1}, \quad s=1,2,\ldots,n,\,i=1,2,\ldots,h
\end{equation*}
} %
\vspace{-20pt}

\noindent
Parameter vector $\lambda=(\lambda_1,\lambda_2,\ldots,\lambda_h)^{\prime}$ denotes weights for $h$ neurons. So $ \boldsymbol{F}(X_n\boldsymbol{\gamma}^{\prime})\lambda=$
{\small
\setlength{\abovedisplayskip}{3pt}
  \setlength{\belowdisplayskip}{\abovedisplayskip}
  \setlength{\abovedisplayshortskip}{0pt}
  \setlength{\belowdisplayshortskip}{3pt}\[
\begin{bmatrix}
F(x_1^{\prime}\boldsymbol{\gamma}_1)&
F(x_1^{\prime}\boldsymbol{\gamma}_2)&
\ldots&
F(x_1^{\prime}\boldsymbol{\gamma}_h)\\
F(x_2^{\prime}\boldsymbol{\gamma}_1)&
F(x_2^{\prime}\boldsymbol{\gamma}_2)&\ldots&
F(x_2^{\prime}\boldsymbol{\gamma}_h)\\
\vdots&\vdots&\ddots&\vdots\\
F(x_n^{\prime}\boldsymbol{\gamma}_1)&
F(x_n^{\prime}\boldsymbol{\gamma}_2)&
\ldots&
F(x_n^{\prime}\boldsymbol{\gamma}_h)
\end{bmatrix}
\begin{bmatrix}
\lambda_1\\
\lambda_2\\
\vdots\\
\lambda_h
\end{bmatrix}=
\begin{bmatrix}
\sum_{i=1}^{h}\lambda_i F(x_1^{\prime}\boldsymbol{\gamma}_i)\\
\sum_{i=1}^{h}\lambda_i F(x_2^{\prime}\boldsymbol{\gamma}_i)\\
\vdots\\
\sum_{i=1}^{h}\lambda_i F(x_n^{\prime}\boldsymbol{\gamma}_i)
\end{bmatrix}
\in \mathbb{R}^{n}
\]} %
\noindent
One important element in the model (\ref{model-m}) is the spatial weight matrix $W_n$.
The spatial weights depend on the definition of a neighborhood set for each observation. In our applications we begin by using a square symmetric $n\times n$ matrix with $(i,j)$ element equal to 1 if regions $i$ and $j$ are neighbors and $w_{ij}=0$ otherwise. The diagonal elements of the spatial neighbors matrix are set to zero. Then we row standardize the weight matrix, so the nonzero weights are scaled so that the weights in each row sum up to 1. In convention, people usually use the row standardized weight matrices because row standardization creates proportional weights in cases where features have an unequal number of neighbors; also this normalized matrix has nice properties in the range of eigenvalues (this will be mentioned later). As LeSage \cite{lesage1999spatial} suggests, there is a vast number of ways to define neighbors and to construct a weight matrix. In the following we discuss some commonly used methods in lattice cases and non-lattice cases. In a lattice case shown in the following Figure \ref{weights-1}, we have 9 locations and we label them as $1,2,\ldots,9$ at left bottom corners in each cell. Suppose $i$ is the target location and $j$ identifies a neighbor of $i$.
\begin{itemize}
\item Rook Contiguity (Fig 1 (a)): two regions are neighbors if they share (part of) a common edge (on any side) 
\item Bishop Continuity (Fig 1 (b)): two regions are spatial neighbors if they share a common vertex (or a point)
\item Queen Contiguity (Fig 1 (c)): this is the union of Rook and Bishop contiguity. Two regions are neighbors in this sense if they share any common edge or vertex
\end{itemize}
\begin{figure}[ht]
\begin{center}
\begin{subfigure}[b]{0.3\textwidth}
\begin{tikzpicture}
\draw[step=1.4cm,color=black] (0,0) grid (4.2,4.2);
\node at (0.2,0.2) {7};\node at (1.6,0.2) {8};\node at (3,0.2) {9};\node at (0.2,1.6) {4};\node at (1.6,1.6) {5};
\node at (3,1.6) {6};\node at (0.2,3) {1};\node at (1.6,3) {2};\node at (3,3) {3};
\node at (2.1,0.7) {$j$};
\node at (0.7,2.1) {$j$};
\node at (2.1,2.1) {$i$};
\node at (3.5,2.1) {$j$};
\node at (2.1,3.5) {$j$};
\end{tikzpicture}
\caption{}
\end{subfigure}
\begin{subfigure}[b]{0.3\textwidth}
\begin{tikzpicture}
\draw[step=1.4cm,color=black] (0,0) grid (4.2,4.2);
\node at (0.2,0.2) {7};\node at (1.6,0.2) {8};\node at (3,0.2) {9};\node at (0.2,1.6) {4};\node at (1.6,1.6) {5};
\node at (3,1.6) {6};\node at (0.2,3) {1};\node at (1.6,3) {2};\node at (3,3) {3};
\node at (0.7,0.7) {$j$};
\node at (3.5,0.7) {$j$};
\node at (2.1,2.1) {$i$};
\node at (0.7,3.5) {$j$};
\node at (3.5,3.5) {$j$};
\end{tikzpicture}
\caption{}
\end{subfigure}
\begin{subfigure}[b]{0.3\textwidth}
\begin{tikzpicture}
\draw[step=1.4cm,color=black] (0,0) grid (4.2,4.2);
\node at (0.2,0.2) {7};\node at (1.6,0.2) {8};\node at (3,0.2) {9};\node at (0.2,1.6) {4};\node at (1.6,1.6) {5};
\node at (3,1.6) {6};\node at (0.2,3) {1};\node at (1.6,3) {2};\node at (3,3) {3};
\node at (0.7,0.7) {$j$};
\node at (2.1,0.7) {$j$};
\node at (3.5,0.7) {$j$};
\node at (0.7,2.1) {$j$};
\node at (2.1,2.1) {$i$};
\node at (3.5,2.1) {$j$};
\node at (0.7,3.5) {$j$};
\node at (2.1,3.5) {$j$};
\node at (3.5,3.5) {$j$};
\end{tikzpicture}
\caption{}
\end{subfigure}
\end{center}
\caption{Examples of Rook (a), Bishop (b) and Queen Contiguity (c)}
\label{weights-1}
\end{figure}
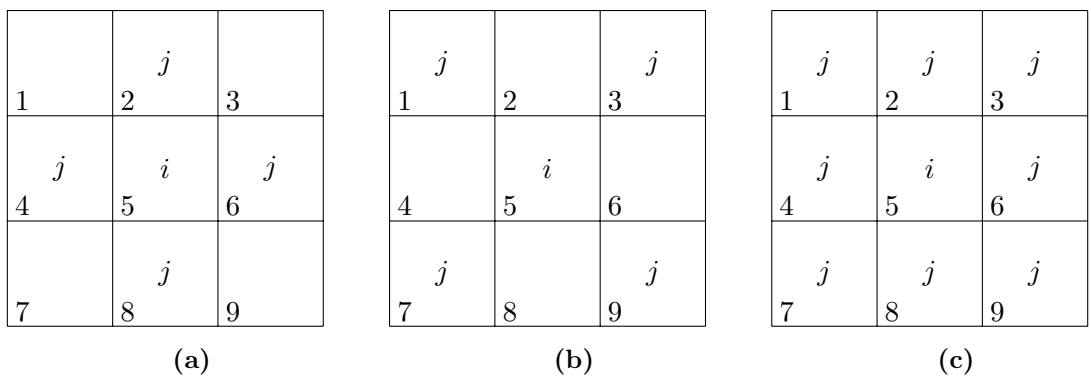
In practice, we may not always have a problem in a lattice. So an analog of an edge and a vertex is called ``snap distance"\cite{bivand2008applied} such that any border larger than this ``snap distance" will be regarded as an edge or otherwise a vertex. So the Queen contiguity may be interpreted as that two regions are neighbors as long as they are connected no matter how short the common border is. Under the Queen criterion, for example, based on the example illustrated in Figure (\ref{weights-1}(c)), a $9\times 9$ weight matrix for nine locations is shown below. 
{\tiny
\setlength{\abovedisplayskip}{3pt}
  \setlength{\belowdisplayskip}{\abovedisplayskip}
  \setlength{\abovedisplayshortskip}{0pt}
  \setlength{\belowdisplayshortskip}{0pt}
  \begin{align}\label{9by9-matrix}
  	\begin{pmatrix}
  	0& 1& 0& 1& 1& 0& 0& 0& 0\\
  	1& 0& 1& 1& 1& 1& 0& 0& 0\\
  	0& 1& 0& 0& 1& 1& 0& 0& 0\\
  	1& 1& 0& 0& 1& 0& 1& 1& 0\\
  	1& 1& 1& 1& 0& 1& 1& 1& 1\\
  	0& 1& 1& 0& 1& 0& 0& 1& 1\\
  	0& 0& 0& 1& 1& 0& 0& 1& 0\\
  	0& 0& 0& 1& 1& 1& 1& 0& 1\\
  	0& 0& 0& 0& 1& 1& 0& 1& 0    
  	\end{pmatrix}
  \end{align}
} %
However, in a non-lattice case when units, such as cities, are only points, this neighborhood definition does not work because all units/points do not share any common edge or vertex. So a distance based method is utilized to deal with such point case. Denote $d_{ij} \equiv d(i,j)$ as the distance between two units/points $i$ and $j$, then some commonly used ways to define neighborhoods are  
\begin{itemize}
\item Minimum Distance Neighbors:\\
A neighbor $j$ of unit $i$ satisfies that their distance $d_{ij} \in \Big(0, \displaystyle\max_{i=\{1,\ldots,n\}} \min_{j \neq i} d(i, j)\Big]$. This method controls that every unit has at least one neighbor but usually includes a large number of irrelevant connections.  
\item K-nearest Neighbors:\\
Neighbors of $i$ are restricted by the user-defined parameter $K$. A unit $j$ is a neighbor of $i$ if $j \in N_K(i)$, where $N_K(i)$ defines the K-nearest neighbors of $i$. This method also guarantees that there is no neighborless unit and has less noise then the Minimum Distance Neighbors. However, the user-choice parameter $K$ may not reflect the true level of connectedness or isolation between points.
\item Sphere of Influence Neighbors:\\
For each point $i \in S = \{1,\ldots,n\}$, $r_i= \min_{k\neq i}d(i,k)$ and denote $C_i$ as a circle of radius $r_i$ centered at $i$. Units $i$ and $j$ are neighborhoods whenever $C_i$ and $C_j$ intersect in exactly two points. This graph-based method improves the K-nearest Neighbors in a way that relatively long links are avoided and the number of connections per unit is variable. This method works well even with irregularly located areal entities and precludes the intervention of user-defined parameter $K$ in the previous method (See Figure \ref{weights-2}).  
\end{itemize}
\begin{figure}[ht]
\centering
\includegraphics[scale=0.3]{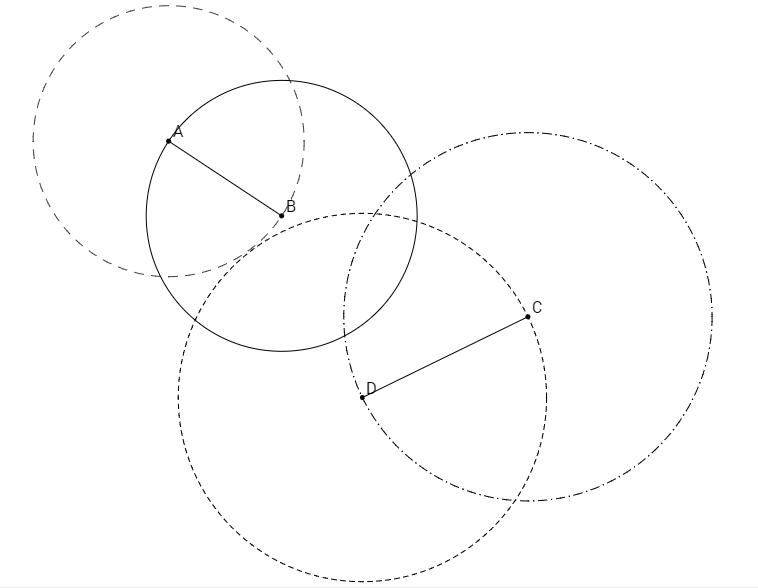}
\caption{\textbf{Sphere of Influence Graph:} A,B,C,D represent four units. Where the circles around each city overlap in at least two points, the cities can be considered neighbors. In the current example, A is a neighbor of only B, B is a neighbor to all, C is a neighbor of B and D, D is a neighbor of B and C but not A.}
\label{weights-2}
\end{figure}
According to Figure \ref{weights-2}, the weight matrix for A, B, C and D is:
{\tiny
\setlength{\abovedisplayskip}{3pt}
  \setlength{\belowdisplayskip}{\abovedisplayskip}
  \setlength{\abovedisplayshortskip}{0pt}
  \setlength{\belowdisplayshortskip}{3pt}\[
\left(
\begin{array}{cccc}
 0& 1& 0& 0\\
 1& 0& 1& 1\\
 0& 1& 0& 1\\
 0& 1& 1& 0 
\end{array}
\right)
\]} %

To write our model (\ref{model-m}) more explicitly, for each location $s$, $s=1,2,\ldots,n$
{\small\setlength{\abovedisplayskip}{3pt}
  \setlength{\belowdisplayskip}{\abovedisplayskip}
  \setlength{\abovedisplayshortskip}{0pt}
  \setlength{\belowdisplayshortskip}{3pt}
\begin{equation}\label{model}
y_s=x_s^{\prime}\boldsymbol{\beta}+\rho\sum_{i=1}^{n}w_{si}y_i +\sum_{i=1}^{h}\lambda_i F(x_s^{\prime}\boldsymbol{\gamma}_i)+\varepsilon_s
\end{equation} }%
\vspace{-20pt}

\noindent
The term $\sum_{i=1}^{h}\lambda_i F(x_s^{\prime}\boldsymbol{\gamma}_i)$, a linear combination of logistic functions with weights $\lambda_i, i = 1,2,\ldots, h$, forms a hidden layer of this neural network with $h$ neurons (Medeiros, Teräsvirta, Rech \cite{medeiros2006building}). This neural network helps discover nonlinear relationship between the response variable and its covariates.

\section{Likelihood Function}
Rewriting the equation in (\ref{model-m}), we have
{\small
\setlength{\abovedisplayskip}{3pt}
  \setlength{\belowdisplayskip}{\abovedisplayskip}
  \setlength{\abovedisplayshortskip}{0pt}
  \setlength{\belowdisplayshortskip}{3pt}
\begin{equation}\label{model-u}
(I_n-\rho W_n)Y_n- X_n\beta- \boldsymbol{F}(X_n\boldsymbol{\gamma}^{\prime})\lambda=\boldsymbol{\varepsilon}_n
\end{equation}} %
\vspace{-20pt}

\noindent
where $I_n$ is an $n\times n$ identity matrix. We denote $\boldsymbol{\theta}=(\beta_1,\ldots,\beta_q,\rho,\lambda_1,\ldots,\lambda_h,\boldsymbol{\gamma}_1^{\prime},\ldots,\boldsymbol{\gamma}_h^{\prime})^{\prime} \in \mathbb{R}^{(q+1)(h+1)}$ with true value $\boldsymbol{\theta}_0$.

For the analysis of identification and estimation of this spatial autoregressive model (\ref{model-m}), we adopt the following assumptions:
\begin{ass}\label{compact-space}
	The $(q+1)(h+1)$-dimensional parameter vector $\boldsymbol{\theta}=(\beta^{\prime},\rho,\lambda^{\prime},\boldsymbol{\gamma}_1^{\prime},\ldots,\boldsymbol{\gamma}_h^{\prime})^{\prime} \in \boldsymbol{\Theta}$, where $\boldsymbol{\Theta}$ is a subset of the $(q+1)(h+1)$- dimensional Euclidean space $\mathbb{R}^{(q+1)(h+1)}$. $\boldsymbol{\Theta}$ is a closed and bounded compact set and contains the true parameter value $\boldsymbol{\theta}_0$ as an interior point.
\end{ass}
\begin{ass}\label{rho-range}
	The spatial correlation coefficient $\rho$ satisfies
	$\rho \in (-1/\tau,1/\tau)$, where $\tau= \max\{|\tau_1|,|\tau_2|,\ldots,|\tau_n|\}$, $\tau_1,\ldots, \tau_n$ are eigenvalues of spatial weight matrix $W_n$. To avoid the non-stationarity issue when $\rho$ approaches to 1, we assume $\sup_{\rho\in\boldsymbol{\Theta}}|\rho|<1$.
\end{ass}
\begin{ass}\label{ub-weight}
	We assume $W_n$ is defined by queen contiguity and is uniformly bounded in row and column sums in absolute value as $n\rightarrow\infty$ so $(I_n-\rho W_n)^{-1}$ is also uniformly bounded in row and column sums as $n\rightarrow\infty$.
\end{ass}
\begin{ass}\label{x-boundness}
	$X_n$ is stationary, ergodic satisfying $\mathbb{E}\, |x_s|^2 <\infty, s=1,\ldots,n$ and $X_n$ is full column rank. 
\end{ass}
\begin{ass}\label{error-dist}
	The error terms $\varepsilon_s$, $s=1,2,\ldots,n$ are independent and identically distributed with density function $f(\cdot)$, zero mean and unit variance $\sigma^2=1$. The moment $E(|\varepsilon_s|^{2+r})$ exists for some $r>0$ and $E\left|\ln f(\varepsilon_s)\right|<\infty$.
\end{ass}
Assumption \ref{rho-range} defines the parameter space for $\rho$ as an interval around zero such that $I_n-\rho W_n$ is strictly diagonally dominant. By the Levy-Desplanques theorem \cite{taussky1949recurring}, it follows that $I_n-\rho W_n$ is nonsingular for any values $\rho$ in that interval. 

Note that the diagonal entries in $I_n-\rho W_n$ are all 1 (because $w_{ii}=0$). Using Gershgorin circle theorem \cite[p.~749-754]{gershgorin1931uber}, we can show that the largest eigenvalue of a row-standardized matrix $W_n$ is bounded by 1. Using the $9\times 9$ non-standardized weight matrix (\ref{9by9-matrix}) constructed under Queen's criterion in the section 2, the interval for $\rho$ is $(-0.207,0.207)$ whereas the row standardized weight matrix corresponds to $(-1,1)$.

It is natural to consider the neighborhood by connections and in many practical studies, since entries scaled to sum up to 1, each row of $W_n$ sums up to 1, which guarantees that all nonzero weights are in $(0,1]$. For simplicity, we define the weight matrix $W_n$ using the queen criterion and do row standardization. Assumption \ref{ub-weight} is originated by Kelejian and Prucha (1998 \cite{kelejian1998generalized}, 2001 \cite{kelejian1999generalized}) and is also used in Lee (2004 \cite{lee2004asymptotic}). Restricting $W_n$ to be uniformly bounded prevents the model prediction from exploding when $n$ goes to infinity. By Lemma A.4 in Lee \cite{lee2004asymptotic}, we can prove that $(I_n - \rho W_n)^{-1}$ is also uniformly bounded in row and column sums for $\rho \in (-1/\tau,1/\tau)$.

From Assumptions \ref{rho-range} and \ref{ub-weight} we can also decompose $W_n$ by its eigenvalue and eigenvector pairs $\tau_i, v_i$: $W_n = P\Lambda P^{-1}$, where $\Lambda$ is a diagonal matrix with eigenvalues $\tau_i$ on its diagonals and $P=[v_1,v_2,\ldots, v_n]$ (we assume $v_i$'s are normalized eigenvectors). So
{\small\setlength{\abovedisplayskip}{3pt}
	\setlength{\belowdisplayskip}{\abovedisplayskip}
	\setlength{\abovedisplayshortskip}{0pt}
	\setlength{\belowdisplayshortskip}{3pt}
	\begin{align}\label{matrix-decomposition}
	W_n=P\begin{pmatrix}
	\tau_1& 0 &\cdots&0\\
	0& \tau_2&\cdots&0\\
	0&0&\ddots&0\\
	0&0 &\cdots&\tau_n
	\end{pmatrix}P^{-1},
	(I_n-\rho W_n)^{-1}=P\begin{pmatrix}
	\frac{1}{1-\rho\tau_1}& 0 &\cdots&0\\
	0& \frac{1}{1-\rho\tau_2}&\cdots&0\\
	0&0&\ddots&0\\
	0&0 &\cdots&\frac{1}{1-\rho\tau_n}
	\end{pmatrix}P^{-1}
	\end{align}}%
This decomposition will later help us compute the likelihood function.

Assumption \ref{x-boundness} guarantees the stationarity of $\{x_s\}$ so we can apply ergodic theorem later in the proofs.

Assumption \ref{error-dist} imposes restrictions for the random error. We assume that errors $\{\varepsilon_s\}_{s=1}^n$ have an identical density function $f(\cdot)$. So to derive the likelihood function of $\boldsymbol{\theta}$, it is necessary to introduce the Jacobian coefficient which allows us to derive the joint distribution of $Y_n=\{y_s\}_{s=1}^n$ from that of $\{\varepsilon_s\}_{s=1}^n$, through equation (\ref{model-u}): 
{\small\setlength{\abovedisplayskip}{3pt}
  \setlength{\belowdisplayskip}{\abovedisplayskip}
  \setlength{\abovedisplayshortskip}{0pt}
  \setlength{\belowdisplayshortskip}{3pt}
\begin{equation}\label{jacobian}
J=det (\partial \boldsymbol{\varepsilon}_n/\partial Y_n)=|I_n - \rho W_n|
\end{equation}} %
\vspace{-20pt}

\noindent
Hence, based on the joint distribution for the vector of independent errors $\{\varepsilon_s\}_{s=1}^n$, and using (\ref{jacobian}) the log-likelihood function for $\boldsymbol{\theta}$ is given by (Anselin \cite[p.~63]{anselin2013spatial})
{\small\setlength{\abovedisplayskip}{3pt}
  \setlength{\belowdisplayskip}{\abovedisplayskip}
  \setlength{\abovedisplayshortskip}{0pt}
  \setlength{\belowdisplayshortskip}{3pt}
\begin{eqnarray}\label{likelihood}
\mathcal{L}_n(\boldsymbol{\theta})&=&\ln |I_n - \rho W_n|+\sum_{s=1}^{n}\ln f(\varepsilon_s(\boldsymbol{\theta}))\\\nonumber
\varepsilon_s(\boldsymbol{\theta})&=& y_s-x_s^{\prime}\beta-\rho\sum_{i=1}^{n}w_{si}y_i-\sum_{i=1}^{h}\lambda_i F(x_s^{\prime}\boldsymbol{\gamma}_i)
\end{eqnarray}} %
\vspace{-20pt}

\noindent
In practice, the density function $f$ could be chosen by looking at the distribution for observations and model residuals $\varepsilon_s(\boldsymbol{\theta})$. Common choices are normal distribution, t-distribution and Laplace distribution. We examined these three distributions (with unit variances under Assumption \ref{error-dist}) and the corresponding log-likelihood functions functions are given below.

When $\varepsilon_s\sim N(0,1)$,
{\small\setlength{\abovedisplayskip}{3pt}
  \setlength{\belowdisplayskip}{\abovedisplayskip}
  \setlength{\abovedisplayshortskip}{0pt}
  \setlength{\belowdisplayshortskip}{3pt}
\[f(\varepsilon_s)=\frac{1}{\sqrt{2\pi}}\exp (-\frac{\varepsilon_s^2}{2})\]\label{likelihood_normal}
\begin{equation*}
\mathcal{L}_n(\boldsymbol{\theta})=\ln |I_n - \rho W_n|-\frac{n}{2}\ln (2\pi)-\frac{1}{2}\sum_{s=1}^{n}\varepsilon_s^2(\boldsymbol{\theta})
\end{equation*}} %
\vspace{-20pt}

When $\varepsilon_s$ has the rescaled t distribution with degree of freedom $\nu$ ($\nu>2$, known) which is symmetric about zero and has variance 1:
{\small\setlength{\abovedisplayskip}{3pt}
  \setlength{\belowdisplayskip}{\abovedisplayskip}
  \setlength{\abovedisplayshortskip}{0pt}
  \setlength{\belowdisplayshortskip}{3pt} 
\[f(\varepsilon_s)=\sqrt{\frac{\nu}{\nu-2}}\frac{\Gamma[\frac{1}{2}(\nu+1)]}{\sqrt{\nu\pi}\,\Gamma(\frac{1}{2}\nu)}\cdot\left(1+\frac{\varepsilon_s^2}{\nu-2}\right)^{-\frac{1+\nu}{2}}\]
\begin{equation*}
\mathcal{L}_n(\boldsymbol{\theta})=\ln |I_n - \rho W_n|-\frac{n}{2}\ln (\nu-2)\pi+n\ln \frac{\Gamma[\frac{1}{2}(\nu+1)]}{\Gamma(\frac{1}{2}\nu)}-\frac{1+\nu}{2}\sum_{s=1}^{n}\ln \big(1+\frac{\varepsilon_s^2(\boldsymbol{\theta})}{\nu-2}\big)
\end{equation*}} %
\vspace{-20pt}

When $\varepsilon_s\sim$ Laplace distribution with mean $\mu =0$ and scale parameter $b=\sqrt{2}/2$,
{\small\setlength{\abovedisplayskip}{3pt}
  \setlength{\belowdisplayskip}{\abovedisplayskip}
  \setlength{\abovedisplayshortskip}{0pt}
  \setlength{\belowdisplayshortskip}{3pt}
\[f(\varepsilon_s)=\frac{1}{\sqrt{2}}\exp \Big(-\sqrt{2}|\varepsilon_s|\Big)\]
\begin{equation*}
\mathcal{L}_n(\boldsymbol{\theta})=\ln |I_n - \rho W_n|-\frac{n}{2}\ln 2-\sum_{s=1}^{n}\sqrt{2}|\varepsilon_s(\boldsymbol{\theta})|
\end{equation*}} %
\vspace{-20pt}

\noindent
In the following sections, we will discuss model identifiability and establish asymptotic properties for the maximum likelihood estimator $\hat{\boldsymbol{\theta}}=\arg\max\limits_{\boldsymbol{\theta}\in \boldsymbol{\Theta}} \mathcal{L}_n(\boldsymbol{\theta})$.

\section{Model Identification}
We now investigate the conditions under which our proposed model is identified. By Rothenberg \cite{rothenberg1971identification}, a parameter $\theta_0\in\boldsymbol{\Theta}$ is {\it globally identified} if there is no other $\theta$ in $\boldsymbol{\Theta}$ that is observationally equivalent to $\theta_0$ such that $f(y;\theta)=f(y;\theta_0)$; or the parameter $\theta_0$ is {\it locally identified} if there is no such $\theta$ in an open neighborhood of $\theta_0$ in $\boldsymbol{\Theta}$. 
The model (\ref{model}), in principle, is neither globally nor locally identified and the lack of identification of Neural Network models has been discussed in many papers (\citet{hwang1997prediction}; Medeiros \emph{et al.} \cite{medeiros2006building}). Here we extend the discussion to our proposed PSAR model. Three characteristics imply non-identification of our model:
(a) the interchangeable property: the value in the likelihood function may remain unchanged if we permute the hidden units. For a model with $h$ neurons, this will result in $h!$ different models that are indistinguishable from each other and have equal local maximums of the log-likelihood function;
(b) the ``symmetry" property: for a logistic function, $F(x)=1-F(-x)$ allows two equivalent parametrization for each of the hidden units;
(c) the reducible property: the presence of irrelevant neurons in model (\ref{model}) happens when $\lambda_i=0$ so parameters $\boldsymbol{\gamma}_i$ in this neuron would remain unidentified. Conversely, if $\boldsymbol{\gamma}_{i} =\boldsymbol{0}$, the output of that sigmoid function is a constant so $\lambda_i$ can take any value without affecting the value of likelihood functions. 

The problem of interchangeability (as mentioned in (a)) can be solved by imposing the following restriction, as in Medeiros \emph{et al.} \cite{medeiros2006building}:\\
{\bf Restriction 1.} {\it parameters $\lambda_1, \ldots, \lambda_h $  are restricted such that: $\lambda_1\geq\cdots\geq \lambda_h$.     
}\\
And to tackle (b) and (c), we can apply another restriction:\\
{\bf Restriction 2.} {\it The parameters $\lambda_i$ and $\gamma_{i1}$ should satisfy:\\
(1) $\lambda_i \neq 0$, $\forall i \in \{1,2,\ldots, h\}$; and\\
(2) $\gamma_{i1} > 0$, $\forall i \in \{1,2,\ldots, h\}$.      
}\\
To guarantee the non-singularity of model matrices and the uniqueness of parameters, we impose the following basic assumption:   
\begin{ass}\label{identifiability}
The true parameter vector $\boldsymbol{\theta}_0$ satisfies Restrictions 1-2.
\end{ass}
Referring to the section 4.3 by Medeiros \emph{et al.} \cite{medeiros2006building}, we can conclude the identifiability of the PSAR-ANN model
\begin{lem}\label{identify}
Under the Assumptions \ref{compact-space}-\ref{identifiability}, this partially specified spatial autoregressive model (\ref{model}) is globally identified.
\end{lem}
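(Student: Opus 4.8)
\emph{Proof plan.} The plan is to peel off the spatial parameter $\rho$ first, using only the first two moments of the error law, then to reduce the remaining equality of conditional densities to a deterministic identity between the two neural-network mean functions, and finally to invoke the ANN identification argument of Medeiros \emph{et al.} Fix the exogenous matrix $X_n$ and write $A(\rho)=I_n-\rho W_n$ and $m(\boldsymbol{\theta})=X_n\beta+\boldsymbol{F}(X_n\boldsymbol{\gamma}^{\prime})\lambda$, so the conditional density of $Y_n$ under $\boldsymbol{\theta}$ is $|A(\rho)|\prod_{s=1}^{n}f(\varepsilon_s(\boldsymbol{\theta}))$. If $\boldsymbol{\theta}$ and $\boldsymbol{\theta}_0$ are observationally equivalent, I would generate $Y_n$ under $\boldsymbol{\theta}_0$, so that $U:=A(\rho_0)Y_n-m(\boldsymbol{\theta}_0)$ has i.i.d.\ components with density $f$, mean $0$ and covariance $I_n$ by Assumption~\ref{error-dist}, while $\varepsilon_n(\boldsymbol{\theta})=A(\rho)Y_n-m(\boldsymbol{\theta})=BU+c$ with $B:=A(\rho)A(\rho_0)^{-1}$ and $c:=Bm(\boldsymbol{\theta}_0)-m(\boldsymbol{\theta})$. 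Observational equivalence forces $BU+c\stackrel{d}{=}U$; equating means gives $c=0$ and equating covariances gives $BB^{\prime}=I_n$, so $B$ is orthogonal. Only the first two moments of $f$ enter here, so the argument does not secretly require non-Gaussian errors.

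Next I would identify $\rho$. The matrix $(I_n-\rho_0 W_n)^{-1}$ is diagonalized by the same $P$ appearing in (\ref{matrix-decomposition}), hence so is $B=(I_n-\rho W_n)(I_n-\rho_0 W_n)^{-1}$, with eigenvalues $(1-\rho\tau_j)/(1-\rho_0\tau_j)$. Because $W_n$ is obtained by row-standardizing a symmetric adjacency matrix (every region having at least one neighbour), it is similar to a symmetric matrix, so all $\tau_j$ are real, and $W_n\mathbf{1}=\mathbf{1}$ gives $\tau_j=1$ for some index. A real orthogonal matrix has all eigenvalues of modulus one, so its real eigenvalues must lie in $\{-1,+1\}$; hence $(1-\rho\tau_j)^2=(1-\rho_0\tau_j)^2$ for every $j$, and for the index with $\tau_j=1$ this reads $(1-\rho)^2=(1-\rho_0)^2$. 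The root $\rho+\rho_0=2$ is excluded by $\sup_{\rho}|\rho|<1$ from Assumption~\ref{rho-range}, leaving $\rho=\rho_0$; then $B=I_n$ and $c=0$ gives $m(\boldsymbol{\theta})=m(\boldsymbol{\theta}_0)$.

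Finally, $m(\boldsymbol{\theta})=m(\boldsymbol{\theta}_0)$ holds at the realized $X_n$, but since $\{x_s\}$ is stationary, ergodic, and nondegenerate by Assumption~\ref{x-boundness}, I would upgrade it to $x^{\prime}\beta+\sum_{i=1}^{h}\lambda_i F(x^{\prime}\boldsymbol{\gamma}_i)=x^{\prime}\beta_0+\sum_{i=1}^{h}\lambda_{0i}F(x^{\prime}\boldsymbol{\gamma}_{0i})$ on a set of positive Lebesgue measure, hence for all $x\in\mathbb{R}^{q}$ by real-analyticity of the logistic function. This is exactly the setting of Section~4.3 of Medeiros \emph{et al.}: the constant, the coordinate maps, and the sigmoids $x\mapsto F(x^{\prime}\boldsymbol{\gamma}_i)$ are linearly independent once the $\boldsymbol{\gamma}_i$ are nonzero and pairwise non-collinear, and Restriction~2(1) together with $\gamma_{i1}>0\Rightarrow\boldsymbol{\gamma}_i\neq\boldsymbol{0}$ from Restriction~2(2) rules out the reducible configurations. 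Matching coefficients then forces $\beta=\beta_0$ and $\{(\lambda_i,\boldsymbol{\gamma}_i)\}_i=\{(\lambda_{0i},\boldsymbol{\gamma}_{0i})\}_i$ up to a permutation of the neurons and the sign flip $(\lambda_i,\boldsymbol{\gamma}_i)\mapsto(-\lambda_i,-\boldsymbol{\gamma}_i)$ coming from $F(u)=1-F(-u)$; Restriction~1 eliminates the permutation freedom and Restriction~2(2) eliminates the sign freedom, so $\boldsymbol{\theta}=\boldsymbol{\theta}_0$.

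The step I expect to be the main obstacle is disentangling $\rho$ from the flexible ANN mean without circular reasoning; the clean route above leans on spectral facts specific to row-standardized weight matrices (real spectrum, the unit eigenvalue) together with the bound $\sup_\rho|\rho|<1$. The secondary subtlety is passing from the finite-sample identity $m(\boldsymbol{\theta})=m(\boldsymbol{\theta}_0)$ to a genuine functional identity in $x$, which is where the nondegeneracy of the regressor distribution in Assumption~\ref{x-boundness} does real work; everything downstream is the Medeiros \emph{et al.} bookkeeping made possible by Restrictions~1--2.
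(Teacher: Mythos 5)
Your argument is sound in outline and is considerably more explicit than what the paper provides: the paper offers no proof of Lemma \ref{identify} at all, simply deferring to Section 4.3 of Medeiros \emph{et al.}, which only covers the ANN mean function and says nothing about how the spatial parameter $\rho$ is disentangled from it. Your reduction supplies exactly that missing piece, and does so economically: matching only the first two moments of $BU+c$ against $U$ to get $c=0$ and $BB^{\prime}=I_n$, then exploiting that a row-standardized $W_n$ is similar to a symmetric matrix (so its spectrum is real and contains $\tau=1$ via $W_n\mathbf{1}=\mathbf{1}$) to force $(1-\rho)^2=(1-\rho_0)^2$, with $\sup_{\rho}|\rho|<1$ killing the spurious root $\rho+\rho_0=2$. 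This buys a clean, moment-based identification of $\rho$ that works for any error law with mean zero and unit variance, and it also justifies the diagonalization in (\ref{matrix-decomposition}) rather than assuming it. Two caveats are worth recording. First, Assumption \ref{x-boundness} as stated (stationarity, ergodicity, $\mathbb{E}|x_s|^2<\infty$, full column rank) does not literally guarantee that the law of $x_s$ charges a set of positive Lebesgue measure, which your real-analyticity step needs; a continuously distributed regressor (or an equivalent support condition, as in Medeiros \emph{et al.}) must be assumed, and you correctly flag this as where the work happens. Second, Restrictions 1--2 rule out zero weights, zero input vectors, permutations and sign flips, but they do not exclude two neurons with $\boldsymbol{\gamma}_i=\boldsymbol{\gamma}_j$ for $i\neq j$, a reducible configuration in which only $\lambda_i+\lambda_j$ is identified; the pairwise-distinctness of the $\boldsymbol{\gamma}_i$ that your "non-collinear" hypothesis encodes is implicitly assumed by the paper (it is part of the minimality conditions in Medeiros \emph{et al.}) but is not a consequence of the stated restrictions. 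Neither caveat is a defect of your argument relative to the paper's; both are gaps the paper inherits from its citation.
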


\section{Asymptotic Results}
\subsection{Preliminary}
Denote the true parameter vector as $\boldsymbol{\theta}_0$ and the solution which maximizes the log-likelihood function (\ref{likelihood}) as $\hat{\boldsymbol{\theta}}_n$ . Hence, $\hat{\boldsymbol{\theta}}_n$ should satisfy
{\small\setlength{\abovedisplayskip}{3pt}
  \setlength{\belowdisplayskip}{\abovedisplayskip}
  \setlength{\abovedisplayshortskip}{0pt}
  \setlength{\belowdisplayshortskip}{3pt}
\begin{align*}
\hat{\boldsymbol{\theta}}_n &\equiv \arg \max_{\boldsymbol{\theta}\in\boldsymbol{\Theta}}\mathcal{L}_n (\boldsymbol{\theta})\text{ ,}\\
\mathcal{L}_n(\boldsymbol{\theta})&=\ln |I_n - \rho W_n|+\sum_{s=1}^{n}\ln f\big(y_s-x_s^{\prime}\beta-\rho\sum_{i=1}^{n}w_{si}y_i-\sum_{i=1}^{h}\lambda_i F(x_s^{\prime}\boldsymbol{\gamma}_i)\big) 
\end{align*}} %
Suppose we have a $n_1 \times n_2$ lattice where we consider asymptotic properties of $\hat{\boldsymbol{\theta}}_n$ when $n=n_1n_2\rightarrow\infty$. Write the location $s$ as the coordinate $(s_x, s_y)$ in the $[1,n_1]\times[1,n_2]$ lattice space. The distance between two locations $s,j$ is defined as $d(s,j) = \max(|s_x-j_x|, |s_y-j_y|)$. So if observations at $s,j$ locations are neighbors (by queen criterion), their coordinates should satisfy $(s_x-j_x)^2+(s_y-j_y)^2\leq2$ or $d(s,j) =1$.

In a spatial context, we should notice that the functional form of $y_s$ is not identical for all the locations due to values of the weights $w_{si}$. For example, in a lattice, units at edges, vertexes or in the interior have different density functions due to different neighborhood structures (Figure \ref{neigh-str}). For an interior point (Figure \ref{neigh-str}(c)), its neighborhood set $\mathcal{N}_s$ contains eight neighbors where $w_{sj}=1/8$ if $d(s,j)=1$ otherwise $w_{sj}=0$, for $j=1,2,\ldots, n$. Similarly, an edge point (Figure \ref{neigh-str}(b)) has five neighboring units with $w_{sj}=1/5$ and the weight of a vertex neighborhood is $1/3$ because a vertex unit has only three neighbors. This is known as an edge effect in spatial problems.
\begin{figure}[ht]
\centering
\begin{subfigure}[b]{0.3\textwidth}
\begin{tikzpicture}
\draw[step=1.4cm,color=black] (0.1,0.1) grid (4.1,4.1);
\node at (0.7,3.5) {$s$};
\node at (2.1,3.5) {$j$};
\node at (0.7,2.1) {$j$};
\node at (2.1,2.1) {$j$};
\end{tikzpicture}
\caption{}
\end{subfigure}
\begin{subfigure}[b]{0.3\textwidth}
\begin{tikzpicture}
\draw[step=1.4cm,color=black] (0.1,0.1) grid (4.1,4.1);
\node at (0.7,0.7) {$j$};
\node at (2.1,0.7) {$j$};
\node at (2.1,2.1) {$j$};
\node at (0.7,2.1) {$s$};
\node at (2.1,3.5) {$j$};
\node at (0.7,3.5) {$j$};
\end{tikzpicture}
\caption{}
\end{subfigure}
\begin{subfigure}[b]{0.3\textwidth}
\begin{tikzpicture}
\draw[step=1.4cm,color=black] (0.1,0.1) grid (4.1,4.1);
\node at (0.7,0.7) {$j$};
\node at (2.1,0.7) {$j$};
\node at (3.5,0.7) {$j$};
\node at (0.7,2.1) {$j$};
\node at (2.1,2.1) {$s$};
\node at (3.5,2.1) {$j$};
\node at (0.7,3.5) {$j$};
\node at (2.1,3.5) {$j$};
\node at (3.5,3.5) {$j$};
\end{tikzpicture}
\caption{}
\end{subfigure}
\caption{Vertex (a), Edge (b) and Interior Points (c) Neighborhood Structures: $s$ is the target location and $j$ represents the neighborhood of $s$}
\label{neigh-str}
\end{figure}
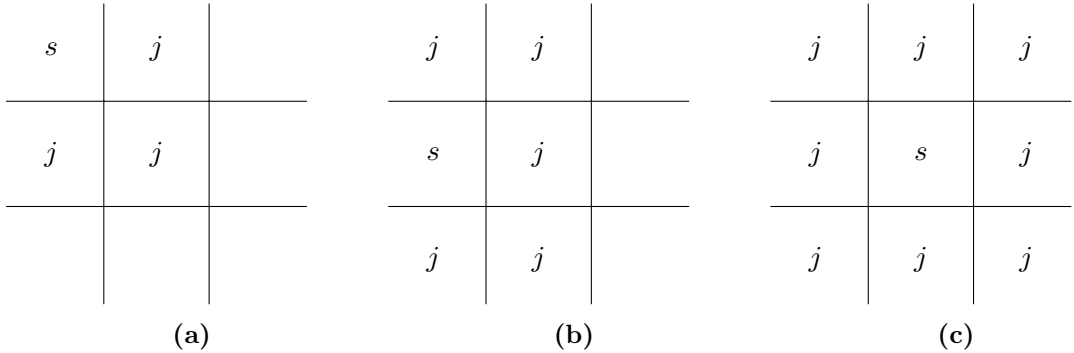
To deal with this, referring to Yao and Brockwell \cite{yao2006gaussian}, we construct an edge effect correction scheme based on the way that the sample size tends to infinity. 
In a space $[1,n_1]\times [1,n_2]$, we consider its interior area as $\mathcal{S}=\{(s_x,s_y): b_1\leq s_x \leq n_1-b_1, b_2\leq s_y \leq n_2-b_2\}$, where $b_1,b_2, n_1,n_2\rightarrow\infty$ satisfying that $b_1/n_1,b_2/n_2\rightarrow 0$ and other locations belong to the boundary areas $\mathcal{M}$.
Therefore the set $\mathcal{S}$ contains $n^{\ast}=(n_1-2b_1)(n_2-2b_2)$ interior locations while the set $\mathcal{M}$ contains $n-n^{\ast}$ boundary locations. Then $n^{\ast}/n\rightarrow1$ and $\mathcal{L}_n(\boldsymbol{\theta})$ can be split into a sum of two parts (interior $\mathcal{S}$ and boundary $\mathcal{M}$ parts):
{\small\setlength{\abovedisplayskip}{3pt}
  \setlength{\belowdisplayskip}{\abovedisplayskip}
  \setlength{\abovedisplayshortskip}{0pt}
  \setlength{\belowdisplayshortskip}{3pt}
\begin{align*}
\mathcal{L}_n(\boldsymbol{\theta})&=\sum_{s\in \mathcal{M}}l(\boldsymbol{\theta}|x_s, y_s)+\sum_{s\in \mathcal{S}}l(\boldsymbol{\theta}|x_s,y_s)\\
l(\boldsymbol{\theta}|x_s,y_s)&=n^{-1}\ln |I_n - \rho W_n|+\ln f\big(y_s-x_s^{\prime}\beta-\rho\sum_{i=1}^{n}w_{si}y_i-\sum_{i=1}^{h}\lambda_i F(x_s^{\prime}\boldsymbol{\gamma}_i)\big)
\end{align*}} %
\vspace{-20pt}

\noindent
Therefore, given that $\lim_{n_1,n_2\rightarrow\infty}\frac{|\mathcal{M}|}{n}=0$, $n^{-1}\sum_{s\in \mathcal{M}}l(\boldsymbol{\theta}|x_s, y_s)$ vanishes a.s. as $n$ tends to infinity for any $\boldsymbol{\theta}\in\boldsymbol{\Theta}$. Therefore,
{\small\setlength{\abovedisplayskip}{3pt}
  \setlength{\belowdisplayskip}{\abovedisplayskip}
  \setlength{\abovedisplayshortskip}{0pt}
  \setlength{\belowdisplayshortskip}{3pt}
\begin{align*}
\lim_{n_1,n_2\rightarrow\infty}n^{-1} \mathcal{L}_n(\boldsymbol{\theta})&=\lim_{n_1,n_2\rightarrow\infty} (n_1n_2)^{-1} \Big(\sum_{s\in \mathcal{M}}l(\boldsymbol{\theta}|x_s,y_s)+\sum_{s\in \mathcal{S}}l(\boldsymbol{\theta}|x_s, y_s)\Big)\\
&= \lim_{n_1,n_2\rightarrow\infty}(n_1n_2)^{-1}\sum_{s\in \mathcal{S}}l(\boldsymbol{\theta}|x_s,y_s)\quad a.s.
\end{align*}} %
\vspace{-20pt}

\noindent
In this equation, every location $s\in \mathcal{S}$ has eight neighboring units under the queen criterion with nonzero weights $w_{sj}=1/8$. Hence for an interior unit $s\in\mathcal{S}$, $\sum_{i=1}^{n}w_{si}y_i= \sum_{j=1}^n\frac{1}{8}y_jI_{\{d(s,j)=1\}}$. And the log likelihood function $\mathcal{L}_n(\boldsymbol{\theta})$ is approximately
{\small\setlength{\abovedisplayskip}{3pt}
  \setlength{\belowdisplayskip}{\abovedisplayskip}
  \setlength{\abovedisplayshortskip}{0pt}
  \setlength{\belowdisplayshortskip}{3pt}
\begin{equation}\label{mod-average}
n^{-1}\mathcal{L}_n(\boldsymbol{\theta})\approx n^{-1}\sum_{s\in \mathcal{S}}l(\boldsymbol{\theta}|x_s, y_s) \quad \text{for large $n$}
\end{equation}} %
So the maximum likelihood estimator $\hat{\boldsymbol{\theta}}_n$ approximately maximizes $n^{-1}\sum_{s\in \mathcal{S}}l(\boldsymbol{\theta}|x_s, y_s)$.
{\small\setlength{\abovedisplayskip}{3pt}
  \setlength{\belowdisplayskip}{\abovedisplayskip}
  \setlength{\abovedisplayshortskip}{0pt}
  \setlength{\belowdisplayshortskip}{3pt}
\begin{equation*}
\hat{\boldsymbol{\theta}}_n \approx \arg\max_{\boldsymbol{\theta}\in\boldsymbol{\Theta}}n^{-1}\sum_{s\in \mathcal{S}}l(\boldsymbol{\theta}|x_s, y_s)
\end{equation*}} %

\subsection{Consistency Results}

To establish the consistency of $\hat{\boldsymbol{\theta}}_n$, the heuristic insight is that because $\hat{\boldsymbol{\theta}}_n$ maximizes $n^{-1}\mathcal{L}_{n}(\boldsymbol{\theta})$, it approximately maximizes $n^{-1}\sum_{s\in \mathcal{S}}l(\boldsymbol{\theta}|x_s, y_s)$. By (\ref{mod-average}), $n^{-1}\mathcal{L}_{n}(\boldsymbol{\theta})$ can generally be shown tending to a real function $\mathcal{L}:\boldsymbol{\Theta}\rightarrow\mathbb{R}$ with maximizer $\boldsymbol{\theta}_0$ as $n\rightarrow\infty$ under mild conditions on the data generating process, then $\hat{\boldsymbol{\theta}}_n$ should tend to $\boldsymbol{\theta}_0$ almost surely. Before the formal proof of the consistency, we need the following assumptions on density function $f(\cdot)$ satisfied (similar assumptions are made in White \cite{white1996estimation}, Andrews, Davis and Breidt \cite{andrews2006maximum}, Lii and Rosenblatt \cite{lii1992approximate}).
\begin{ass}\label{continuous}
	For all $s \in \mathbb{R}$, $f(s) >0$ and $f(s)$ is twice continuously differentiable with respect to $s$.
\end{ass}
\begin{ass}\label{error-integral}The density should satisfy the following equations:\\
	$\bullet$ $\int sf^{\prime}(s)\,ds = sf(s)|^{\infty}_{-\infty}-\int f(s)\,ds = -1$\\
	$\bullet$ $\int f^{\prime\prime}(s)\,ds = f^{\prime}(s)|^{\infty}_{-\infty} = 0$\\
	$\bullet$ $\int s^2f^{\prime\prime}(s)\,ds = s^2f^{\prime}(s)|^{\infty}_{-\infty} - 2\int sf^{\prime}(s)\,ds = 2$
\end{ass}
\begin{ass}\label{error-dominance}
The density should follow the following dominance condition:\\
	$\left|\frac{f^{\prime}(s)}{f(s)}\right|$, $\left|\frac{f^{\prime}(s)}{f(s)}\right|^2$,  $\left|\frac{f^{\prime}(s)}{f(s)}\right|^4$, $\frac{f^{\prime\prime}(s)}{f(s)}$,  and $\frac{f^{\prime\prime}(s)f^{\prime2}(s)}{f^3(s)}$ are dominated by $a_1+a_2\left|s\right|^{c_1}$, where $a_1$, $a_2$, $c_1$ are non-negative constants and $\int_{-\infty}^{\infty}\left|s\right|^{c_1+2}f(s)\,ds < \infty$.
\end{ass}
Discussed in Breidt, Davis, Lii and Rosenblatt \cite{breid1991maximum} and Andrews, Davis and Breidt \cite[p.~1642-1645]{andrews2006maximum}), these assumptions on the density $f(\cdot)$ are satisfied in the t-distribution case when $\nu >2$ and the mixed Gaussian distribution. The assumption $\mathbb{E}\,|\ln f(s)|<\infty$ (see Assumption \ref{error-dist}) is also checked satisfied in the normal and t-distribution ($\nu >2$). The Laplace distribution does not strictly satisfy the Assumptions \ref{continuous}-\ref{error-dominance}, since it is not differentiable at 0 but it satisfies these boundedness conditions almost everywhere so we believe the consistency and asymptotic normality results remain valid for parameter estimates. This will be shown in the simulation section. 

\begin{lem}\label{uniq-max}
	Given Assumptions \ref{compact-space}-\ref{error-dominance},
{\small\setlength{\abovedisplayskip}{3pt}
	\setlength{\belowdisplayskip}{\abovedisplayskip}
	\setlength{\abovedisplayshortskip}{0pt}
	\setlength{\belowdisplayshortskip}{3pt}
		\begin{equation}\label{uniq-max-eqn}
		\boldsymbol{\theta}_0 =  \max_{\boldsymbol{\theta}\in\boldsymbol{\Theta}} \mathbb{E}\,\mathcal{L}_n(\boldsymbol{\theta}) \equiv \max_{\boldsymbol{\theta}\in\boldsymbol{\Theta}} \mathbb{E}\,\frac{\mathcal{L}_n(\boldsymbol{\theta})}{n}\quad \text{for all }n
		\end{equation}}%
\end{lem}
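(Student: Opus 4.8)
The plan is to recognize $\mathcal{L}_n(\boldsymbol{\theta})$ as the logarithm of a bona fide conditional density for $Y_n$ given $X_n$, and then to run the classical information-inequality (Kullback--Leibler) argument, closing it with the global identification of Lemma \ref{identify}. Fix $n$ and condition throughout on $X_n$ (and on the fixed matrix $W_n$). For any $\boldsymbol{\theta}=(\beta^{\prime},\rho,\lambda^{\prime},\boldsymbol{\gamma}_1^{\prime},\ldots,\boldsymbol{\gamma}_h^{\prime})^{\prime}\in\boldsymbol{\Theta}$, the map $Y_n\mapsto\boldsymbol{\varepsilon}_n(\boldsymbol{\theta})=(I_n-\rho W_n)Y_n-X_n\beta-\boldsymbol{F}(X_n\boldsymbol{\gamma}^{\prime})\lambda$ is affine with Jacobian $|I_n-\rho W_n|$, which is nonzero by Assumption \ref{rho-range} (diagonal dominance / Levy--Desplanques). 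Hence, by the change-of-variables formula and since $f$ is a density,
\[
p_{\boldsymbol{\theta}}(y\mid X_n)=|I_n-\rho W_n|\prod_{s=1}^{n}f\big(\varepsilon_s(\boldsymbol{\theta})\big),\qquad \varepsilon_s(\boldsymbol{\theta})=\big[(I_n-\rho W_n)y-X_n\beta-\boldsymbol{F}(X_n\boldsymbol{\gamma}^{\prime})\lambda\big]_s,
\]
is a probability density on $\mathbb{R}^n$, and $\mathcal{L}_n(\boldsymbol{\theta})=\ln p_{\boldsymbol{\theta}}(Y_n\mid X_n)$. At the true value the model equation (\ref{model-u}) gives $\varepsilon_s(\boldsymbol{\theta}_0)=\varepsilon_s$, i.i.d.\ with density $f$, so the conditional law of $Y_n$ given $X_n$ is exactly $p_{\boldsymbol{\theta}_0}(\cdot\mid X_n)$.

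Next I would apply Jensen's inequality to the likelihood ratio. Set $R_n=p_{\boldsymbol{\theta}}(Y_n\mid X_n)/p_{\boldsymbol{\theta}_0}(Y_n\mid X_n)\ge 0$; then $\mathbb{E}_{\boldsymbol{\theta}_0}[R_n\mid X_n]=\int p_{\boldsymbol{\theta}}(y\mid X_n)\,dy=1$, and since $\max(\ln x,0)\le x$ the positive part of $\ln R_n$ is integrable, so $\mathbb{E}_{\boldsymbol{\theta}_0}[\ln R_n\mid X_n]$ is well defined in $[-\infty,\infty)$ and, by concavity of $\ln$,
\[
\mathbb{E}_{\boldsymbol{\theta}_0}\big[\mathcal{L}_n(\boldsymbol{\theta})-\mathcal{L}_n(\boldsymbol{\theta}_0)\mid X_n\big]=\mathbb{E}_{\boldsymbol{\theta}_0}[\ln R_n\mid X_n]\le\ln\mathbb{E}_{\boldsymbol{\theta}_0}[R_n\mid X_n]=0 .
\]
Taking expectation over $X_n$ — the reference level $\mathbb{E}\,\mathcal{L}_n(\boldsymbol{\theta}_0)=\ln|I_n-\rho_0W_n|+\sum_s\mathbb{E}[\ln f(\varepsilon_s)]$ is finite by Assumption \ref{error-dist} (using $E|\ln f(\varepsilon_s)|<\infty$) — gives $\mathbb{E}\,\mathcal{L}_n(\boldsymbol{\theta})\le\mathbb{E}\,\mathcal{L}_n(\boldsymbol{\theta}_0)$ for every $\boldsymbol{\theta}\in\boldsymbol{\Theta}$, so $\boldsymbol{\theta}_0$ is a maximizer; dividing by $n$ does not affect this, which yields the stated equivalence.

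For uniqueness I would analyze the equality case. If $\mathbb{E}\,\mathcal{L}_n(\boldsymbol{\theta})=\mathbb{E}\,\mathcal{L}_n(\boldsymbol{\theta}_0)$, then since the conditional gap is a.s.\ nonpositive it must vanish a.s., so equality holds in Jensen's inequality for $P$-almost every $X_n$; by strict concavity of $\ln$ this forces $R_n$ to be $P_{\boldsymbol{\theta}_0}$-a.s.\ constant given $X_n$, and the constant equals $\mathbb{E}_{\boldsymbol{\theta}_0}[R_n\mid X_n]=1$, i.e.\ $p_{\boldsymbol{\theta}}(\cdot\mid X_n)=p_{\boldsymbol{\theta}_0}(\cdot\mid X_n)$ Lebesgue-a.e., hence everywhere by continuity and positivity of $f$ (Assumption \ref{continuous}). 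Thus $\boldsymbol{\theta}$ is observationally equivalent to $\boldsymbol{\theta}_0$, and Lemma \ref{identify} (global identification, valid under Assumptions \ref{compact-space}--\ref{identifiability}) forces $\boldsymbol{\theta}=\boldsymbol{\theta}_0$.

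The main obstacle is precisely this last step, and it is why Restrictions~1--2 and Lemma \ref{identify} are indispensable: without them the equality case cannot pin down $\boldsymbol{\theta}$, because of the permutation, sign-flip and reducibility symmetries of the neural-network term $\boldsymbol{F}(X_n\boldsymbol{\gamma}^{\prime})\lambda$. A secondary care point — reflected in the organization above — is well-definedness of the expectations: only $\mathbb{E}\,\mathcal{L}_n(\boldsymbol{\theta}_0)$ need be finite, while $\mathbb{E}\,\mathcal{L}_n(\boldsymbol{\theta})$ for $\boldsymbol{\theta}\ne\boldsymbol{\theta}_0$ is merely bounded above and may a priori equal $-\infty$.
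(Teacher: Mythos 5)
Your proposal is correct and follows essentially the same route as the paper: Jensen's inequality applied to the likelihood ratio gives $\mathbb{E}\,\mathcal{L}_n(\boldsymbol{\theta})\le\mathbb{E}\,\mathcal{L}_n(\boldsymbol{\theta}_0)$, and global identification (Lemma \ref{identify}) then yields uniqueness of the maximizer. Your write-up is somewhat more careful than the paper's — you make the conditional density and the Jacobian $|I_n-\rho W_n|$ explicit, condition on $X_n$, attend to integrability, and spell out the equality case of Jensen's inequality, all of which the paper leaves implicit.
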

 \begin{proof}
 	$\mathcal{L}_n(\boldsymbol{\theta})$ is the log of the likelihood function $L_n(\boldsymbol{\theta})$,
 {\small\setlength{\abovedisplayskip}{3pt}
 	\setlength{\belowdisplayskip}{\abovedisplayskip}
 	\setlength{\abovedisplayshortskip}{0pt}
 	\setlength{\belowdisplayshortskip}{3pt}
 \begin{align*}
    \mathcal{L}_n(\boldsymbol{\theta}) = \ln |I_n-\rho W_n| + \sum_{s=1}^{n}\ln f(\varepsilon_s(\boldsymbol{\theta}))\\
 	\mathbb{E}\,\mathcal{L}_n(\boldsymbol{\theta})- \mathbb{E}\,\mathcal{L}_n(\boldsymbol{\theta}_0)=\mathbb{E}\ln \frac{L_n(\boldsymbol{\theta})}{L_n(\boldsymbol{\theta}_0)}
 \end{align*} }%
Denote $Z_n=(Y_n,X_n)$. By Jensen's inequality,
 {\small\setlength{\abovedisplayskip}{3pt}
	\setlength{\belowdisplayskip}{\abovedisplayskip}
	\setlength{\abovedisplayshortskip}{0pt}
	\setlength{\belowdisplayshortskip}{3pt}
	\begin{align*}
\mathbb{E}\ln \frac{L_n(\boldsymbol{\theta})}{L_n(\boldsymbol{\theta}_0)} \leq \ln \mathbb{E}\frac{L_n(\boldsymbol{\theta})}{L_n(\boldsymbol{\theta}_0)} = \ln \int_{-\infty}^{\infty} \frac{L_n(\boldsymbol{\theta})}{L_n(\boldsymbol{\theta}_0)} L_n(\boldsymbol{\theta}_0)\,dZ_n= 0 
\end{align*}}%
So $\mathbb{E}\mathcal{L}_n(\boldsymbol{\theta})\leq \mathbb{E}\mathcal{L}_n(\boldsymbol{\theta}_0)$. By Lemma \ref{identify}, the PSAR model is globally identified and therefore, $\mathbb{E}\mathcal{L}_n(\boldsymbol{\theta})$ is uniquely maximized at $\boldsymbol{\theta}_0$ for all $n$. Since the parameter vector $\boldsymbol{\theta}_0$ does not depend on sample size $n$, it is equivalent to say that, $\boldsymbol{\theta}_0 = \max_{\boldsymbol{\theta}\in\boldsymbol{\Theta}} \frac{1}{n}\mathbb{E}\mathcal{L}_n(\boldsymbol{\theta})$. 
 \end{proof}
In the following, to simplify the expression, denote $g(x_s,\boldsymbol{\theta}) = x_s^{\prime}\beta+ \boldsymbol{F}(x_s^{\prime}\boldsymbol{\gamma})\lambda$. 
Define the Hadamard product $\circ$ as,
	{\small\setlength{\abovedisplayskip}{3pt}
	\setlength{\belowdisplayskip}{\abovedisplayskip}
	\setlength{\abovedisplayshortskip}{0pt}
	\setlength{\belowdisplayshortskip}{3pt}
	\begin{gather*}
		a\circ B =\begin{bmatrix}
		a_1b_{11} & a_1b_{21}&\cdots&a_1b_{n1}\\
		a_2b_{12} & a_2b_{22}&\cdots&a_2b_{n2}\\
		\vdots&\vdots&\ddots&\vdots\\
		a_nb_{1n} & a_nb_{2n}&\cdots&a_nb_{nn}
		\end{bmatrix},
		a\circ b_1 =\begin{bmatrix}
		a_1b_{11}\\
		a_2b_{12}\\
		\vdots\\
		a_nb_{1n}
		\end{bmatrix}
	\end{gather*}}%
where $a,b_1,\ldots,b_n\in\mathbb{R}^n$, a matrix $B=(b_1, \ldots,b_n)\in \mathbb{R}^{n \times n}$. And let
{\small\setlength{\abovedisplayskip}{3pt}
	\setlength{\belowdisplayskip}{\abovedisplayskip}
	\setlength{\abovedisplayshortskip}{0pt}
	\setlength{\belowdisplayshortskip}{3pt}
	\begin{align*}
	k_0 &= \int \left|\frac{f^{\prime}(s)}{f(s)}\right|f(s)\,ds\\
	k_1 &= \int \left|\frac{f^{\prime^2}(s)}{f^2(s)}-\frac{f^{\prime\prime}(s)}{f(s)}\right|f(s)\,ds\\
	k_2 & =\int \left|\frac{sf^{\prime^2}(s)}{f(s)}-\frac{sf^{\prime\prime}(s)}{f(s)}\right|f(s)\,ds\\
	k_3 &= \int \left|\frac{s^2f^{\prime^2}(s)}{f(s)}-\frac{s^2f^{\prime\prime}(s)}{f(s)}\right|f(s)\,ds
\end{align*}} %
\begin{lem}\label{uni-cvg}
Given Assumptions \ref{compact-space}-\ref{error-dominance},
{\small\setlength{\abovedisplayskip}{3pt}
  \setlength{\belowdisplayskip}{\abovedisplayskip}
  \setlength{\abovedisplayshortskip}{0pt}
  \setlength{\belowdisplayshortskip}{3pt}
\begin{gather}\label{uni-cvg-eqn}
\sup\limits_{\boldsymbol{\theta}\in\boldsymbol{\Theta}}\left|\frac{1}{n}\sum_{s=1}^{n}\ln f(\varepsilon_s(\boldsymbol{\theta}))-\mathbb{E} \frac{1}{n}\sum_{s=1}^{n}\ln f(\varepsilon_s(\boldsymbol{\theta}))\right|\xrightarrow{p} 0\text{ as }n\rightarrow\infty
\end{gather}} %
\end{lem}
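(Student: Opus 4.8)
The plan is to prove (\ref{uni-cvg-eqn}) along the standard route for a uniform law of large numbers: reduce to an interior sum, establish a pointwise law of large numbers at each fixed $\boldsymbol{\theta}$, and then promote pointwise to uniform convergence by a stochastic‑equicontinuity argument over the compact set $\boldsymbol{\Theta}$ (Assumption \ref{compact-space}). First I would split, as in the decomposition introduced above, $\frac1n\sum_{s=1}^n\ln f(\varepsilon_s(\boldsymbol{\theta}))=\frac1n\sum_{s\in\mathcal{M}}+\frac1n\sum_{s\in\mathcal{S}}$ and likewise for the mean. Because $|\mathcal{M}|/n\to0$ and (from the domination established below) $\sup_s\mathbb{E}\sup_{\boldsymbol{\theta}\in\boldsymbol{\Theta}}|\ln f(\varepsilon_s(\boldsymbol{\theta}))|<\infty$, the boundary contribution satisfies $\mathbb{E}\big[\sup_{\boldsymbol{\theta}}\big|\frac1n\sum_{s\in\mathcal{M}}\ln f(\varepsilon_s(\boldsymbol{\theta}))\big|\big]\le C|\mathcal{M}|/n\to0$, hence is $o_p(1)$; it therefore suffices to prove (\ref{uni-cvg-eqn}) with $\sum_{s=1}^n$ replaced by $\sum_{s\in\mathcal{S}}$, where every location shares the interior neighbourhood structure $w_{sj}=\tfrac18 I_{\{d(s,j)=1\}}$.

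For the pointwise statement I would substitute the reduced form $Y_n=(I_n-\rho_0W_n)^{-1}\big(X_n\beta_0+\boldsymbol{F}(X_n\boldsymbol{\gamma}_0^{\prime})\lambda_0+\boldsymbol{\varepsilon}_n\big)$, so that for $s\in\mathcal{S}$ the residual $\varepsilon_s(\boldsymbol{\theta})$ is an explicit function of $x_s$, of the neighbour average $(W_nY_n)_s$, and of $\boldsymbol{\theta}$. Since $W_n$ (queen contiguity, row‑standardized) has bandwidth one and $\sup_{\boldsymbol{\Theta}}|\rho|<1$ (Assumption \ref{rho-range}), the Neumann expansion $(I_n-\rho_0W_n)^{-1}=\sum_{k\ge0}\rho_0^kW_n^k$ gives $|((I_n-\rho_0W_n)^{-1})_{sj}|\le|\rho_0|^{d(s,j)}/(1-|\rho_0|)$, i.e.\ geometric decay in the lattice distance; together with the edge‑effect correction scheme introduced above this makes $\{(x_s,(W_nY_n)_s)\}_{s\in\mathcal{S}}$, up to an asymptotically negligible error, a stationary ergodic random field — a function with geometrically decaying coefficients of the stationary ergodic field $\{(x_j,\varepsilon_j)\}$ supplied by Assumptions \ref{x-boundness}--\ref{error-dist}. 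For each fixed $\boldsymbol{\theta}$, $\ln f(\varepsilon_s(\boldsymbol{\theta}))$ is a measurable function of this field, so the pointwise ergodic theorem for $\mathbb{Z}^2$ actions (the interior rectangles $\mathcal{S}$ form a F{\o}lner sequence) yields $\frac1n\sum_{s\in\mathcal{S}}\ln f(\varepsilon_s(\boldsymbol{\theta}))\to\mathbb{E}\ln f(\varepsilon_0(\boldsymbol{\theta}))$ a.s., provided $\mathbb{E}|\ln f(\varepsilon_0(\boldsymbol{\theta}))|<\infty$. I would verify the latter by writing $\ln f(\varepsilon_s(\boldsymbol{\theta}))=\ln f(\varepsilon_s)+[\ln f(\varepsilon_s(\boldsymbol{\theta}))-\ln f(\varepsilon_s)]$, bounding the bracket via the mean value theorem and Assumption \ref{error-dominance} by $|\varepsilon_s(\boldsymbol{\theta})-\varepsilon_s|\,(a_1+a_2(|\varepsilon_s(\boldsymbol{\theta})|\vee|\varepsilon_s|)^{c_1})$, and using $\sup_{\boldsymbol{\theta}}|\varepsilon_s(\boldsymbol{\theta})-\varepsilon_s|\le C(1+|x_s|+|(W_nY_n)_s|)$ on $\boldsymbol{\Theta}$ together with $\mathbb{E}|x_s|^2<\infty$, the reduced‑form moment bound $\sup_s\mathbb{E}|(W_nY_n)_s|^2<\infty$, $\mathbb{E}|\varepsilon_s|^{c_1+2}<\infty$ (Assumption \ref{error-dominance}), and $\mathbb{E}|\ln f(\varepsilon_s)|<\infty$ (Assumption \ref{error-dist}).

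To pass from pointwise to uniform I would exploit the same differentiability: $\boldsymbol{\theta}\mapsto\ln f(\varepsilon_s(\boldsymbol{\theta}))$ is locally Lipschitz with Lipschitz factor $H_s=\sup_{\boldsymbol{\theta}}\big|f^{\prime}(\varepsilon_s(\boldsymbol{\theta}))/f(\varepsilon_s(\boldsymbol{\theta}))\big|\cdot\sup_{\boldsymbol{\theta}}\|\partial_{\boldsymbol{\theta}}\varepsilon_s(\boldsymbol{\theta})\|$, and since $F,F^{\prime}$ are bounded and $\boldsymbol{\Theta}$ is compact, $\sup_{\boldsymbol{\theta}}\|\partial_{\boldsymbol{\theta}}\varepsilon_s(\boldsymbol{\theta})\|\le C(1+|x_s|+|(W_nY_n)_s|)$, so $H_s$ is dominated by a polynomial in $|x_s|,|(W_nY_n)_s|,|\varepsilon_s|$ with $\sup_s\mathbb{E}H_s<\infty$ by the same moments and H\"older. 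Covering $\boldsymbol{\Theta}$ by finitely many balls $B(\boldsymbol{\theta}_k,\delta)$, the oscillation of $\frac1n\sum_{s\in\mathcal{S}}\ln f(\varepsilon_s(\boldsymbol{\theta}))$ over $B(\boldsymbol{\theta}_k,\delta)$ is at most $\delta\cdot\frac1n\sum_{s\in\mathcal{S}}H_s$, whose expectation is $\le C\delta$, while at the centres the pointwise result applies; sending $n\to\infty$ and then $\delta\to0$ gives (\ref{uni-cvg-eqn}). Equivalently, with the ingredients just assembled — compact $\boldsymbol{\Theta}$, a.s.\ continuity of $\boldsymbol{\theta}\mapsto\ln f(\varepsilon_s(\boldsymbol{\theta}))$, an integrable dominating function, and a pointwise law of large numbers — one may simply invoke a generic uniform‑convergence theorem of the kind used in Andrews, Davis and Breidt \cite{andrews2006maximum} and White \cite{white1996estimation}.

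The hard part will be the pointwise step: in the spatial setting the summands are neither independent nor, for finite $n$, exactly stationary, so one must combine the geometric decay of $(I_n-\rho_0W_n)^{-1}$ with the edge‑effect scheme to reduce to a genuinely stationary ergodic field and to control the finite‑sample‑versus‑infinite‑lattice discrepancy of $Y_n$ on $\mathcal{S}$; and one must check that the dominating functions built from Assumption \ref{error-dominance} are integrable against the moments actually available for $x_s$ and the reduced form $Y_n$. For the leading examples this is immediate — for the rescaled $t_\nu$ density $|f^{\prime}/f|$ is bounded, so $c_1=0$, and for the Gaussian density $c_1=1$ with $\mathbb{E}|x_s|^2<\infty$ just enough; for more general densities one would correspondingly strengthen the moment condition in Assumption \ref{x-boundness}.
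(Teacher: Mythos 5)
Your proposal is correct and follows essentially the same route as the paper: reduce to the interior sum $\mathcal{S}$ via the edge-effect scheme, establish pointwise convergence by checking $\mathbb{E}\left|\ln f(\varepsilon_s(\boldsymbol{\theta}))\right|<\infty$ through a mean-value expansion, the reduced form for $W_nY_n$, and the domination in Assumption \ref{error-dominance}, then upgrade to uniformity via a Lipschitz/equicontinuity argument over the compact $\boldsymbol{\Theta}$. Your Neumann-expansion justification for why the interior field is stationary and ergodic is a welcome elaboration of a step the paper only asserts, but it does not change the structure of the argument.
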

\begin{proof}
As illustrated in equation (\ref{mod-average}), in a lattice with size $n_1\times n_2$,
{\small\setlength{\abovedisplayskip}{3pt}
  \setlength{\belowdisplayskip}{\abovedisplayskip}
  \setlength{\abovedisplayshortskip}{0pt}
  \setlength{\belowdisplayshortskip}{3pt}
\begin{gather*}
\sup\limits_{\boldsymbol{\theta}\in\boldsymbol{\Theta}}\left|\frac{1}{n}\sum_{s=1}^{n}\ln f(\varepsilon_s(\boldsymbol{\theta}))-\frac{1}{n}\sum_{s\in\mathcal{S}}\ln f(\varepsilon_s(\boldsymbol{\theta}))\right|\xrightarrow{a.s.}0\text{ as }n_1,n_2\rightarrow\infty
\end{gather*}} %
Therefore, to prove (\ref{uni-cvg-eqn}) is equivalent to show that
{\small\setlength{\abovedisplayskip}{3pt}
  \setlength{\belowdisplayskip}{\abovedisplayskip}
  \setlength{\abovedisplayshortskip}{0pt}
  \setlength{\belowdisplayshortskip}{3pt}
\begin{gather}\label{uni-cvg-eqn-2}
\sup\limits_{\boldsymbol{\theta}\in\boldsymbol{\Theta}}\left|\frac{1}{n}\sum_{s\in\mathcal{S}}\ln f(\varepsilon_s(\boldsymbol{\theta}))-\mathbb{E} \frac{1}{n}\sum_{s\in\mathcal{S}}\ln f(\varepsilon_s(\boldsymbol{\theta}))\right|\xrightarrow{p} 0\text{ as }n\rightarrow\infty
\end{gather}} %
where $\mathcal{S}$ denotes the interior units mentioned before. Since the interior units have the same neighboring structure, the space process for them is stationary when $n_1,n_2$ go to infinity.
We first show $\left|\frac{1}{n}\sum_{s\in\mathcal{S}}\ln f(\varepsilon_s(\boldsymbol{\theta}))-\mathbb{E} \frac{1}{n}\sum_{s\in\mathcal{S}}\ln f(\varepsilon_s(\boldsymbol{\theta}))\right|\xrightarrow{p} 0$ for fixed $\boldsymbol{\theta}$ (Similar proof in \citet[Theorem 3.1,4.1]{lee2004asymptotic}). 

To prove this, we want to show that $\mathbb{E}|\ln f(\varepsilon_{s}(\boldsymbol{\theta}))| < \infty, s\in\mathcal{S}$. Expanding $\ln f(\varepsilon_{s}(\boldsymbol{\theta}))$ around $\boldsymbol{\theta}_0$ with respect to $\boldsymbol{\theta}$,
{\small\setlength{\abovedisplayskip}{3pt}
	\setlength{\belowdisplayskip}{\abovedisplayskip}
	\setlength{\abovedisplayshortskip}{0pt}
	\setlength{\belowdisplayshortskip}{3pt}
	\begin{align*}
	\ln f(\varepsilon_{s}(\boldsymbol{\theta})) &= \ln f(\varepsilon_{s}(\boldsymbol{\theta}_0))+ \left|\frac{f^{\prime}(\varepsilon_{s}(\tilde{\boldsymbol{\theta}}_n))}{f(\varepsilon_{s}(\tilde{\boldsymbol{\theta}}_n))}\frac{\partial \varepsilon_{s}(\tilde{\boldsymbol{\theta}}_n)}{\partial \boldsymbol{\theta}^{\prime}}\right|(\boldsymbol{\theta}-\boldsymbol{\theta}_0)\\
	\mathbb{E}|\ln f(\varepsilon_{s}(\boldsymbol{\theta}))| &\leq \mathbb{E}|\ln f(\varepsilon_{s}(\boldsymbol{\theta}_0))|+\mathbb{E}\left|\frac{f^{\prime}(\varepsilon_{s}(\tilde{\boldsymbol{\theta}}_n))}{f(\varepsilon_{s}(\tilde{\boldsymbol{\theta}}_n))}\frac{\partial \varepsilon_{s}(\tilde{\boldsymbol{\theta}}_n)}{\partial \boldsymbol{\theta}^{\prime}}\right||\boldsymbol{\theta}-\boldsymbol{\theta}_0|
	\end{align*}}%
where $\tilde{\boldsymbol{\theta}}_n$ is between $\boldsymbol{\theta}$ and $\boldsymbol{\theta}_0$. Under the true parameter values $\varepsilon_s(\boldsymbol{\theta}_0)$ (denoted as $\varepsilon_s$ or $\boldsymbol{\varepsilon}_n$ as its vector form in the following) is independent and identically distributed. From Assumption \ref{error-dist}, $\mathbb{E}\left|\ln f(\varepsilon_s)\right|<\infty$.
For $\mathbb{E}\left|\frac{f^{\prime}(\varepsilon_{s}(\tilde{\boldsymbol{\theta}}))}{f(\varepsilon_{s}(\tilde{\boldsymbol{\theta}}))}\frac{\partial \varepsilon_{s}(\tilde{\boldsymbol{\theta}})}{\partial \boldsymbol{\theta}^{\prime}}\right|$, $\left|\frac{\partial \varepsilon_{s}(\tilde{\boldsymbol{\theta}})}{\partial \boldsymbol{\theta}}\right|$ can be expressed as
{\small\setlength{\abovedisplayskip}{3pt}
	\setlength{\belowdisplayskip}{\abovedisplayskip}
	\setlength{\abovedisplayshortskip}{0pt}
	\setlength{\belowdisplayshortskip}{3pt}
	\begin{align}\label{error-theta}\nonumber
	\left|\frac{\partial \varepsilon_{s}(\tilde{\boldsymbol{\theta}})}{\partial \beta}\right| &= \left|x_{s}\right|\\
	\left|\frac{\partial \varepsilon_{s}(\tilde{\boldsymbol{\theta}})}{\partial \lambda}\right| &= \left|\boldsymbol{F}(x_{s}^{\prime}\tilde{\boldsymbol{\gamma}})^{\prime}\right| \leq \boldsymbol{1}_{h}\\\nonumber
	\left|\frac{\partial \varepsilon_{s}(\tilde{\boldsymbol{\theta}})}{\partial \boldsymbol{\gamma}_i}\right| &= \left|\tilde{\lambda}_i\frac{\partial F(x_{s}^{\prime}\tilde{\boldsymbol{\gamma}}_i)}{\partial x_{s}^{\prime}\boldsymbol{\gamma}_i}x_{s}\right|=\left|\tilde{\lambda}_i F(x_{s}^{\prime}\tilde{\boldsymbol{\gamma}}_i)(1-F(x_{s}^{\prime}\tilde{\boldsymbol{\gamma}}_i))x_{s}\right|\\\nonumber
	&\leq \max_{\lambda_{i}\in\boldsymbol{\Theta}}|\lambda_{i}|\frac{|x_s|}{4}\\\nonumber
	\left|\frac{\partial \varepsilon_{s}(\tilde{\boldsymbol{\theta}})}{\partial \rho}\right| &= \left|\sum_{i=1}^{n}w_{si}y_{i}\right| =\left|[M_n(\boldsymbol{g}(X_n,\tilde{\boldsymbol{\theta}}_n)+\boldsymbol{\varepsilon}_n(\tilde{\boldsymbol{\theta}}_n))]_s\right|=\left|\sum_{k=1}^{n}m_{sk}(g(x_k,\tilde{\boldsymbol{\theta}}_n)+\varepsilon_k(\tilde{\boldsymbol{\theta}}_n))\right|
	\end{align}}%
where $m_{ij}$ is $(i,j)$ element of $M_n = W_n(I_n-\rho W_n)^{-1}$. $M_n$ is bounded uniformly in column and row sums (see Assumption \ref{ub-weight}) so $\sum_{j=1}^{n}m_{ij}, \sum_{i=1}^{n}m_{ij}$ are bounded by a constant $b$ for $i,j=1,\ldots,n$.
The logistic function $F(x)$ is bounded by $1$ and its derivative $F^{\prime}(x)$ is also bounded by $1$. $|\frac{f^{\prime}(s)}{f(s)}|$ is dominated by $a_1+a_2\left|s\right|^{c_1}$, $\int_{-\infty}^{\infty}|s|^{c_1}f(s)<\infty$ which implies that $\mathbb{E}\left|\frac{f^{\prime}(\varepsilon_{s}(\tilde{\boldsymbol{\theta}}_n))}{f(\varepsilon_{s}(\tilde{\boldsymbol{\theta}}_n))}\right|<\infty$. With Cauchy–Schwarz inequality \cite{steele2004cauchy} and the finite second moment of $X_n$, we can have,
{\small\setlength{\abovedisplayskip}{3pt}
	\setlength{\belowdisplayskip}{\abovedisplayskip}
	\setlength{\abovedisplayshortskip}{0pt}
	\setlength{\belowdisplayshortskip}{3pt}
	\begin{align}\label{log-derivative-bound}\nonumber
	\mathbb{E}\left|\frac{f^{\prime}(\varepsilon_{s}(\tilde{\boldsymbol{\theta}}_n))}{f(\varepsilon_{s}(\tilde{\boldsymbol{\theta}}_n))}\frac{\partial \varepsilon_{s}(\tilde{\boldsymbol{\theta}}_n)}{\partial \beta}\right| &= \mathbb{E}\left|\frac{f^{\prime}(\varepsilon_{s}(\tilde{\boldsymbol{\theta}}_n))}{f(\varepsilon_{s}(\tilde{\boldsymbol{\theta}}_n))}x_{s}\right|<\left(\mathbb{E}\left|\frac{f^{\prime}(\varepsilon_{s}(\tilde{\boldsymbol{\theta}}_n))}{f(\varepsilon_{s}(\tilde{\boldsymbol{\theta}}_n))}\right|^2\mathbb{E}\left|x_{s}\right|^2\right)^{1/2}< \infty\\
	\mathbb{E}\left|\frac{f^{\prime}(\varepsilon_{s}(\tilde{\boldsymbol{\theta}}_n))}{f(\varepsilon_{s}(\tilde{\boldsymbol{\theta}}_n))}\frac{\partial \varepsilon_{s}(\tilde{\boldsymbol{\theta}}_n)}{\partial \lambda}\right| &= \mathbb{E}\left|\frac{f^{\prime}(\varepsilon_{s}(\tilde{\boldsymbol{\theta}}_n))}{f(\varepsilon_{s}(\tilde{\boldsymbol{\theta}}_n))}\boldsymbol{F}(x_{s}^{\prime}\tilde{\boldsymbol{\gamma}})^{\prime}\right|\leq \mathbb{E}\left|\frac{f^{\prime}(\varepsilon_{s}(\tilde{\boldsymbol{\theta}}_n))}{f(\varepsilon_{s}(\tilde{\boldsymbol{\theta}}_n))}\boldsymbol{1}_h\right|<\infty\\\nonumber
	\mathbb{E}\left|\frac{f^{\prime}(\varepsilon_{s}(\tilde{\boldsymbol{\theta}}_n))}{f(\varepsilon_{s}(\tilde{\boldsymbol{\theta}}_n))}\frac{\partial \varepsilon_{s}(\tilde{\boldsymbol{\theta}}_n)}{\partial \boldsymbol{\gamma}_i}\right| &\leq \mathbb{E}\left|\frac{f^{\prime}(\varepsilon_{s}(\tilde{\boldsymbol{\theta}}_n))}{f(\varepsilon_{s}(\tilde{\boldsymbol{\theta}}_n))}\tilde{\lambda}_ix_{s}\right|< \infty\\\nonumber
	\mathbb{E}\left|\frac{f^{\prime}(\varepsilon_{s}(\tilde{\boldsymbol{\theta}}_n))}{f(\varepsilon_{s}(\tilde{\boldsymbol{\theta}}_n))}\frac{\partial \varepsilon_{s}(\tilde{\boldsymbol{\theta}}_n)}{\partial \rho}\right| &= \mathbb{E}\left|\frac{f^{\prime}(\varepsilon_{s}(\tilde{\boldsymbol{\theta}}_n))}{f(\varepsilon_{s}(\tilde{\boldsymbol{\theta}}_n))}\sum_{k=1}^{n}m_{sk}(g(x_k,\tilde{\boldsymbol{\theta}}_n)+\varepsilon_k(\tilde{\boldsymbol{\theta}}_n))\right|\\\nonumber
	&< b\cdot \mathbb{E}\left|\frac{\varepsilon_{s}(\tilde{\boldsymbol{\theta}}_n)f^{\prime}(\varepsilon_{s}(\tilde{\boldsymbol{\theta}}_n))}{f(\varepsilon_{s}(\tilde{\boldsymbol{\theta}}_n))}\right|+k_0\mathbb{E}\left|\sum_{k=1}^{n}m_{sk}g(x_k,\tilde{\boldsymbol{\theta}}_n)\right|
	\end{align}}%
Since $\mathbb{E}\,|x_s|^2<\infty$ for all $s$, $\mathbb{E}|g(x_k,\tilde{\boldsymbol{\theta}}_n)|$ is finite for $\tilde{\boldsymbol{\theta}}_n\in\boldsymbol{\Theta}$. By assumption \ref{error-dominance}, $\left|\frac{sf^{\prime}(s)}{f(s)}\right| <\infty$, so $	\mathbb{E}\left|\frac{f^{\prime}(\varepsilon_{s}(\tilde{\boldsymbol{\theta}}_n))}{f(\varepsilon_{s}(\tilde{\boldsymbol{\theta}}_n))}\frac{\partial \varepsilon_{s}(\tilde{\boldsymbol{\theta}}_n)}{\partial \rho}\right|<\infty$.
Therefore $\mathbb{E}|\ln f(\varepsilon_s(\boldsymbol{\theta}_0))|<\infty$ and we can conclude that
$\mathbb{E}|\ln f(\varepsilon_s(\boldsymbol{\theta}))| < \infty$, so that, by ergodic theorem,
{\small\setlength{\abovedisplayskip}{3pt}
	\setlength{\belowdisplayskip}{\abovedisplayskip}
	\setlength{\abovedisplayshortskip}{0pt}
	\setlength{\belowdisplayshortskip}{3pt}
	\begin{gather*}
		\left|\frac{1}{n}\sum_{s\in\mathcal{S}}\ln f(\varepsilon_s(\boldsymbol{\theta}))-\mathbb{E}\frac{1}{n}\sum_{s\in\mathcal{S}}\ln f(\varepsilon_s(\boldsymbol{\theta}))\right | \rightarrow_{p} 0,\quad n\rightarrow\infty
    \end{gather*}}%

To complete the proof of uniform convergence,  we also need to show the equicontinuity of $\frac{1}{n}\sum_{s\in\mathcal{S}}\ln f(\varepsilon_s(\boldsymbol{\theta}))$, i.e., for all $\boldsymbol{\theta}_1, \boldsymbol{\theta}_2\in \boldsymbol{\Theta}$,
{\small\setlength{\abovedisplayskip}{3pt}
	\setlength{\belowdisplayskip}{\abovedisplayskip}
	\setlength{\abovedisplayshortskip}{0pt}
	\setlength{\belowdisplayshortskip}{3pt}
	\begin{align}\label{equi}
	\frac{1}{n}\left|\sum_{s\in\mathcal{S}}\ln f(\varepsilon_s(\boldsymbol{\theta}_1))-\sum_{s\in\mathcal{S}}\ln f(\varepsilon_s(\boldsymbol{\theta}_2))\right| \leq ||\boldsymbol{\theta}_1-\boldsymbol{\theta}_2||O_p(1)
   \end{align}}%
Applying the mean value theorem to the left side in (\ref{equi}):
{\small\setlength{\abovedisplayskip}{3pt}
	\setlength{\belowdisplayskip}{\abovedisplayskip}
	\setlength{\abovedisplayshortskip}{0pt}
	\setlength{\belowdisplayshortskip}{3pt}
	\begin{align*}
	\frac{1}{n}\left|\sum_{s\in\mathcal{S}}\ln f(\varepsilon_s(\boldsymbol{\theta}_1))-\sum_{s\in\mathcal{S}}\ln f(\varepsilon_s(\boldsymbol{\theta}_2))\right| &\leq \frac{1}{n}\left|\sum_{s\in\mathcal{S}}\frac{\partial \ln f(\varepsilon_s(\tilde{\boldsymbol{\theta}}_n))}{\partial \boldsymbol{\theta}^{\prime}}\right|||\boldsymbol{\theta}_1-\boldsymbol{\theta}_2||\\
	&=\frac{1}{n}\left|\sum_{s\in\mathcal{S}}\frac{ f^{\prime}(\varepsilon_s(\tilde{\boldsymbol{\theta}}_n))}{ f(\varepsilon_s(\tilde{\boldsymbol{\theta}}_n))}\frac{\partial \varepsilon_s(\tilde{\boldsymbol{\theta}}_n)}{\partial \boldsymbol{\theta}^{\prime}}\right| ||\boldsymbol{\theta}_1-\boldsymbol{\theta}_2||
	\end{align*}}%
where $\tilde{\boldsymbol{\theta}}_n$ is some value between $\boldsymbol{\theta}_1$ and $\boldsymbol{\theta}_2$. By the ergodic theorem, $\frac{1}{n}\left|\sum_{s\in\mathcal{S}}\frac{ f^{\prime}(\varepsilon_s(\tilde{\boldsymbol{\theta}}_n))}{ f(\varepsilon_s(\tilde{\boldsymbol{\theta}}_n))}\frac{\partial \varepsilon_s(\tilde{\boldsymbol{\theta}}_n)}{\partial \boldsymbol{\theta}}\right|\xrightarrow{a.s.} \mathbb{E}\left|\frac{f^{\prime}(\varepsilon_s(\tilde{\boldsymbol{\theta}}))}{f(\varepsilon_s(\tilde{\boldsymbol{\theta}}_n))}\frac{\partial\varepsilon_s(\tilde{\boldsymbol{\theta}}_n)}{\partial \boldsymbol{\theta}}\right|
$. Since $\boldsymbol{\theta}$ is in a compact set $\boldsymbol{\Theta}$, we show in (\ref{res-expansion}) that, for all $s$, $\varepsilon_s(\boldsymbol{\theta})$ is bounded by some function of $Y_n,X_n$ not depending on $\boldsymbol{\theta}$.
{\small\setlength{\abovedisplayskip}{3pt}
	\setlength{\belowdisplayskip}{\abovedisplayskip}
	\setlength{\abovedisplayshortskip}{0pt}
	\setlength{\belowdisplayshortskip}{3pt}
	\begin{align}\label{res-expansion}\nonumber
    |\boldsymbol{\varepsilon}_n(\boldsymbol{\theta})|&=\left|Y_n-\rho W_nY_n-X_n\beta-\boldsymbol{F}(X_n\boldsymbol{\gamma}^{\prime})\lambda\right|\\
    &\leq \left|(I_n-\rho W_n)Y_n\right|+\left|X_n\beta\right| +\left|\boldsymbol{F}(X_n\boldsymbol{\gamma}^{\prime})\lambda\right|\\\nonumber
    &\leq (I_n+\max_{\rho\in\boldsymbol{\Theta}}|\rho|W_n)|Y_n|+|X_n|\max_{\beta\in\boldsymbol{\Theta}}|\beta|+\max_{\lambda\in\boldsymbol{\Theta}}||\lambda||\boldsymbol{1}_n
   \end{align}}%
Similarly, referring to (\ref{error-theta}), it is easy to show that $\left|\frac{\partial\varepsilon_s(\boldsymbol{\theta})}{\partial \boldsymbol{\theta}}\right|$ is also bounded by some function about $Y_n$ and $X_n$.
Therefore, due to the dominance of $\left|\frac{f^{\prime}(s)}{f(s)}\right|$ (see Assumption \ref{error-dominance}) and stationarity of $X_n,Y_n$, for $\tilde{\boldsymbol{\theta}}_n$ between $\boldsymbol{\theta}_1$ and $\boldsymbol{\theta}_2$, there exists a constant $M$ such that
{\small\setlength{\abovedisplayskip}{3pt}
	\setlength{\belowdisplayskip}{\abovedisplayskip}
	\setlength{\abovedisplayshortskip}{0pt}
	\setlength{\belowdisplayshortskip}{3pt}
	\begin{align}\label{loglikelihood-theta-expectation}
	\frac{1}{n}\left|\sum_{s\in\mathcal{S}}\frac{ f^{\prime}(\varepsilon_s(\tilde{\boldsymbol{\theta}}_n))}{ f(\varepsilon_s(\tilde{\boldsymbol{\theta}}_n))}\frac{\partial \varepsilon_s(\tilde{\boldsymbol{\theta}}_n)}{\partial \boldsymbol{\theta}^{\prime}}\right|\leq M\quad \text{for }n\rightarrow\infty
\end{align}}%
Hence, for $\boldsymbol{\theta}_1,\boldsymbol{\theta}_2 \in \boldsymbol{\Theta}$
{\small\setlength{\abovedisplayskip}{3pt}
	\setlength{\belowdisplayskip}{\abovedisplayskip}
	\setlength{\abovedisplayshortskip}{0pt}
	\setlength{\belowdisplayshortskip}{3pt}
	\begin{align*}
	\frac{1}{n}\left|\sum_{s\in\mathcal{S}}\ln f(\varepsilon_s(\boldsymbol{\theta}_1))-\sum_{s\in\mathcal{S}}\ln f(\varepsilon_s(\boldsymbol{\theta}_2))\right| = ||\boldsymbol{\theta}_1-\boldsymbol{\theta}_2|| O_p(1)
	\end{align*}}%
So $\frac{1}{n}\left|\sum_{s\in\mathcal{S}}\ln f(\varepsilon_s(\boldsymbol{\theta}))\right|$ is equicontinuous for $\boldsymbol{\theta}\in\boldsymbol{\Theta}$.
With the pointwise convergence and equicontinuity, we can conclude the uniform convergence in (\ref{uni-cvg-eqn-2}) and furthermore (\ref{uni-cvg-eqn}) follows. 
\end{proof}
We now give a formal statement of consistency of the maximum likelihood estimator $\hat{\boldsymbol{\theta}}_n$.
\begin{thm}
Given Assumptions \ref{compact-space}-\ref{error-dominance}, $\hat{\boldsymbol{\theta}}_n-\boldsymbol{\theta}_0\xrightarrow{p}0$ as $n\rightarrow\infty$.
\end{thm}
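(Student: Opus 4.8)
The plan is to run the classical Wald-type argument for consistency of an extremum estimator, with the two preceding lemmas serving as the main inputs. First I would pin down the limiting objective. By the edge-effect decomposition (\ref{mod-average}), $n^{-1}\mathcal{L}_n(\boldsymbol{\theta})$ differs from $n^{-1}\sum_{s\in\mathcal{S}}\ln f(\varepsilon_s(\boldsymbol{\theta}))$ (plus the Jacobian term) by a quantity that vanishes almost surely because $|\mathcal{M}|/n\to 0$ and $\ln f(\varepsilon_s(\boldsymbol{\theta}))$ admits a uniform-in-$\boldsymbol{\theta}$ envelope coming from (\ref{res-expansion}) and Assumption \ref{error-dominance}. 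Since the interior field $\{\varepsilon_s(\boldsymbol{\theta}):s\in\mathcal{S}\}$ is stationary and ergodic (Assumption \ref{x-boundness}) with $\mathbb{E}|\ln f(\varepsilon_s(\boldsymbol{\theta}))|<\infty$ as established inside the proof of Lemma \ref{uni-cvg}, the ergodic theorem gives $n^{-1}\sum_{s\in\mathcal{S}}\ln f(\varepsilon_s(\boldsymbol{\theta}))\to\mathbb{E}\ln f(\varepsilon_s(\boldsymbol{\theta}))$ pointwise, while the Jacobian piece $n^{-1}\ln|I_n-\rho W_n|=n^{-1}\sum_{i=1}^n\ln(1-\rho\tau_i)$ is nonrandom, continuous in $\rho$, and uniformly bounded on $\boldsymbol{\Theta}$ by Assumptions \ref{rho-range}--\ref{ub-weight}. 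Denote the resulting pointwise limit by $\mathcal{L}(\boldsymbol{\theta})$; by Lemma \ref{uniq-max} and global identification (Lemma \ref{identify}), $\mathcal{L}$ is uniquely maximized at $\boldsymbol{\theta}_0$.

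Next I would upgrade pointwise to uniform convergence. Lemma \ref{uni-cvg} already delivers $\sup_{\boldsymbol{\theta}\in\boldsymbol{\Theta}}|n^{-1}\sum_{s}\ln f(\varepsilon_s(\boldsymbol{\theta}))-\mathbb{E}\,n^{-1}\sum_{s}\ln f(\varepsilon_s(\boldsymbol{\theta}))|\xrightarrow{p}0$, and the equicontinuity bound used in its proof together with compactness of $\boldsymbol{\Theta}$ shows that $\mathcal{L}(\boldsymbol{\theta})$ is continuous. Combined with the deterministic uniform convergence of the Jacobian term, this gives $\sup_{\boldsymbol{\theta}\in\boldsymbol{\Theta}}|n^{-1}\mathcal{L}_n(\boldsymbol{\theta})-\mathcal{L}(\boldsymbol{\theta})|\xrightarrow{p}0$.

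Then comes the standard separation step: fix $\varepsilon>0$ and set $\boldsymbol{\Theta}_\varepsilon=\{\boldsymbol{\theta}\in\boldsymbol{\Theta}:\|\boldsymbol{\theta}-\boldsymbol{\theta}_0\|\ge\varepsilon\}$, which is compact; by continuity of $\mathcal{L}$ and uniqueness of its maximum, $\delta:=\mathcal{L}(\boldsymbol{\theta}_0)-\sup_{\boldsymbol{\theta}\in\boldsymbol{\Theta}_\varepsilon}\mathcal{L}(\boldsymbol{\theta})>0$. On the event $\{\sup_{\boldsymbol{\theta}\in\boldsymbol{\Theta}}|n^{-1}\mathcal{L}_n(\boldsymbol{\theta})-\mathcal{L}(\boldsymbol{\theta})|<\delta/2\}$, whose probability tends to $1$, one has $n^{-1}\mathcal{L}_n(\hat{\boldsymbol{\theta}}_n)\ge n^{-1}\mathcal{L}_n(\boldsymbol{\theta}_0)>\mathcal{L}(\boldsymbol{\theta}_0)-\delta/2$, whereas $n^{-1}\mathcal{L}_n(\boldsymbol{\theta})<\mathcal{L}(\boldsymbol{\theta})+\delta/2\le\mathcal{L}(\boldsymbol{\theta}_0)-\delta/2$ for every $\boldsymbol{\theta}\in\boldsymbol{\Theta}_\varepsilon$, forcing $\hat{\boldsymbol{\theta}}_n\notin\boldsymbol{\Theta}_\varepsilon$. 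Hence $\Pr(\|\hat{\boldsymbol{\theta}}_n-\boldsymbol{\theta}_0\|<\varepsilon)\to1$, and since $\varepsilon$ was arbitrary, $\hat{\boldsymbol{\theta}}_n\xrightarrow{p}\boldsymbol{\theta}_0$.

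The main obstacle I anticipate is not the Wald skeleton but making the limiting object genuinely well posed and the separation strict. One must verify that $\mathbb{E}\ln f(\varepsilon_s(\boldsymbol{\theta}))$ is the same for all $s\in\mathcal{S}$ — which holds because every interior site has the identical eight-neighbour queen structure, so the interior field is strictly stationary — and, crucially, that Lemma \ref{identify} yields a \emph{strict} inequality $\mathcal{L}(\boldsymbol{\theta})<\mathcal{L}(\boldsymbol{\theta}_0)$ for $\boldsymbol{\theta}\ne\boldsymbol{\theta}_0$ rather than the weak inequality furnished by Jensen; this is exactly where Restrictions 1--2 (via Assumption \ref{identifiability}) are needed to eliminate the permutation, sign-flip, and redundant-neuron symmetries of the neural-network component so that $\boldsymbol{\theta}_0$ is the only parameter producing the true density. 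A secondary technical point is transferring the full-lattice sum to the interior sum uniformly over $\boldsymbol{\theta}\in\boldsymbol{\Theta}$; this is controlled by $|\mathcal{M}|/n\to0$ together with the $\boldsymbol{\theta}$-free envelope for $|\ln f(\varepsilon_s(\boldsymbol{\theta}))|$ implied by (\ref{res-expansion}) and the moment condition in Assumption \ref{error-dominance}.
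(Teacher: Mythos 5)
Your proposal is correct and follows essentially the same route as the paper: uniform convergence of $n^{-1}\mathcal{L}_n$ (via Lemma \ref{uni-cvg} plus control of the Jacobian term) combined with the unique maximizer property from Lemma \ref{uniq-max} over the compact set $\boldsymbol{\Theta}$. The only cosmetic difference is that you write out the Wald separation argument by hand, whereas the paper establishes stochastic equicontinuity of $n^{-1}\ln|I_n-\rho W_n|$ explicitly through the bound $|n^{-1}\mathrm{tr}(W_n(I_n-\tilde{\rho}_nW_n)^{-1})|\leq C_1$ and then invokes Theorem 3.5 of White (1994), which packages the same separation step.
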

\begin{proof}
Similar to the proof in Lung-fei Lee \cite{lee2004asymptotic}, we need to show the stochastic equicontinuity of $\frac{1}{n}\ln |I_n-\rho W_n|$ to have the uniform convergence of the log likelihood function $\mathcal{L}_n(\boldsymbol{\theta})$. Applying the mean value theorem, 
{\small\setlength{\abovedisplayskip}{3pt}
	\setlength{\belowdisplayskip}{\abovedisplayskip}
	\setlength{\abovedisplayshortskip}{0pt}
	\setlength{\belowdisplayshortskip}{3pt}
	\begin{align*}
	\left|\frac{1}{n}(\ln |I_n-\rho_1W_n|-\ln |I_n-\rho_2W_n|)\right| = \left|(\rho_1-\rho_2)\frac{1}{n}tr(W_n(I_n-\tilde{\rho}_nW_n)^{-1})\right|
	\end{align*}} %
where $\tilde{\rho}_n$ is between $\rho_1$ and $\rho_2$. Since $W_n$ is a row standardized matrix, the row sum equals to 1. By Assumption \ref{rho-range} and \ref{ub-weight}, $\sup_{\rho\in\boldsymbol{\Theta}}|\rho|<1$, $W_n$ is bounded in both row and column sums uniformly and using (\ref{matrix-decomposition}), 
{\small\setlength{\abovedisplayskip}{3pt}
	\setlength{\belowdisplayskip}{\abovedisplayskip}
	\setlength{\abovedisplayshortskip}{0pt}
	\setlength{\belowdisplayshortskip}{3pt}
	\begin{align*}
   \left|\frac{1}{n}tr(W_n(I_n-\tilde{\rho}_nW_n)^{-1})\right| =\left|\frac{1}{n}\sum_{i=1}^{n}\frac{\tau_i}{1-\tilde{\rho}_n\tau_i}\right|\leq C_{1}
	\end{align*}} %
where $C_{1}$ is a constant not depending on $n$. So $	\left|\frac{1}{n}(\ln |I_n-\rho_1W_n|-\ln |I_n-\rho_2W_n|)\right| \leq |\rho_1-\rho_2|C_1$ and with Lemma \ref{uni-cvg} we can conclude the uniform convergence that
{\small\setlength{\abovedisplayskip}{3pt}
	\setlength{\belowdisplayskip}{\abovedisplayskip}
	\setlength{\abovedisplayshortskip}{0pt}
	\setlength{\belowdisplayshortskip}{3pt}
	\begin{align}\label{uni-cvg-loglikelihood}
	\sup_{\boldsymbol{\theta}\in\boldsymbol{\Theta}}\left|\frac{1}{n}\mathcal{L}_n(\boldsymbol{\theta}|Y_n, X_n)-\mathbb{E} \frac{1}{n}\mathcal{L}_n(\boldsymbol{\theta}|Y_n, X_n)\right| \xrightarrow{p} 0.
	\end{align}} %
With Assumptions \ref{compact-space}-\ref{error-dominance}, the parameter space $\boldsymbol{\Theta}$ is compact; $\frac{1}{n}\mathcal{L}_n(\boldsymbol{\theta}|Y_n, X_n)$ is continuous in $\boldsymbol{\theta}\in \boldsymbol{\Theta}$ and is a measurable function of $Y_n$, $X_n$ for all $\boldsymbol{\theta}\in \boldsymbol{\Theta}$. $\mathbb{E}\frac{1}{n}\mathcal{L}_n(\boldsymbol{\theta}|Y_n, X_n)$ is continuous on $\boldsymbol{\Theta}$ and by Lemma \ref{uniq-max}, $\mathbb{E}\frac{1}{n}\mathcal{L}_n(\boldsymbol{\theta}|Y_n, X_n)$ has a unique maximum at $\boldsymbol{\theta}_0$. Referring to Theorem 3.5 in White\cite{white1994parametric}, with the uniform convergence in (\ref{uni-cvg-loglikelihood}), we can conclude that $\hat{\boldsymbol{\theta}}_n-\boldsymbol{\theta}_0\xrightarrow{p}0$ as $n\rightarrow\infty$.
\end{proof}

\subsection{Asymptotic Distribution}
\begin{ass}\label{Amatrix-limit}
	The limit $A(\boldsymbol{\theta}_0)=-\lim_{n\rightarrow\infty}\mathbb{E}\frac{1}{n}\frac{\partial^2 \mathcal{L}_n(\boldsymbol{\theta}_0)}{\partial \boldsymbol{\theta}\partial \boldsymbol{\theta}^{\prime}}$ is nonsingular.
\end{ass}
\begin{ass}\label{Bmatrix-limit}
The limit $B(\boldsymbol{\theta}_0)=\lim_{n\rightarrow\infty}\mathbb{E}\frac{1}{n}\frac{\partial \mathcal{L}_n(\boldsymbol{\theta}_0)}{\partial \boldsymbol{\theta}}\frac{\partial \mathcal{L}_n(\boldsymbol{\theta}_0)}{\partial \boldsymbol{\theta}^{\prime}}$ is nonsingular.
\end{ass}
These assumptions are to guarantee the existence of the covariance matrix of the limiting distribution of parameters in a PSAR-ANN model. 
We now give the asymptotic distribution of the maximum likelihood estimator $\hat{\boldsymbol{\theta}}_n$. 
\begin{thm}\label{asythm}
Under Assumptions \ref{compact-space}-\ref{Bmatrix-limit},
{\small\setlength{\abovedisplayskip}{3pt}
  \setlength{\belowdisplayskip}{\abovedisplayskip}
  \setlength{\abovedisplayshortskip}{0pt}
  \setlength{\belowdisplayshortskip}{3pt}
\begin{equation}\label{asymptotic}
\sqrt{n}(\hat{\boldsymbol{\theta}}_n-\boldsymbol{\theta}_0) \xrightarrow{d} N (\boldsymbol{0}, \boldsymbol{\Omega}_0)
\end{equation}} %
where $\boldsymbol{\Omega}_0= A(\boldsymbol{\theta}_0)^{-1}B(\boldsymbol{\theta}_0)A(\boldsymbol{\theta}_0)^{-1} = A(\boldsymbol{\theta}_0)^{-1}$.
\end{thm}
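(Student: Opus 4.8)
The plan is to follow the classical maximum-likelihood route. Since $\boldsymbol{\theta}_0$ is an interior point of $\boldsymbol{\Theta}$ (Assumption \ref{compact-space}) and $\hat{\boldsymbol{\theta}}_n\xrightarrow{p}\boldsymbol{\theta}_0$ by the consistency theorem, with probability tending to one $\hat{\boldsymbol{\theta}}_n$ lies in the interior and satisfies the first-order condition $\partial\mathcal{L}_n(\hat{\boldsymbol{\theta}}_n)/\partial\boldsymbol{\theta}=\boldsymbol{0}$. A mean-value expansion of the score around $\boldsymbol{\theta}_0$ then yields
\begin{equation*}
\sqrt{n}\,(\hat{\boldsymbol{\theta}}_n-\boldsymbol{\theta}_0)=-\Big[\tfrac{1}{n}\tfrac{\partial^2\mathcal{L}_n(\bar{\boldsymbol{\theta}}_n)}{\partial\boldsymbol{\theta}\partial\boldsymbol{\theta}^{\prime}}\Big]^{-1}\tfrac{1}{\sqrt{n}}\tfrac{\partial\mathcal{L}_n(\boldsymbol{\theta}_0)}{\partial\boldsymbol{\theta}},
\end{equation*}
with $\bar{\boldsymbol{\theta}}_n$ on the segment joining $\hat{\boldsymbol{\theta}}_n$ and $\boldsymbol{\theta}_0$. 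It then suffices to prove (i) a central limit theorem $n^{-1/2}\partial\mathcal{L}_n(\boldsymbol{\theta}_0)/\partial\boldsymbol{\theta}\xrightarrow{d}N(\boldsymbol{0},B(\boldsymbol{\theta}_0))$, (ii) $n^{-1}\partial^2\mathcal{L}_n(\bar{\boldsymbol{\theta}}_n)/\partial\boldsymbol{\theta}\partial\boldsymbol{\theta}^{\prime}\xrightarrow{p}-A(\boldsymbol{\theta}_0)$, and (iii) the information-matrix equality $A(\boldsymbol{\theta}_0)=B(\boldsymbol{\theta}_0)$; the conclusion follows by Slutsky's theorem together with the nonsingularity in Assumptions \ref{Amatrix-limit}--\ref{Bmatrix-limit}.

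For (i), write $M_n=W_n(I_n-\rho_0W_n)^{-1}$ and use $Y_n=(I_n-\rho_0W_n)^{-1}(\boldsymbol{g}(X_n,\boldsymbol{\theta}_0)+\boldsymbol{\varepsilon}_n)$. The $\beta$-, $\lambda$- and $\boldsymbol{\gamma}$-components of the score are sums $\sum_s \frac{f^{\prime}(\varepsilon_s)}{f(\varepsilon_s)}\,\psi(x_s)$ of terms that are mean zero (by Assumption \ref{error-integral}) and, conditional on $X_n$, independent; the $\rho$-component equals $-tr(M_n)-\sum_s\frac{f^{\prime}(\varepsilon_s)}{f(\varepsilon_s)}[M_n(\boldsymbol{g}+\boldsymbol{\varepsilon}_n)]_s$, i.e.\ a constant plus a linear form plus a bilinear form in the i.i.d.\ errors. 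Mean zero of the whole score uses Assumption \ref{error-integral}: the identity $\int sf^{\prime}(s)\,ds=-1$ is precisely what cancels $tr(M_n)$ against the diagonal of the bilinear part. Under the edge-correction scheme described above, the interior contribution is a strictly stationary ergodic random field, the boundary contribution is $o_p(\sqrt{n})$ exactly as in Lemma \ref{uni-cvg}, and the moment bounds needed to control the linear-quadratic form are furnished by Assumption \ref{ub-weight} (uniform row/column boundedness of $M_n$), Assumption \ref{x-boundness} and Assumption \ref{error-dominance}; convergence of the variance to $B(\boldsymbol{\theta}_0)$ is Assumption \ref{Bmatrix-limit}. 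Asymptotic normality then follows from a central limit theorem for linear-quadratic forms in independent variables (Kelejian and Prucha), or equivalently from a mixing-type CLT for stationary random fields applied to the interior sum.

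For (ii), a uniform law of large numbers $\sup_{\boldsymbol{\theta}\in\boldsymbol{\Theta}}\|n^{-1}\partial^2\mathcal{L}_n(\boldsymbol{\theta})/\partial\boldsymbol{\theta}\partial\boldsymbol{\theta}^{\prime}-\mathbb{E}[\cdot]\|\xrightarrow{p}0$ is proved exactly as Lemma \ref{uni-cvg}: reduce to the interior sum, apply the ergodic theorem pointwise, and verify stochastic equicontinuity — integrability of the Hessian entries (products of $f^{\prime}/f$, $(f^{\prime}/f)^2$, $f^{\prime\prime}/f$ with the uniformly bounded entries of $M_n$, $(I_n-\rho W_n)^{-1}$ and with $x_s$) is exactly what the dominance conditions of Assumption \ref{error-dominance} and the moment condition of Assumption \ref{x-boundness} supply, while the $\rho\rho$-block additionally needs equicontinuity of $n^{-1}tr(W_n(I_n-\rho W_n)^{-1}\cdots)$ via the eigenvalue decomposition (\ref{matrix-decomposition}). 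Since the limit is continuous in $\boldsymbol{\theta}$, combining this with $\bar{\boldsymbol{\theta}}_n\xrightarrow{p}\boldsymbol{\theta}_0$ and the definition of $A(\boldsymbol{\theta}_0)$ gives (ii). For (iii), because the density is correctly specified, differentiating $\int L_n(\boldsymbol{\theta})\,dZ_n=1$ twice under the integral sign — justified by Assumptions \ref{continuous}--\ref{error-dominance} — yields $\mathbb{E}[\partial\mathcal{L}_n(\boldsymbol{\theta}_0)/\partial\boldsymbol{\theta}]=\boldsymbol{0}$ and $\mathbb{E}[\partial^2\mathcal{L}_n(\boldsymbol{\theta}_0)/\partial\boldsymbol{\theta}\partial\boldsymbol{\theta}^{\prime}]=-\mathbb{E}[(\partial\mathcal{L}_n(\boldsymbol{\theta}_0)/\partial\boldsymbol{\theta})(\partial\mathcal{L}_n(\boldsymbol{\theta}_0)/\partial\boldsymbol{\theta}^{\prime})]$; dividing by $n$ and passing to the limit gives $A(\boldsymbol{\theta}_0)=B(\boldsymbol{\theta}_0)$ (the deterministic Jacobian term enters only the Hessian, and Assumption \ref{error-integral} is what makes the cross terms it generates cancel). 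Plugging (i)--(iii) into the displayed identity and applying Slutsky's theorem gives $\sqrt{n}(\hat{\boldsymbol{\theta}}_n-\boldsymbol{\theta}_0)\xrightarrow{d}N(\boldsymbol{0},A(\boldsymbol{\theta}_0)^{-1}B(\boldsymbol{\theta}_0)A(\boldsymbol{\theta}_0)^{-1})=N(\boldsymbol{0},A(\boldsymbol{\theta}_0)^{-1})$.

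I expect the genuine obstacle to be step (i), specifically the $\rho$-component of the score. Because $\partial\varepsilon_s/\partial\rho$ depends on the entire error vector through $(I_n-\rho_0W_n)^{-1}$, the summands are spatially dependent and include a true quadratic form $\sum_{s,k}m_{sk}\frac{f^{\prime}(\varepsilon_s)}{f(\varepsilon_s)}\varepsilon_k$ whose diagonal is not asymptotically negligible; obtaining asymptotic normality with the correct finite, non-degenerate variance requires either verifying the hypotheses of the Kelejian--Prucha CLT for linear-quadratic forms against Assumptions \ref{ub-weight}--\ref{error-dominance}, or carefully reducing to the stationary interior field, invoking a spatial mixing CLT, and showing the boundary contribution is $o_p(\sqrt{n})$. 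The Hessian ULLN in (ii) and the information-matrix equality in (iii) are, by contrast, routine consequences of the dominance structure already exploited in Lemma \ref{uni-cvg}.
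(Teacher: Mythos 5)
Your proposal follows the same overall route as the paper: the first-order condition, a mean-value expansion of the score, a CLT for $n^{-1/2}\partial\mathcal{L}_n(\boldsymbol{\theta}_0)/\partial\boldsymbol{\theta}$, convergence of the normalized Hessian at $\bar{\boldsymbol{\theta}}_n$ to $-A(\boldsymbol{\theta}_0)$, the information-matrix equality $A(\boldsymbol{\theta}_0)=B(\boldsymbol{\theta}_0)$ obtained by differentiating under the integral, and Slutsky. Steps (ii) and (iii) match the paper's argument essentially line for line (the paper verifies integrability of each Hessian block against the dominance conditions and then applies the ergodic theorem, exactly as you describe).

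The one place you genuinely diverge is step (i), and there your treatment is the more careful one. The paper simply asserts that $\partial\mathcal{L}_n(\boldsymbol{\theta}_0)/\partial\boldsymbol{\theta}$ ``is a sum of $n$ independent and identically distributed random variables'' and invokes the classical CLT. As you correctly point out, this is not literally true for the $\rho$-component: since $\partial\varepsilon_s/\partial\rho=-[M_n(\boldsymbol{g}+\boldsymbol{\varepsilon}_n)]_s$ with $M_n=W_n(I_n-\rho_0W_n)^{-1}$, that component is a constant plus a linear form plus a genuine quadratic form in the i.i.d.\ errors, whose summands are spatially dependent and whose diagonal is what balances $-tr(M_n)$ via $\int sf^{\prime}(s)\,ds=-1$. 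Your proposal to close this gap with the Kelejian--Prucha CLT for linear-quadratic forms (verified against Assumptions \ref{ub-weight}, \ref{x-boundness} and \ref{error-dominance}), or alternatively a spatial-mixing CLT on the stationary interior field with the boundary contribution shown to be $o_p(\sqrt{n})$, is the standard and correct way to make this step rigorous; it is what Lee's quasi-ML analysis does in the purely linear SAR case. So your argument buys a sound justification of the score CLT where the paper's stated justification is, strictly speaking, inadequate, at the modest cost of importing an external linear-quadratic CLT. No gap in your proposal; the minor parenthetical in your step (iii) claiming the Jacobian term enters only the Hessian is inconsistent with your own (correct) expression for the $\rho$-score, which includes $-tr(M_n)$, but this does not affect the argument since that deterministic term is exactly cancelled in expectation by the quadratic form's diagonal.
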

\begin{proof}
Since $\hat{\boldsymbol{\theta}}_n$ maximizes $\mathcal{L}_n(\boldsymbol{\theta})$, $\frac{\partial \mathcal{L}_n(\hat{\boldsymbol{\theta}}_n)}{\partial \boldsymbol{\theta}} = 0$. By the mean value theorem, expand $\frac{\partial \mathcal{L}_n(\hat{\boldsymbol{\theta}}_n)}{\partial \boldsymbol{\theta}}$ around $\boldsymbol{\theta}_0$ with respect to $\boldsymbol{\theta}$,
{\small\setlength{\abovedisplayskip}{3pt}
	\setlength{\belowdisplayskip}{\abovedisplayskip}
	\setlength{\abovedisplayshortskip}{0pt}
	\setlength{\belowdisplayshortskip}{3pt}
	\begin{gather*}
	\frac{\partial \mathcal{L}_n(\hat{\boldsymbol{\theta}}_n)}{\partial \boldsymbol{\theta}}  = \frac{\partial \mathcal{L}_n(\boldsymbol{\theta}_0)}{\partial \boldsymbol{\theta}}+\frac{\partial^2 \mathcal{L}_n(\tilde{\boldsymbol{\theta}}_n)}{\partial \boldsymbol{\theta}\partial \boldsymbol{\theta}^{\prime}}(\hat{\boldsymbol{\theta}}_n-\boldsymbol{\theta}_0)\\
	0 = \frac{\partial \mathcal{L}_n(\boldsymbol{\theta}_0)}{\partial \boldsymbol{\theta}}+\frac{\partial^2 \mathcal{L}_n(\tilde{\boldsymbol{\theta}}_n)}{\partial \boldsymbol{\theta}\partial \boldsymbol{\theta}^{\prime}}(\hat{\boldsymbol{\theta}}_n-\boldsymbol{\theta}_0)
		\end{gather*}} %
where $\tilde{\boldsymbol{\theta}}_n$ is between $\hat{\boldsymbol{\theta}}_n$ and $\boldsymbol{\theta}_0$. Therefore, we can have the following equation:
{\small\setlength{\abovedisplayskip}{3pt}
	\setlength{\belowdisplayskip}{\abovedisplayskip}
	\setlength{\abovedisplayshortskip}{0pt}
	\setlength{\belowdisplayshortskip}{3pt}
	\begin{gather}\label{taylor}	
	\sqrt{n}(\hat{\boldsymbol{\theta}}_n-\boldsymbol{\theta}_0) = \left[-\frac{1}{n}\frac{\partial^2 \mathcal{L}_n(\tilde{\boldsymbol{\theta}}_n)}{\partial \boldsymbol{\theta}\partial \boldsymbol{\theta}^{\prime}}\right]^{-1}\frac{1}{\sqrt{n}}\frac{\partial\mathcal{L}_n(\boldsymbol{\theta}_0)}{\partial \boldsymbol{\theta}}
	\end{gather}} %
We first show the limiting distribution of $\frac{1}{\sqrt{n}}\frac{\partial\mathcal{L}_n(\boldsymbol{\theta}_0)}{\partial \boldsymbol{\theta}}$. Under $\boldsymbol{\theta}_0$, $\boldsymbol{\varepsilon}_n(\boldsymbol{\theta}_0)=\boldsymbol{\varepsilon}_n$,
{\small\setlength{\abovedisplayskip}{3pt}
	\setlength{\belowdisplayskip}{\abovedisplayskip}
	\setlength{\abovedisplayshortskip}{0pt}
	\setlength{\belowdisplayshortskip}{3pt}
	\begin{align}\label{random_error}
	\boldsymbol{\varepsilon}_n(\boldsymbol{\theta}_0)=(I_n-\rho_0W_n)Y_n-X_n\beta_0-F(X_n\boldsymbol{\gamma}^{\prime}_0)\lambda_0=\boldsymbol{\varepsilon}_n
	\end{align}} %
Denote $\frac{\boldsymbol{f}^{\prime}(\boldsymbol{\varepsilon}_n(\boldsymbol{\theta}))}{\boldsymbol{f}(\boldsymbol{\varepsilon}_{n}(\boldsymbol{\theta}))}$ as $V_n(\boldsymbol{\theta})\in \mathbb{R}^n$ and $ \frac{\boldsymbol{f}^{\prime}(\boldsymbol{\varepsilon}_{n})}{\boldsymbol{f}(\boldsymbol{\varepsilon}_{n})}$ as $V_n\in \mathbb{R}^n$, then the first order derivatives are
	{\small\setlength{\abovedisplayskip}{3pt}
	\setlength{\belowdisplayskip}{\abovedisplayskip}
	\setlength{\abovedisplayshortskip}{0pt}
	\setlength{\belowdisplayshortskip}{3pt}
	\begin{gather}\label{first-order-derivative}
	\frac{1}{\sqrt{n}}\frac{\partial \mathcal{L}_{n}(\boldsymbol{\theta})}{\partial \boldsymbol{\theta}}=
	\begin{pmatrix*}[l]
	-\frac{1}{\sqrt{n}}\left((W_nY_n)^{\prime}V_n(\boldsymbol{\theta})+tr(W_n(I_n-\rho W_n)^{-1})\right)\\
	-\frac{1}{\sqrt{n}}X_n^{\prime}V_n(\boldsymbol{\theta})\\
	-\frac{1}{\sqrt{n}}(\boldsymbol{F}(X_n\boldsymbol{\gamma}^{\prime}))^{\prime} V_n(\boldsymbol{\theta}))\\
	-\frac{\lambda_1}{\sqrt{n}}X_n^{\prime}(\boldsymbol{F}(X_n\boldsymbol{\gamma}_1)\circ V_n(\boldsymbol{\theta}))\\
	\hspace{9em}\vdots\\
	-\frac{\lambda_h}{\sqrt{n}}X_n^{\prime}(\boldsymbol{F}(X_n\boldsymbol{\gamma}_h)\circ V_n(\boldsymbol{\theta}))\\
	\end{pmatrix*}
	\end{gather}}%
By Lemma \ref{uniq-max}, the true parameter values maximize $\frac{1}{n}\mathbb{E}(\mathcal{L}_n(\boldsymbol{\theta}))$, so $\frac{1}{n}\frac{\partial \mathbb{E}(\mathcal{L}_n(\boldsymbol{\theta}_0))}{\partial \boldsymbol{\theta}}=\boldsymbol{0}$. In (\ref{log-derivative-bound}) and (\ref{res-expansion}), we showed that $\mathbb{E}\left|\frac{\partial \ln f(\varepsilon_s(\boldsymbol{\theta}))}{\partial \boldsymbol{\theta}}\right|$ is dominated by some function not related to $\boldsymbol{\theta}$ and (\ref{loglikelihood-theta-expectation}) indicates that $\mathbb{E}\left|\frac{\partial \ln f(\varepsilon_s(\boldsymbol{\theta}))}{\partial \boldsymbol{\theta}}\right|$ is bounded for interior units in $\mathcal{S}$. Hence, $\mathbb{E}\frac{\partial \ln f(\varepsilon_s(\boldsymbol{\theta}))}{\partial \boldsymbol{\theta}} = \frac{\partial }{\partial \boldsymbol{\theta}}\mathbb{E}\ln f(\varepsilon_s(\boldsymbol{\theta}))
$, it follows that, with $\frac{1}{n}\mathcal{L}_n(\boldsymbol{\theta})=\frac{1}{n}\ln|I_n-\rho_0W_n|+\frac{1}{n}\sum_{s=1}^{n}\ln f(\varepsilon_s(\boldsymbol{\theta}))$, we can have,
{\small\setlength{\abovedisplayskip}{3pt}
	\setlength{\belowdisplayskip}{\abovedisplayskip}
	\setlength{\abovedisplayshortskip}{0pt}
	\setlength{\belowdisplayshortskip}{3pt}
	\begin{gather*}	
	\frac{1}{n}\frac{\partial \mathbb{E}\mathcal{L}_n(\boldsymbol{\theta}_0)}{\partial \boldsymbol{\theta}} = \frac{1}{n}\mathbb{E}\frac{\partial \mathcal{L}_(\boldsymbol{\theta}_0)}{\partial \boldsymbol{\theta}}=\boldsymbol{0}
	\end{gather*}} %
Therefore, with Assumption \ref{Bmatrix-limit}
{\small\setlength{\abovedisplayskip}{3pt}
	\setlength{\belowdisplayskip}{\abovedisplayskip}
	\setlength{\abovedisplayshortskip}{0pt}
	\setlength{\belowdisplayshortskip}{3pt}
	\begin{gather*}
	Var\left(\frac{1}{\sqrt{n}}\frac{\partial\mathcal{L}_n(\boldsymbol{\theta}_0)}{\partial \boldsymbol{\theta}}\right) = -\mathbb{E}\frac{1}{n}\frac{\partial^2 \mathcal{L}_n(\boldsymbol{\theta}_0)}{\partial\boldsymbol{\theta}\partial\boldsymbol{\theta}^{\prime}} =\mathbb{E} \frac{1}{n} \frac{\partial\mathcal{L}_n(\boldsymbol{\theta}_0)}{\partial \boldsymbol{\theta}}\frac{\partial\mathcal{L}_n(\boldsymbol{\theta}_0)}{\partial \boldsymbol{\theta}^{\prime}}\rightarrow B(\boldsymbol{\theta}_0)
	\end{gather*}} %
And under this $A(\boldsymbol{\theta}_0) = B(\boldsymbol{\theta}_0)$ when $n\rightarrow\infty$. From (\ref{first-order-derivative}), we can see that $\frac{\partial \mathcal{L}_n(\boldsymbol{\theta}_0)}{\partial \boldsymbol{\theta}}$ is a sum of $n$ independent and identically distributed random variables. By the central limit theorem,
with the existence of high order moments of random errors in Assumption \ref{error-dist}, we can conclude the limiting distribution of $\frac{1}{\sqrt{n}}\frac{\partial\mathcal{L}_n(\boldsymbol{\theta}_0)}{\partial \boldsymbol{\theta}}$ is $N(\boldsymbol{0}, B(\boldsymbol{\theta}_0))$.

Next, we want to show that $\frac{1}{n}\frac{\partial^2 \mathcal{L}_n(\tilde{\boldsymbol{\theta}}_n)}{\partial \boldsymbol{\theta}\partial \boldsymbol{\theta}^{\prime}}-\frac{1}{n}\frac{\partial^2 \mathcal{L}_n(\boldsymbol{\theta}_0)}{\partial \boldsymbol{\theta}\partial \boldsymbol{\theta}^{\prime}}\xrightarrow{p}0$. Following the results in (\ref{first-order-derivative}), define $U_n(\boldsymbol{\theta})=\frac{\boldsymbol{f}^{\prime\prime}(\boldsymbol{\varepsilon}_{n}(\boldsymbol{\theta}))}{\boldsymbol{f}(\boldsymbol{\varepsilon}_{n}(\boldsymbol{\theta}))}-\frac{\boldsymbol{f}^{\prime2}(\boldsymbol{\varepsilon}_{n}(\boldsymbol{\theta}))}{\boldsymbol{f}^2(\boldsymbol{\varepsilon}_{n}(\boldsymbol{\theta}))}\in \mathbb{R}^n$ and $U_n(\boldsymbol{\theta}_0)=U_n$ so the second order derivatives are given below $-\frac{1}{n}\frac{\partial^2\mathcal{L}_{n}(\boldsymbol{\theta})}{\partial\boldsymbol{\theta}\partial\boldsymbol{\theta}^{\prime}}=$
 {\small\setlength{\abovedisplayskip}{3pt}
	\setlength{\belowdisplayskip}{\abovedisplayskip}
	\setlength{\abovedisplayshortskip}{0pt}
	\setlength{\belowdisplayshortskip}{3pt}		
	\begin{gather}\label{second-order-derivative}
	\frac{1}{n}
	\begin{pmatrix*}[l]
	G_{0}(\boldsymbol{\theta})&(W_nY_n)^{\prime}G_{1}(\boldsymbol{\theta})&(W_nY_n)^{\prime}G_{2}(\boldsymbol{\theta})&(W_nY_n)^{\prime}H_{1}(\boldsymbol{\theta})&\cdots&(W_nY_n)^{\prime}H_{h}(\boldsymbol{\theta})\\
	G_{1}^{\prime}(\boldsymbol{\theta})W_nY_n&X_n^{\prime}G_{1}(\boldsymbol{\theta})&X_n^{\prime}G_{2}(\boldsymbol{\theta})&X_n^{\prime}H_{1}(\boldsymbol{\theta})&\cdots&X_n^{\prime}H_{h}(\boldsymbol{\theta})\\
	G_{2}^{\prime}(\boldsymbol{\theta})W_nY_n&G_{2}^{\prime}(\boldsymbol{\theta})X_n&\boldsymbol{F}(X_n\boldsymbol{\gamma}^{\prime})^{\prime}G_{2}(\boldsymbol{\theta})&\boldsymbol{F}(X_n\boldsymbol{\gamma}^{\prime})^{\prime}H_{1}(\boldsymbol{\theta})&\cdots&\boldsymbol{F}(X_n\boldsymbol{\gamma}^{\prime})^{\prime}H_{h}(\boldsymbol{\theta})\\
	&&&+K_{1}(\boldsymbol{\theta})&\cdots&+K_{h}(\boldsymbol{\theta})\\
	H_{1}^{\prime}(\boldsymbol{\theta})W_nY_n&H_{1}^{\prime}(\boldsymbol{\theta})X_n&H_{1}^{\prime}(\boldsymbol{\theta})\boldsymbol{F}(X_n\boldsymbol{\gamma}^{\prime})\\
	&&+K_{1}(\boldsymbol{\theta})^{\prime}&&\\
	\vdots&\vdots&\vdots&&J(\boldsymbol{\theta})&\\
	H_{h}^{\prime}(\boldsymbol{\theta})W_nY_n&H_{h}^{\prime}(\boldsymbol{\theta})X_n&H_{h}^{\prime}(\boldsymbol{\theta})\boldsymbol{F}(X_n\boldsymbol{\gamma}^{\prime})\\
	&&+K_{h}(\boldsymbol{\theta})^{\prime}&&\\
	\end{pmatrix*}
	\end{gather}}%
	 {\small\setlength{\abovedisplayskip}{3pt}
		\setlength{\belowdisplayskip}{\abovedisplayskip}
		\setlength{\abovedisplayshortskip}{0pt}
		\setlength{\belowdisplayshortskip}{3pt}		
		\begin{align*}
	J_{ij}(\boldsymbol{\theta}) &=\left\{\begin{array}{ll}
		\lambda_{i}X_n^{\prime}(\boldsymbol{F}^{\prime\prime}(X_n\boldsymbol{\gamma}_{i})\circ V_n(\boldsymbol{\theta})\circ  X_n)+\lambda_{i}X_n^{\prime}(\boldsymbol{F}^{\prime}(X_n\boldsymbol{\gamma}_{i})\circ H_{i})&i=j\\
	\lambda_{i}(\boldsymbol{F}^{\prime}(X_n\boldsymbol{\gamma}_{i})\circ H_{j})^{\prime}X_n&i>j \quad i,j ={1,2,\ldots,h}\\
	\lambda_{i}X_n^{\prime}(\boldsymbol{F}^{\prime}(X_n\boldsymbol{\gamma}_{i})\circ H_{j})&i< j
	\end{array}\right.
	\\\nonumber
	G_{0}(\boldsymbol{\theta}) &=\left(-W_nY_n\circ W_nY_n)^{\prime}U_n(\boldsymbol{\theta})+tr((W_n(I_n-\rho W_n)^{-1})^2\right)\\\nonumber
	G_{1}(\boldsymbol{\theta})&=-U_n(\boldsymbol{\theta})\circ X_n\\\nonumber
	G_{2}(\boldsymbol{\theta})&=-U_n(\boldsymbol{\theta})\circ \boldsymbol{F}(X_n\boldsymbol{\gamma}^{\prime})\\\nonumber
	H_{i}(\boldsymbol{\theta})&=-U_n(\boldsymbol{\theta})\circ (\lambda_i\boldsymbol{F}^{\prime}(X_n\boldsymbol{\gamma}_i)\circ X_n)\quad i=1,\ldots,h\\
	K_{i}(\boldsymbol{\theta}) &= [V_n(\boldsymbol{\theta})\circ \boldsymbol{F}^{\prime}(X_n\boldsymbol{\gamma}^{\prime})]^{\prime}X_n \circ e_{i}\quad i=1,\ldots,h\quad k = 1,\ldots, h\\
	e_{i,k} & = \left\{\begin{array}{ll}
		1&k=i\\
		0& k\neq i
		\end{array}\right.
	\end{align*}}%
	
Since $\tilde{\boldsymbol{\theta}}_n$ is between $\hat{\boldsymbol{\theta}}_n$ and $\boldsymbol{\theta}_0$, $\hat{\boldsymbol{\theta}}_n\xrightarrow{p}\boldsymbol{\theta}_0$ so $\tilde{\boldsymbol{\theta}}_n$ also converges to $\boldsymbol{\theta}_0$ in probability as $n\rightarrow\infty$.
By Assumption \ref{error-dominance}, $\left|\frac{f^{\prime}(s)}{f(s)}\right|,\left|\frac{f^{\prime\prime}(s)}{f(s)}\right|$ and $\left|\frac{f^{\prime 2}(s)}{f^{2}(s)}\right|$ are continuous and are bounded by $a_1+a_2\left|s\right|^{c_1}$ so $V_n(\boldsymbol{\theta}), U_n(\boldsymbol{\theta})$ are continuous. With $\rho\in(-\frac{1}{\tau},\frac{1}{\tau})$, $tr((W_n(I_n-\rho W_n)^{-1})^2)= \sum_{i=1}^{n}\frac{\tau_i^2}{(1-\rho \tau_i)^2}$ is also a continuous function of $\rho$.

Therefore elements in $\frac{1}{n}\frac{\partial^2\mathcal{L}_n(\boldsymbol{\theta})}{\partial \boldsymbol{\theta}\partial\boldsymbol{\theta}^{\prime}}$ are continuous functions for $\boldsymbol{\theta}$ in $\boldsymbol{\Theta}$.
By the continuity,
{\small\setlength{\abovedisplayskip}{3pt}
	\setlength{\belowdisplayskip}{\abovedisplayskip}
	\setlength{\abovedisplayshortskip}{0pt}
	\setlength{\belowdisplayshortskip}{3pt}
	\begin{gather}\label{cvg-function-p}
	\frac{1}{n}\frac{\partial^2\mathcal{L}_n(\tilde{\boldsymbol{\theta}}_n)}{\partial \boldsymbol{\theta}\partial\boldsymbol{\theta}^{\prime}}-\frac{1}{n}\frac{\partial^2\mathcal{L}_n(\boldsymbol{\theta}_0)}{\partial \boldsymbol{\theta}\partial\boldsymbol{\theta}^{\prime}}\xrightarrow{p}0,\quad\text{as } \tilde{\boldsymbol{\theta}}_n\xrightarrow{p}\boldsymbol{\theta}_0
	\end{gather}} %

Finally, show that $\left|\frac{1}{n}\frac{\partial^2\mathcal{L}_n(\boldsymbol{\theta}_0)}{\partial \boldsymbol{\theta}\partial \boldsymbol{\theta}^{\prime}}-\mathbb{E} \frac{1}{n}\frac{\partial^2\mathcal{L}_n(\boldsymbol{\theta}_0)}{\partial \boldsymbol{\theta}\partial\boldsymbol{\theta}^{\prime}}\right|\xrightarrow{p}0$. 

Since $Y_n,X_n$ are stationary, we can first show that for each $s$,
{\small\setlength{\abovedisplayskip}{3pt}
	\setlength{\belowdisplayskip}{\abovedisplayskip}
	\setlength{\abovedisplayshortskip}{0pt}
	\setlength{\belowdisplayshortskip}{3pt}
	\begin{gather}\label{second-derivative-expectation}
 \mathbb{E}\left|\frac{\partial^2}{\partial\boldsymbol{\theta}\partial\boldsymbol{\theta}^{\prime}}\left(\frac{1}{n}\sum_{i=1}^{n}\ln(1-\rho_0\tau_i)+\ln f(\varepsilon_s(\boldsymbol{\theta}_0))\right)\right| <\infty
\end{gather}} %
We first discuss the expected value of second derivative with respect to $\rho$ in (\ref{second-derivative-expectation}).
By triangular inequality, $ \mathbb{E}\left|\frac{\partial^2}{\partial\rho\partial\rho}\left(\frac{1}{n}\sum_{i=1}^{n}\ln(1-\rho_0\tau_i)+\ln f(\varepsilon_s(\boldsymbol{\theta}_0))\right)\right|<\mathbb{E}\left|\frac{1}{n}\sum_{i=1}^{n}\frac{\partial^2\ln(1-\rho_0\tau_i)}{\partial\rho\partial\rho}\right|+\mathbb{E}\left|\frac{\partial^2 \ln f(\varepsilon_s(\boldsymbol{\theta}_0))}{\partial\rho\partial \rho}\right|$. Because $\sum_{i=1}^{n}\frac{\partial^2\ln(1-\rho_0\tau_i)}{\partial\rho\partial\rho} = tr(M_n^2)$ (defined in (\ref{log-derivative-bound})), this can be further simplified to
{\small\setlength{\abovedisplayskip}{3pt}
	\setlength{\belowdisplayskip}{\abovedisplayskip}
	\setlength{\abovedisplayshortskip}{0pt}
	\setlength{\belowdisplayshortskip}{3pt}
	\begin{gather}\label{rho-rho-abs}
	\frac{1}{n}tr(M_n^2)+\mathbb{E}\left|\left(\frac{f^{\prime^2}(\varepsilon_s)}{f^2(\varepsilon_s)}-\frac{f^{\prime\prime}(\varepsilon_s)}{f(\varepsilon_s)}\right)\left(\sum_{k=1}^{n}w_{sk}y_k\right)^2\right|
\end{gather}} %
Because $M_n$ is uniformly bounded in column and row sums, $\frac{1}{n}tr(M_n^2)<\infty$, $\sum_{k=1}^{n}m_{sk}< b$ so $\sum_{j=1}^{n}\sum_{k=1}^{n}m_{sj}m_{sk} <(\sum_{k=1}^{n}m_{sk})^2 < b^2 $.
We need to show $\mathbb{E}\left|\left(\frac{f^{\prime^2}(\varepsilon_s)}{f^2(\varepsilon_s)}-\frac{f^{\prime\prime}(\varepsilon_s)}{f(\varepsilon_s)}\right)\left(\sum_{k=1}^{n}w_{sk}y_k\right)^2\right|<\infty$.

Because $Y_n = (I_n-\rho_0W_n)^{-1}(\boldsymbol{g}(X_n,\boldsymbol{\theta}_0)+\boldsymbol{\varepsilon}_n)$, $W_nY_n = M_n(\boldsymbol{g}(X_n,\boldsymbol{\theta}_0)+\boldsymbol{\varepsilon}_n)$, $\sum_{k=1}^{n}w_{sk}y_k=\sum_{k=1}^{n}m_{sk}(g(x_k,\boldsymbol{\theta}_0)+\varepsilon_k)$. 
It follows that $\mathbb{E}\left|\left(\frac{f^{\prime^2}(\varepsilon_s)}{f^2(\varepsilon_s)}-\frac{f^{\prime\prime}(\varepsilon_s)}{f(\varepsilon_s)}\right)\left(\sum_{k=1}^{n}w_{sk}y_k\right)^2\right|=$
{\small\setlength{\abovedisplayskip}{3pt}
	\setlength{\belowdisplayskip}{\abovedisplayskip}
	\setlength{\abovedisplayshortskip}{0pt}
	\setlength{\belowdisplayshortskip}{3pt}
	\begin{align*}
  &\mathbb{E}\left|\left(\frac{f^{\prime^2}(\varepsilon_s)}{f^2(\varepsilon_s)}-\frac{f^{\prime\prime}(\varepsilon_s)}{f(\varepsilon_s)}\right)\left(\sum_{k=1}^{n}m_{sk}(g(x_{k},\boldsymbol{\theta}_0)+\varepsilon_k)\right)^2\right|\\
<&\mathbb{E}\left|\left(\frac{f^{\prime^2}(\varepsilon_s)}{f^2(\varepsilon_s)}-\frac{f^{\prime\prime}(\varepsilon_s)}{f(\varepsilon_s)}\right)\sum_{k=1}^{n}m_{sk}^2[g(x_{k},\boldsymbol{\theta}_0)+\varepsilon_k]^2\right|\\
	&+\mathbb{E}\left|\left(\frac{f^{\prime^2}(\varepsilon_s)}{f^2(\varepsilon_s)}-\frac{f^{\prime\prime}(\varepsilon_s)}{f(\varepsilon_s)}\right)\sum_{j=1,j\neq k}^{n}\sum_{k=1}^{n}m_{sk}m_{sj}[g(x_{k},\boldsymbol{\theta}_0)+\varepsilon_k][g(x_{j},\boldsymbol{\theta}_0)+\varepsilon_j]\right|
\end{align*}} %
By assumption, $\mathbb{E}\, \varepsilon_k\varepsilon_j=0$ if $k\neq j$, $\mathbb{E}\left|\frac{\varepsilon_sf^{\prime^2}(\varepsilon_s)}{f^2(\varepsilon_s)}-\frac{\varepsilon_sf^{\prime\prime}(\varepsilon_s)}{f(\varepsilon_s)}\right|<\infty$, $\mathbb{E}\left|\frac{\varepsilon_s^2f^{\prime^2}(\varepsilon_s)}{f^2(\varepsilon_s)}-\frac{\varepsilon_s^2f^{\prime\prime}(\varepsilon_s)}{f(\varepsilon_s)}\right|<\infty$. Through mathematical computation, we can prove that $\mathbb{E}\left|\frac{\partial^2 \ln f(\varepsilon_s(\boldsymbol{\theta}_0))}{\partial\rho\partial\rho}\right|$ is finite, i.e.,
{\small\setlength{\abovedisplayskip}{3pt}
	\setlength{\belowdisplayskip}{\abovedisplayskip}
	\setlength{\abovedisplayshortskip}{0pt}
	\setlength{\belowdisplayshortskip}{3pt}
	\begin{gather*} \mathbb{E}\left|\frac{\partial^2}{\partial\rho\partial\rho}\left(\frac{1}{n}\sum_{i=1}^{n}\ln(1-\rho_0\tau_i)+\ln f(\varepsilon_s(\boldsymbol{\theta}_0))\right)\right| <\infty
\end{gather*}}%
Because $\frac{1}{n}\sum_{i=1}^{n}\ln(1-\rho_0\tau_i)$ in (\ref{second-derivative-expectation}) only relates to $\rho$, this term goes away when taken second derivative with respect to other parameters. Hence, other elements in (\ref{second-derivative-expectation}) equal to those in $\mathbb{E}\left|\frac{\partial^2 \ln f(\varepsilon_s(\boldsymbol{\theta}_0))}{\partial\boldsymbol{\theta}\partial \boldsymbol{\theta}^{\prime}}\right|$ and we can show that  those expectations are also finite.
{\small\setlength{\abovedisplayskip}{3pt}
	\setlength{\belowdisplayskip}{\abovedisplayskip}
	\setlength{\abovedisplayshortskip}{0pt}
	\setlength{\belowdisplayshortskip}{3pt}
	\begin{align}\label{rho-beta-abs}
	\mathbb{E}\left|\frac{\partial^2 \ln f(\varepsilon_s(\boldsymbol{\theta}_0))}{\partial\rho\partial \beta^{\prime}}\right| &\leq |x_s^{\prime}|\cdot \left(k_2|m_{ss}|+k_1|b-m_{ss}|\cdot\mathbb{E}|\varepsilon_s|+k_1\left|\sum_{k=1}^{n}m_{sk}g(x_k,\boldsymbol{\theta}_0)\right|\right)\\
	\mathbb{E}\left|\frac{\partial^2 \ln f(\varepsilon_s(\boldsymbol{\theta}_0))}{\partial\rho\partial \lambda^{\prime}}\right| &\leq\boldsymbol{1}_h^{\prime}\cdot \left(k_2|m_{ss}|+k_1|b-m_{ss}|\cdot\mathbb{E}|\varepsilon_s|+k_1\left|\sum_{k=1}^{n}m_{sk}g(x_k,\boldsymbol{\theta}_0)\right|\right)\\
	\mathbb{E}\left|\frac{\partial^2 \ln f(\varepsilon_s(\boldsymbol{\theta}_0))}{\partial\rho\partial \gamma_{i}^{\prime}}\right| &\leq
 \frac{|\lambda_{i0}x_s^{\prime}|}{4}\cdot \left(k_2|m_{ss}|+k_1|b-m_{ss}|\cdot\mathbb{E}|\varepsilon_s|+k_1\left|\sum_{k=1}^{n}m_{sk}g(x_k,\boldsymbol{\theta}_0)\right|\right)\\
	\mathbb{E}\left|\frac{\partial^2 \ln f(\varepsilon_s(\boldsymbol{\theta}_0))}{\partial\beta\partial \beta^{\prime}}\right| &= k_1|x_sx_s^{\prime}|\\
	\mathbb{E}\left|\frac{\partial^2 \ln f(\varepsilon_s(\boldsymbol{\theta}_0))}{\partial\beta\partial \lambda^{\prime}}\right| &= k_1|x_s\boldsymbol{F}(x_s^{\prime}\boldsymbol{\gamma}_0)|\\
	\mathbb{E}\left|\frac{\partial^2 \ln f(\varepsilon_s(\boldsymbol{\theta}_0))}{\partial\beta\partial \boldsymbol{\gamma}_{i}^{\prime}}\right| &\leq \frac{k_1}{4}|\lambda_{i0}x_sx_s^{\prime}|\\
	\mathbb{E}\left|\frac{\partial^2 \ln f(\varepsilon_s(\boldsymbol{\theta}_0))}{\partial\lambda\partial \lambda^{\prime}}\right| & = k_1\left|\boldsymbol{F}(x_s^{\prime}\boldsymbol{\gamma}_0)^{\prime}\boldsymbol{F}(x_s^{\prime}\boldsymbol{\gamma}_0)\right| \leq k_1\cdot \boldsymbol{1}_{h\times h}\\
	\mathbb{E}\left|\frac{\partial^2 \ln f(\varepsilon_s(\boldsymbol{\theta}_0))}{\partial\lambda\partial \boldsymbol{\gamma}_{i}^{\prime}}\right| & = \frac{k_1}{4}|\lambda_{i0}F^{\prime}(x_s^{\prime}\boldsymbol{\gamma}_{i0})|\cdot|\boldsymbol{F}(x_s^{\prime}\boldsymbol{\gamma}_{0})^{\prime}x_s^{\prime}|\leq \frac{k_1|\lambda_{i0}|}{4}\cdot|\boldsymbol{F}(x_s^{\prime}\boldsymbol{\gamma}_{0})^{\prime}x_s^{\prime}|\\
	\mathbb{E}\left|\frac{\partial^2 \ln f(\varepsilon_s(\boldsymbol{\theta}_0))}{\partial\boldsymbol{\gamma}_{i}\partial \boldsymbol{\gamma}_{j}^{\prime}}\right| & \leq \frac{k_1|\lambda_{i0}\lambda_{j0}|}{16}\cdot|x_sx_s^{\prime}|,\quad i\neq j\\\label{gamma-gamma-abs}
	\mathbb{E}\left|\frac{\partial^2 \ln f(\varepsilon_s(\boldsymbol{\theta}_0))}{\partial\boldsymbol{\gamma}_{i}\partial \boldsymbol{\gamma}_{i}^{\prime}}\right| & \leq \frac{k_1\lambda^2_{i0}}{16}\cdot|x_sx_s^{\prime}| +\frac{\sqrt{3}k_0|\lambda_{i0}|}{18}|x_sx_s^{\prime}|
\end{align}} %
With assumptions \ref{compact-space}-\ref{error-dominance}, (\ref{rho-rho-abs})-(\ref{gamma-gamma-abs}) are finite. Then we can apply the ergodic theorem \cite{birkhoff1931proof} and conclude that
{\small\setlength{\abovedisplayskip}{3pt}
	\setlength{\belowdisplayskip}{\abovedisplayskip}
	\setlength{\abovedisplayshortskip}{0pt}
	\setlength{\belowdisplayshortskip}{3pt}
	\begin{align*}
\left|\frac{1}{n}\frac{\partial^2\mathcal{L}_n(\boldsymbol{\theta}_0)}{\partial \boldsymbol{\theta}\partial\boldsymbol{\theta}^{\prime}}-\mathbb{E} \frac{1}{n}\frac{\partial^2\mathcal{L}_n(\boldsymbol{\theta}_0)}{\partial \boldsymbol{\theta}\partial\boldsymbol{\theta}^{\prime}}\right|\xrightarrow{p}0
	\end{align*}} %
We have proved that $\left|\frac{1}{n}\frac{\partial^2 \mathcal{L}_n(\tilde{\boldsymbol{\theta}}_n)}{\partial \boldsymbol{\theta}\partial\boldsymbol{\theta}^{\prime}}-\frac{1}{n}\frac{\partial^2\mathcal{L}_n(\boldsymbol{\theta}_0)}{\partial \boldsymbol{\theta}\partial\boldsymbol{\theta}^{\prime}}\right|\xrightarrow{p}\boldsymbol{0}$ so it is trivial that
 {\small\setlength{\abovedisplayskip}{3pt}
 	\setlength{\belowdisplayskip}{\abovedisplayskip}
 	\setlength{\abovedisplayshortskip}{0pt}
 	\setlength{\belowdisplayshortskip}{3pt}
 	\begin{gather}\label{cvg-3}
 	\left|\frac{1}{n}\frac{\partial^2\mathcal{L}_n(\tilde{\boldsymbol{\theta}}_n)}{\partial \boldsymbol{\theta}\partial\boldsymbol{\theta}^{\prime}}-\mathbb{E} \frac{1}{n}\frac{\partial^2\mathcal{L}_n(\boldsymbol{\theta}_0)}{\partial \boldsymbol{\theta}\partial\boldsymbol{\theta}^{\prime}}\right|\xrightarrow{p}0
 	\end{gather}}%
Recall the equation (\ref{taylor}), we have proved that $\frac{1}{\sqrt{n}}\frac{\partial \mathcal{L}_n(\boldsymbol{\theta}_0)}{\partial \boldsymbol{\theta}}$ has the limiting distribution $N(\boldsymbol{0}, B(\boldsymbol{\theta}_0))$. With (\ref{cvg-3}), for $\tilde{\boldsymbol{\theta}}_n$ between $\hat{\boldsymbol{\theta}}_n$ and $\boldsymbol{\theta}_0$,  $-\frac{1}{n}\frac{\partial^2\mathcal{L}_n(\tilde{\boldsymbol{\theta}}_n)}{\partial \boldsymbol{\theta}\partial\boldsymbol{\theta}^{\prime}}\xrightarrow{p}A(\boldsymbol{\theta}_0)$ so we can conclude that $\sqrt{n} (\hat{\boldsymbol{\theta}}_n-\boldsymbol{\theta}_0) \xrightarrow{d} N (\boldsymbol{0}, \boldsymbol{\Omega}_0)$, where $\boldsymbol{\Omega}_0 = A^{-1}(\boldsymbol{\theta}_0)B(\boldsymbol{\theta}_0)A^{-1}(\boldsymbol{\theta}_0)$.
\end{proof}

\section{Numerical Results}
\subsection{Simulation Study}
In this section, we conduct simulation experiments to examine the estimators' behavior for finite samples. For estimation purposes it is often useful to reparametrize the logistic function $F(x_s^{\prime}\boldsymbol{\gamma}_i)$ as
{\small\setlength{\abovedisplayskip}{3pt}
	\setlength{\belowdisplayskip}{\abovedisplayskip}
	\setlength{\abovedisplayshortskip}{0pt}
	\setlength{\belowdisplayshortskip}{3pt}
	\begin{equation}\label{logistic_repara}
	F\left(||\boldsymbol{\gamma}_i||\cdot x_s^{\prime}\frac{\boldsymbol{\gamma}_i}{||\boldsymbol{\gamma}_i||}\right) = \left(1+e^{-||\boldsymbol{\gamma}_i||\cdot x_s^{\prime}\frac{\boldsymbol{\gamma}_i}{||\boldsymbol{\gamma}_i||}}\right)^{-1},\, i=1,\ldots,h
	\end{equation}} %
where $||\boldsymbol{\gamma}_i||, i =1,\ldots, h$ is the $L_2$-norm of  $\boldsymbol{\gamma}_i$.
We use a univariate exogenous variable and let $X_n=(x_1,\ldots,x_n)^{\prime}$.
 For illustration, we only include the nonlinear component of $X_n$. Usually we would like to normalize predictors before fitting a neural network model to avoid the computation overflow \cite{medeiros2006building} so we add a centralizing constant $\gamma_0$ in this simulation. The model becomes
{\small\setlength{\abovedisplayskip}{3pt}
  \setlength{\belowdisplayskip}{\abovedisplayskip}
  \setlength{\abovedisplayshortskip}{0pt}
  \setlength{\belowdisplayshortskip}{3pt}
\begin{equation}\label{simulation-model}
y_s = \rho\sum_{i=1}^{n}w_{si}y_i + \lambda F(\gamma_1(x_s-\gamma_0)) + \varepsilon_s
\end{equation}} %
\vspace{-20pt}

\noindent
For identification reasons mentioned in Restriction 1 and 2, we impose $\gamma_1>0$. 

We sample $n=2500, 4900$ random errors respectively from three distributions (standard normal, rescaled t-distribution and Laplace distribution) with variance 1 and $X$ is a univariate exogenous variable, values of which sampled from a normal distribution $N(0.5,3^2)$. We set the true parameters to be $\rho_0 = 0.6$, $\lambda_0 = 5$, and weights in the neural net $\gamma_{00} = 0.5$, $\gamma_{10} = 1$. 
The log-likelihood function $\mathcal{L}_n(\boldsymbol{\theta})$ is given in (\ref{likelihood-simulation}) and we use L-BFGS-B method\cite{byrd1995limited,zhu1997algorithm} (recommended for bound constrained optimization) to find the parameter estimates $\hat{\boldsymbol{\theta}}$ which maximize (\ref{likelihood-simulation}).
{\small\setlength{\abovedisplayskip}{3pt}
	\setlength{\belowdisplayskip}{\abovedisplayskip}
	\setlength{\abovedisplayshortskip}{0pt}
	\setlength{\belowdisplayshortskip}{3pt}
	\begin{align}\label{likelihood-simulation}
	\mathcal{L}_n(\boldsymbol{\theta}) &= \ln |I_n-\rho W_n|+\sum_{s=1}^{n}\ln f(\varepsilon_s(\boldsymbol{\theta}))\\
	\varepsilon_s(\boldsymbol{\theta}) &= y_s-\rho\sum_{s=1}^{n}w_{si}y_i-x_s\beta-\lambda F(\gamma_1(x_s-\gamma_0))
	\end{align}} %
For the model under consideration, we estimated the covariance of the asymptotic normal distribution equation (\ref{asymptotic}). 
Since matrices $A(\boldsymbol{\theta}_0)$ and $B(\boldsymbol{\theta}_0)$ involve expected values with respect to the true parameter $\boldsymbol{\theta}_0$, given merely observations, in practice they can be estimated as follows:
{\small\setlength{\abovedisplayskip}{3pt}
  \setlength{\belowdisplayskip}{\abovedisplayskip}
  \setlength{\abovedisplayshortskip}{0pt}
  \setlength{\belowdisplayshortskip}{3pt}
\begin{align*}
\hat{A}(\hat{\boldsymbol{\theta}}) &= \frac{1}{n}\sum_{s=1}^{n}-\frac{\partial^2 l(\hat{\boldsymbol{\theta}}|x_s,y_s)}{\partial \boldsymbol{\theta}\partial \boldsymbol{\theta}^{\prime}}\\
\hat{B}(\hat{\boldsymbol{\theta}}) &= \frac{1}{n}\sum_{s=1}^{n}\frac{\partial l(\hat{\boldsymbol{\theta}}|x_s,y_s)}{\partial \boldsymbol{\theta}}\frac{\partial l(\hat{\boldsymbol{\theta}}|x_s,y_s)}{\partial \boldsymbol{\theta}^{\prime}}
\end{align*}} %
where
{\small\setlength{\abovedisplayskip}{3pt}
  \setlength{\belowdisplayskip}{\abovedisplayskip}
  \setlength{\abovedisplayshortskip}{0pt}
  \setlength{\belowdisplayshortskip}{3pt}
\begin{align*}
l(\boldsymbol{\theta}|x_s,y_s) = \frac{1}{n}\ln |I_n-\rho W_n|+ \ln f(\varepsilon_s(\boldsymbol{\theta}))
\end{align*}} %
Using (\ref{first-order-derivative}) and (\ref{second-order-derivative}), we can calculate $\hat{A}(\boldsymbol{\theta}_0), \hat{B}(\boldsymbol{\theta}_0)$ to assess the asymptotic properties of parameter estimates.
Note that the derivative of the log-likelihood with respect to $\rho$ cannot be calculated directly because it requires taking derivative with respect to a log-determinant of $I_n-\rho W_n$. For small sample sizes, we can compute the determinant directly and get the corresponding derivatives; but for large sample sizes, for example a dataset with $3000$ observations, $W_n$ is a $3000\times 3000$ weight matrix which makes it impossible to calculate the derivative directly. Since $W_n$ is a square matrix, we can apply the spectral decomposition such that $W_n$ can be expressed in terms of its $n$ eigenvalue-eigenvector pairs in (\ref{matrix-decomposition}).
So we can apply the following approach to calculate the derivative of $\ln |I_n-\rho W_n|$, which greatly reduce the burden of computations (Viton \cite{viton2010notes}).
{\small\setlength{\abovedisplayskip}{3pt}
  \setlength{\belowdisplayskip}{\abovedisplayskip}
  \setlength{\abovedisplayshortskip}{0pt}
  \setlength{\belowdisplayshortskip}{3pt}\begin{align*}
\ln |I_n-\rho W_n| = \ln \left(\prod_{s = 1}^{n}(1-\rho \tau_i)\right)
\end{align*}} %
Further the derivatives of the log-likelihood function with respect to $\rho$ is
{\small\setlength{\abovedisplayskip}{3pt}
  \setlength{\belowdisplayskip}{\abovedisplayskip}
  \setlength{\abovedisplayshortskip}{0pt}
  \setlength{\belowdisplayshortskip}{3pt}  
\begin{align*}
\frac{\partial l_s(\boldsymbol{\theta}|x_s,y_s)}{\partial \rho} &= \frac{1}{n}\sum_{i = 1}^{n}\frac{-\tau_i}{(1-\rho \tau_i)}+\{y_s-\rho\sum_{i=1}^{n}w_{si}y_i-\lambda F(\gamma_1(x_s-\gamma_0)\}\cdot\left(\sum_{i=1}^{n}w_{si}y_i\right)\\
\frac{\partial^2 l_s(\boldsymbol{\theta}|x_s,y_s)}{\partial \rho\partial \rho} & = -\frac{1}{n}\sum_{i = 1}^{n}\left[\frac{\tau_i^2}{(1-\rho \tau_i)^2}+\left(\sum_{i=1}^{n}w_{si}y_i\right)^2\right]
\end{align*}} %
Finally we can estimate the covariance matrix by equation (\ref{asy_var}).
{\small\setlength{\abovedisplayskip}{3pt}
  \setlength{\belowdisplayskip}{\abovedisplayskip}
  \setlength{\abovedisplayshortskip}{0pt}
  \setlength{\belowdisplayshortskip}{3pt}
\begin{align}\label{asy_var}
\hat{\boldsymbol{\Omega}} &= \hat{A}^{-1}(\boldsymbol{\theta}_0)\hat{B}(\boldsymbol{\theta}_0)\hat{A}^{-1}(\boldsymbol{\theta}_0)
\end{align}} %
In our simulation study, we computed $\hat{\boldsymbol{\theta}}$ the for 200 replicates for each $n=2500, 4900$.  
The estimate $\hat{\boldsymbol{\Omega}}$ of the asymptotic covariance matrix is computed based on a sample with 10000 simulated observations. Table \ref{simulation-1} compares the empirical mean and standard errors (in parentheses) of parameter estimators with the true value and their asymptotic standard deviations (in squared brackets) respectively. Comparing the simulation results when $\boldsymbol{\varepsilon}$ follows a standard normal distribution with simulation results when $\boldsymbol{\varepsilon}$ follows a $t(4)$ distribution, means of the estimates over 200 replicates are closer to the true values and their empirical standard deviations are smaller when $\boldsymbol{\varepsilon}$ follows the heavy tailed distribution. For all these experiments with different error distributions, the empirical standard deviations of $\hat{\boldsymbol{\theta}}$ are close to the asymptotic standard deviations which implies that the estimators' finite sample behavior roughly matches their asymptotic distributions. Note that when $\boldsymbol{\varepsilon}$ is sampled from a Laplace distribution, this covariance matrix cannot be computed because its second order derivative is not differentiable at $0$. But the simulated $\hat{\boldsymbol{\theta}}$'s still appear consistent properties. Normal plots for parameter estimates are shown in Figure \ref{qqplot} and give a strong indication of normality.  

\setlength{\extrarowheight}{5pt}
\begin{table}[h!]
\centering
\begin{tabular}{lcccc}
\toprule
\multirow{2}{*}{$\varepsilon$} &
      \multicolumn{4}{c}{$n=2500$}
       \\\cmidrule(lr){2-5}
 & $\hat{\rho} $ & $\hat{\lambda} $ & $\hat{\gamma}_0 $ & $\hat{\gamma}_1 $ \\\midrule
 \multirow{3}{*}{$N(0,1)$}
   & 0.6178 & 4.8504 &  0.5410 & 1.0576 \\ [-5pt]
   & (0.0075)& (0.0812)& (0.0425)&(0.0431)\\ [-5pt]
   & [0.0046] & [0.0639] & [0.0417] & [0.0354]\\
\multirow{3}{*}{$t(4)$}& 0.6132 & 4.8952 & 0.5326 & 1.0411 \\[-5pt]
& (0.0060) & (0.0623) & (0.0364) & (0.0320) \\ [-5pt]
& [0.0044] & [0.0562] & [0.0353] & [0.0310]\\
$Laplace$ & 0.6107 & 4.9132 & 0.5283 & 1.0358 \\[-5pt]
$(0, \frac{\sqrt{2}}{2})$& (0.0053) & (0.0562) & (0.0295) & (0.0291) \\\midrule
\multirow{2}{*}{$\varepsilon$} &
      \multicolumn{4}{c}{$n=4900$} \\\cmidrule(lr){2-5}

 & $\hat{\rho} $ & $\hat{\lambda} $ & $\hat{\gamma}_0 $ & $\hat{\gamma}_1 $ \\\midrule
 \multirow{3}{*}{$N(0,1)$}
   & 0.6175 & 4.8617 & 0.5435 & 1.0517\\ [-5pt]
   &(0.0056)&(0.0572)&(0.0297)&(0.0303)\\ [-5pt]
   &[0.0033] & [0.0456] & [0.0298] & [0.0252] \\
\multirow{3}{*}{$t(4)$}& 0.6130 & 4.8957& 0.5312& 1.0380\\[-5pt]
& (0.0051)& (0.0526)& (0.0274)& (0.0246)\\ [-5pt]
& [0.0031] & [0.0426] & [0.0260] & [0.0235]\\
$Laplace$ & 0.6096& 4.9242& 0.5217& 1.0268\\[-5pt]
$(0, \frac{\sqrt{2}}{2})$& (0.0047)& (0.0487)& (0.0239)& (0.0233)\\\bottomrule       
\end{tabular}
\caption{Empirical mean and standard errors (in parentheses) of parameter estimates when $\varepsilon$ is sampled from a standard normal, standardized student t distribution and a Laplace distribution. The asymptotic standard errors are displayed for reference in square brackets.}
\label{simulation-1}
\end{table}
\begin{figure}[h!]
\begin{center}
  \includegraphics[width=10cm]{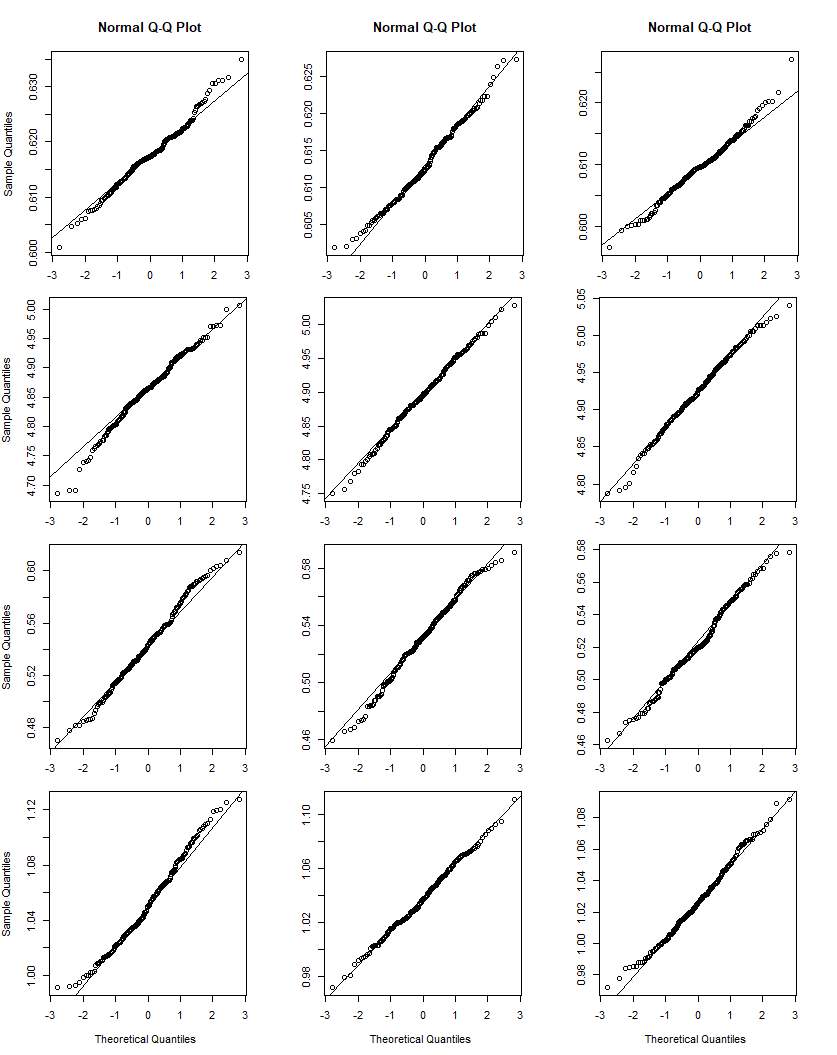}
  \caption{ Normal plots for parameter estimates $\rho$ (1st row), $\lambda$ (2nd row), $\gamma_0$ (3rd row) and $\gamma_1$ (4th row)  when $\varepsilon_s$ follows a standard normal distribution (first column), standardized t distribution (middle column) and Laplace distribution (last column) $n = 70\times 70$ }
  \label{qqplot}
  \end{center}
\end{figure}

\subsection{Real Data Example}
Spatial models have a lot of applications in understanding spatial interactions in  cross-sectional data.
Among them, the study of electoral behavior has attracted considerable attention by political scientists. Poole and Rosenthal \cite{poole1984us} found that the spatial variation plays an important role in presidential electoral dynamics. And mentioned by Braha and de Aguiar (2017 \cite{votingcontagion}), most studies in the U.S. consider vote choices as the result of attitudinal factors such as evaluations of the candidates and government performances as well as social factors such as race, social class, and region.
Inspired by their research, we would like to understand this electoral dynamics using our proposed partially specified spatial autoregressive model and to help identify how  social factors influence people's voting preferences.

Here,  we focus on the proportion of votes cast for U.S. presidential candidates at the county level in 2004. Counties are grouped by state, and let $Y$ be the corresponding fraction of votes (vote-share) in a county for the Democratic candidate (John Kerry) in 2004.
Predictors $X$ are chosen from economic and social factors covering the living standard, economy development and racial distribution.
\begin{figure}[h!]
	\centering
	\includegraphics[width=10cm]{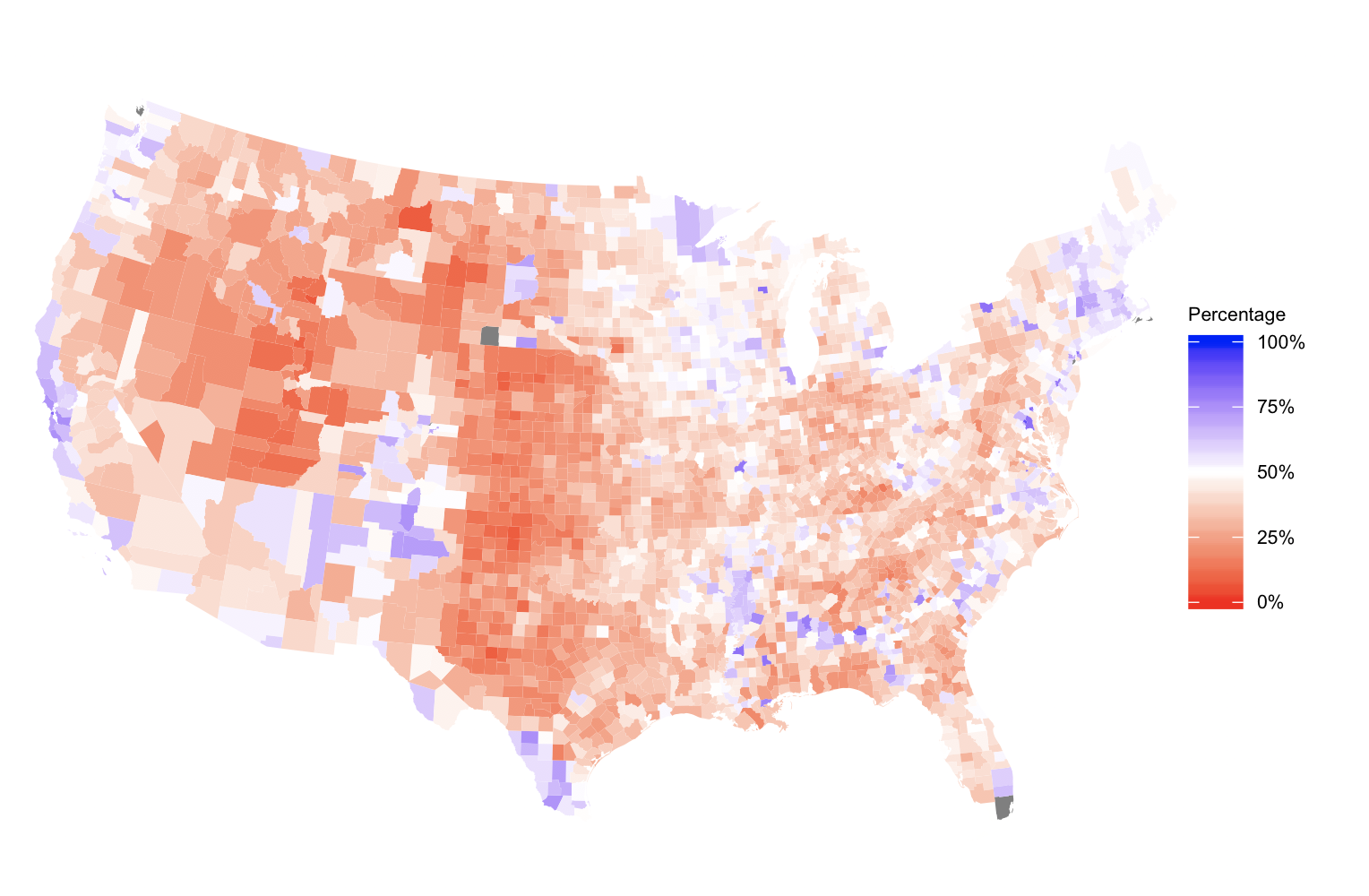}
	\caption{Fractions of Vote-shares per County for Democratic presidential candidate in 2004}
	\label{democracy}
\end{figure}
Figure \ref{democracy} shows the observed values of $Y_n$ for 2004. This heat map exhibits strong correlation between observations in neighboring counties which is supported by Moran's Test on $Y$ (test statistic = $52.4$, P-value $ <2.2\times10^{-16}$). This indicates that $Y$, the fraction of vote-share for Democratic candidate, is not independently distributed across the space. So we consider fitting a spatial model to the data.

In our analysis, we exclude the four U.S. counties with no neighbors (San Juan, Dukes, Nantucket, Richmond) to avoid the non-singularity of our spatial weight matrix $W_n$ in the modeling, so the total number of observations is $n = 3107$.

First we fit a linear regression model to see if it is sufficient to explain the voting dynamic using explanatory variables $X=(X_1,\ldots, X_5)$. From the preliminary analysis fitting $Y$ on all the available variables, we chose the five most significant ones for modeling out of more than 20 different variables. The chosen predictors are percent residents under 18 years $X_{1}$ (\texttt{UNDER18}), percent white residents $X_{2}$ (\texttt{WHITE}), percent residents below poverty line $X_{3}$ (\texttt{pctpoor}),  per capita income $X_{4}$ (\texttt{pcincome}) and USDA urban/rural code $X_{5}$ (\texttt{urbrural},  \texttt{0 = most rural}, \texttt{9 = most urban}). The corresponding least-squares line is as follows:
{\small\setlength{\abovedisplayskip}{3pt}
	\setlength{\belowdisplayskip}{\abovedisplayskip}
	\setlength{\abovedisplayshortskip}{0pt}
	\setlength{\belowdisplayshortskip}{3pt}
	\begin{align}\label{linear-model}
	\hat{Y}= 80.4 - 0.932X_{1} - 0.250X_{2}+0.324X_{3}+2.76\times 10 ^{-5}X_{4}-1.24X_{5}
	\end{align}} 
These six parameter estimates are all significant at $\alpha = 0.05$ and by looking at signs, it is easy to tell how these covariates relate to the voting behaviors. However, one major drawback of this linear model is that the fitted residuals are still correlated across the space (null hypothesis of independence rejected in Moran's Test, test statistic = 54.1, P-value $ <2.2\times10^{-16}$ ; see Figure \ref{res_lm}) so a multiple linear regression fails to adequately describe the spatial dependence in $Y$.
\begin{figure}[h!]
	\centering
	\includegraphics[width=10cm]{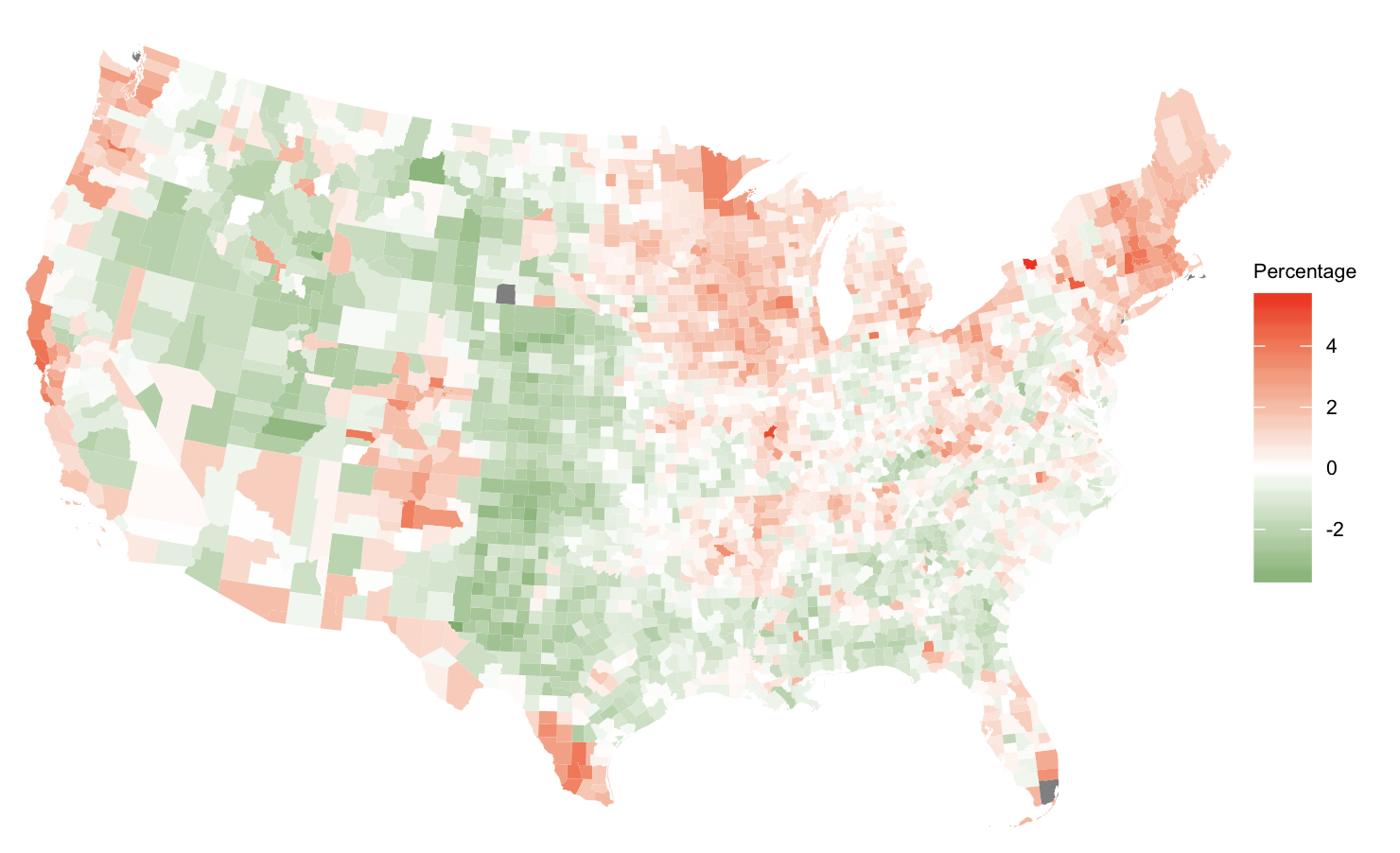}
	\caption{Residuals after fitting a linear regression model}
	\label{res_lm}
\end{figure}
\begin{figure}[h!]
	\centering
		\includegraphics[width=10cm]{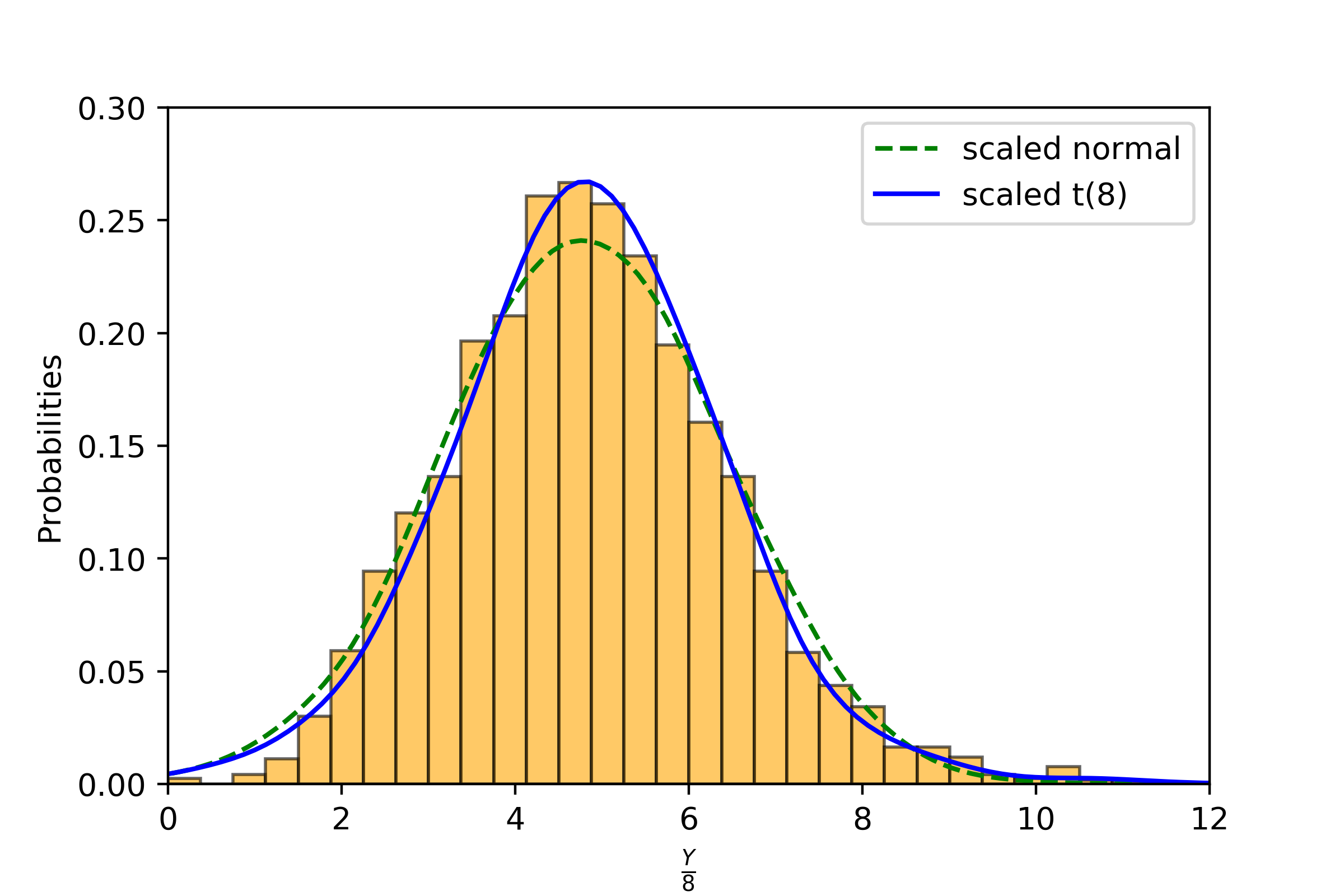}
		\caption{Histogram of (scaled by $\frac{1}{8}$) fraction of vote-shares per county for Democratic presidential candidate in 2004 overlaid
			with a scaled $t$ and a normal density curves. The mean is 4.87 and the standard deviation is 1.57.}
		\label{hist_y}
\end{figure}
\begin{figure}[h!]
	\centering
	\includegraphics[width=12cm]{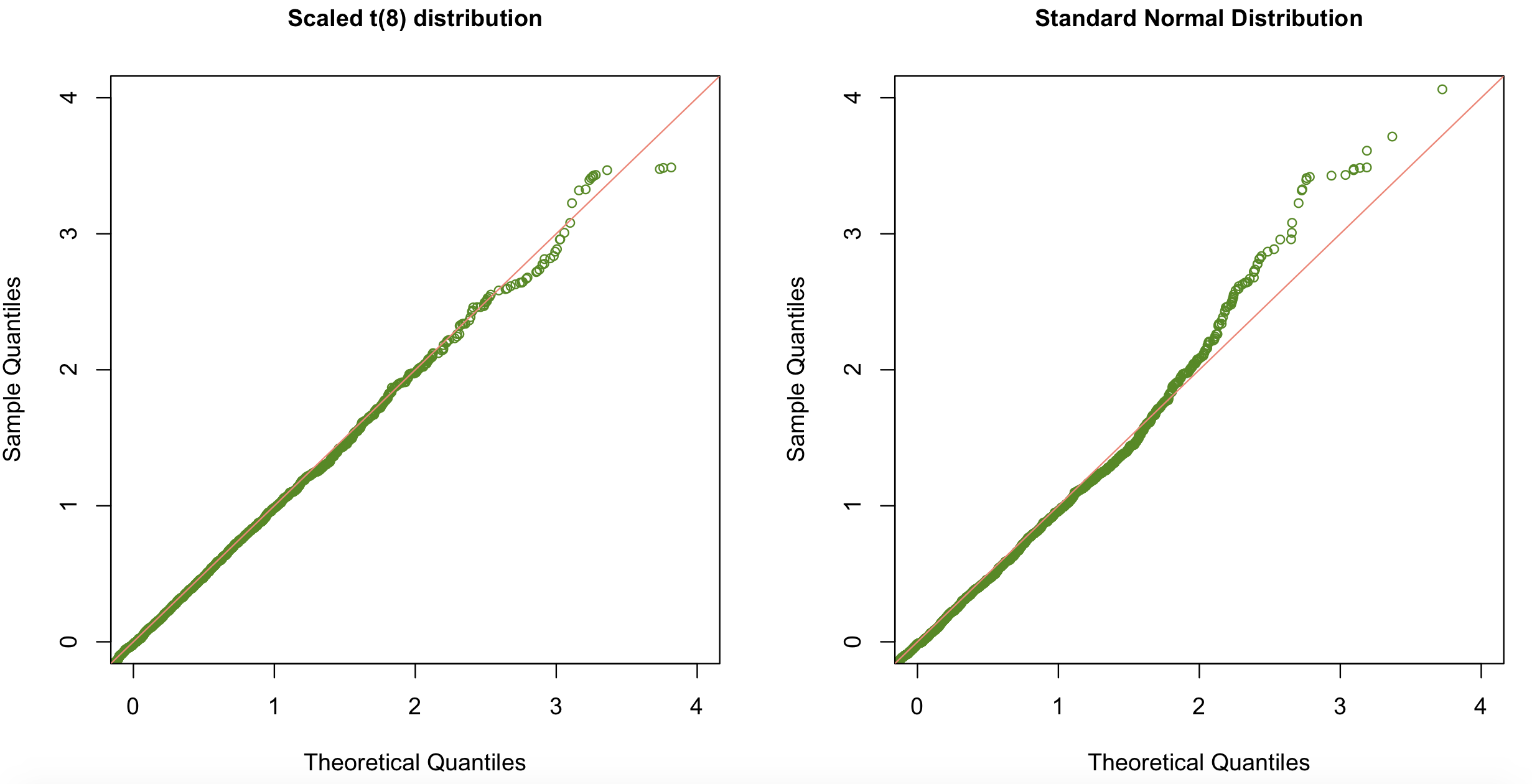}
	\caption{Q-Q plots of $Y$ versus scaled $t(8)$ and standard normal distributions: Y-axis is the sample quantiles of $Y/8$ and X-axis is the theoretical quantiles of a $\mathit{t}$-distribution (left) and a normal distribution (right).}
	\label{qqplot-yhist}
\end{figure}
Another concern is that a Gaussian estimation procedure was used; it is not most efficient when there appears to be heavy tailed errors. Figure \ref{hist_y} shows the histogram of $Y/8$ (men 4.87, standard deviation 1.57) which looks closer to a $\mathit{t}$-distribution than a normal distribution (scaled to have the same mean and standard deviation as those of $Y/8$). Figure \ref{qqplot-yhist} also demonstrates the tail distribution of $Y/8$, where the vertical axis is the sample quantiles of $Y/8$ and horizontal axis is the theoretical quantiles of scaled $\mathit{t}(8)$ and normal distribution. Clearly the observation $Y/8$ is heavy tailed. To address these, we would like to fit a spatial autoregressive model to those data and assume that the random error follows a scaled $t(8)$ distribution (scaled $\mathit{t}(8)$ has a closer density curve to $Y/8$ shown in Figure \ref{hist_y} and \ref{qqplot-yhist}). We maximized the corresponding log-likelihood function to obtain parameter estimates. For simplicity, we then only selected the three most significant variables as predictors $X$ based on the linear regression results; they are \texttt{UNDER18}, \texttt{WHITE} and \texttt{pctpoor}. The weight matrix $W_n$ is generated through a shapefile \cite{shapefile} (a geospatial vector storage format for storing geometric location and associated attribute information) using the queen criterion. 
Scatter plots of $X_1$, $X_2$ and $X_3$ versus $Y$ are shown in Figure \ref{preditor-scatterplot}.
We can clearly observe the nonlinear trend between $X_2, X_3$ and $Y$. In the linear model $X_1$ is the most significant variable but despite the linear trend, the scatter plot of \texttt{UNDER18} versus $Y$ has lots of noises around the center range from 20 to 30 percent. This may be caused by some spatial correlation in $X_1$ itself so we try despatializing $X_1$ by fitting an ordinary spatial autoregressive model $X_1 = \rho_xW_{3107}X_1+\varepsilon$. The spatial correlation of $X_1$ is estimated as $0.6$ so we define the despatialized variable $\tilde{X_1} = (I_n-0.6W_{3107})X_1$  (the scatter plot of $X_1$ in Figure \ref{preditor-scatterplot} does not show specific pattern even though this variable is significant from our preliminary analysis. So we consider de-spatializing $X_1$ and $\sum_{s=1}^{n}\hat{\varepsilon}_s^2$ of the model fitted with despatializated $X_1$ is smaller than that of the model fitted with original $X_1$). In addition, to avoid the computation overflow when maximizing the corresponding log-likelihood function, we normalized these predictors to have zero means and unit variances and also rescaled $Y$ by $\frac{1}{8}$. We conduct the following analysis using these transformed variables $Y^{\ast}$, $X^{\ast} = (X^{\ast}_1, X^{\ast}_2, X^{\ast}_3)$.
{\small\setlength{\abovedisplayskip}{3pt}
	\setlength{\belowdisplayskip}{\abovedisplayskip}
	\setlength{\abovedisplayshortskip}{0pt}
	\setlength{\belowdisplayshortskip}{3pt}
	\begin{align*}
	Y^{\ast} &= Y/8\\
	X_1^{\ast} &= \frac{\tilde{X_1}-Average(\tilde{X_1})}{Std(\tilde{X_1})}\\
	X_2^{\ast} &= \frac{X_2-Average(X_2)}{Std(X_2)}\\
	X_3^{\ast} &= \frac{X_3-Average(X_3)}{Std(X_3)}
	\end{align*}} 
\begin{figure}[h!]
	\centering
		\includegraphics[width=.3\linewidth]{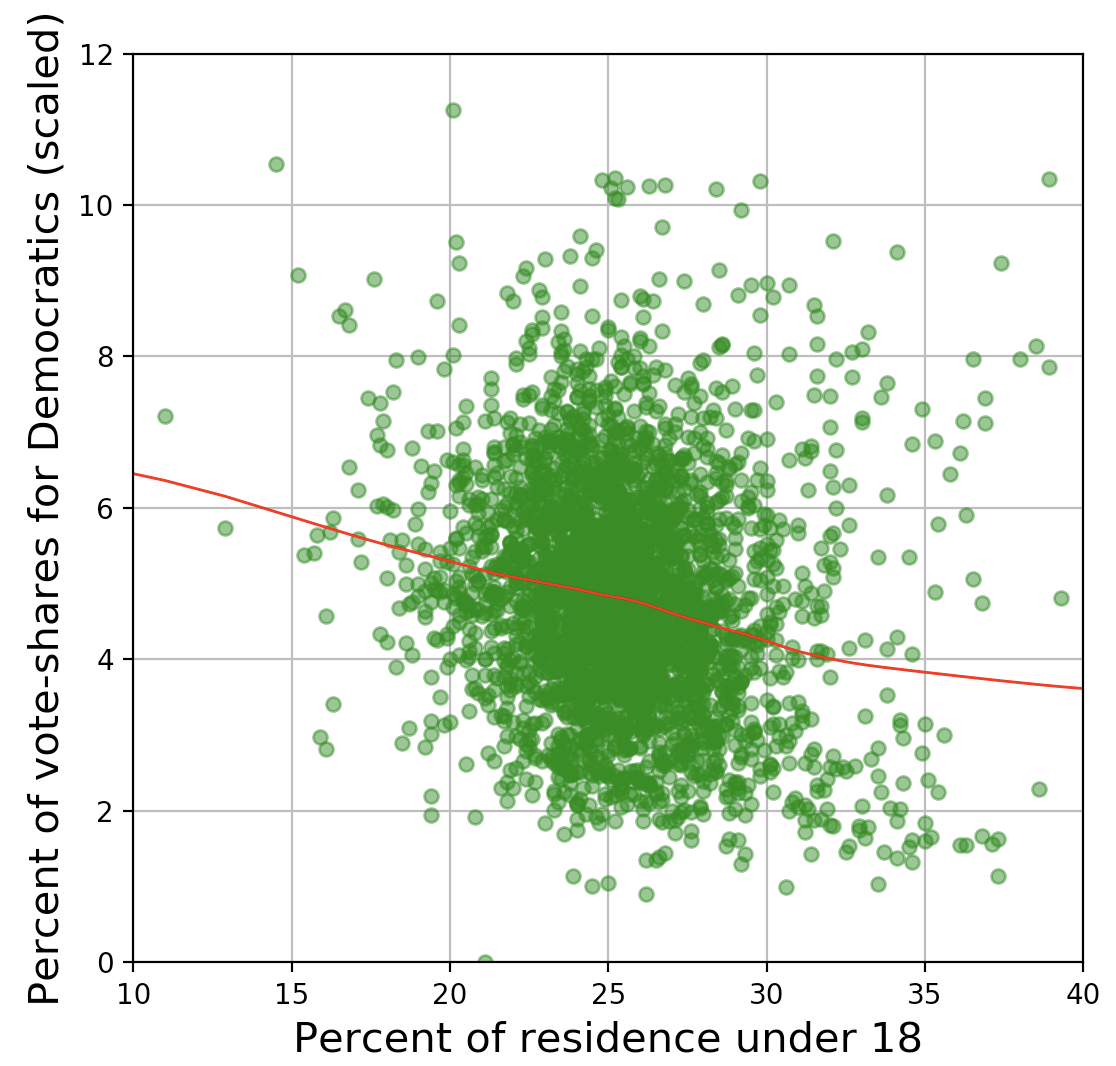}
		\includegraphics[width=.3\linewidth]{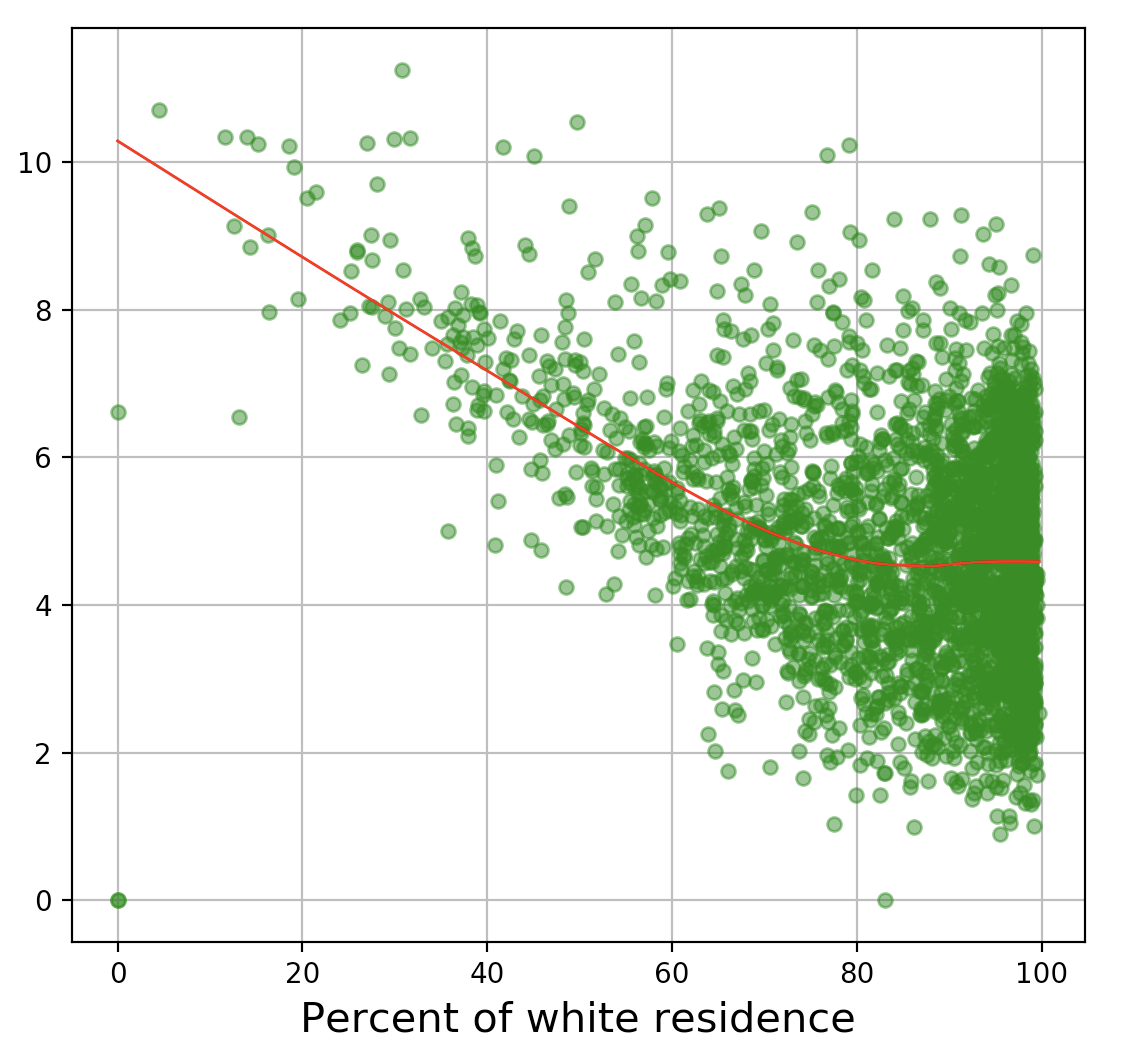}
		\includegraphics[width=.3\linewidth]{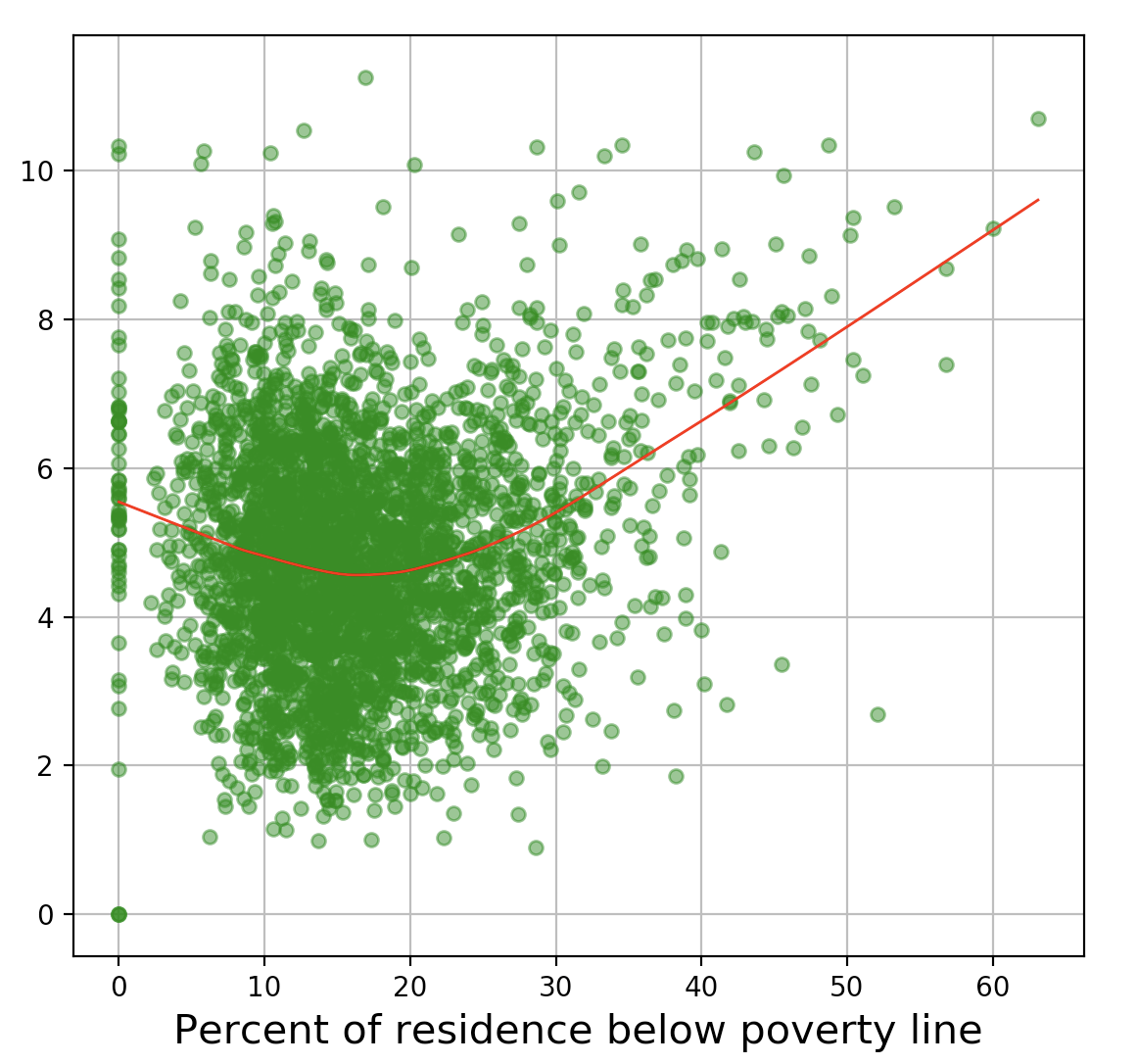}
		\caption{Scatter plots: percentage residence under 18 (left), percent white residents (middle) and percent below poverty line (right) in U.S. counties. The red lines are the Lowess smoothing curve between $Y^{\ast}$ and predictors.}
		\label{preditor-scatterplot}
\end{figure}
The first spatial model we tried is the ordinary spatial autoregressive model with $X^{\ast}_1$, $X^{\ast}_2$, $X^{\ast}_3$ and we assume the error follows a scaled $t$ distribution with $df = 8$ ($Y^{\ast} = \rho W_{3107} Y^{\ast} + X^{\ast}\beta +\varepsilon$; the fitted residual variance is about 1 when we assume the $t$ distribution with $df=8$ referring to Figure \ref{hist_y}). The model fit via maximizing log-likelihood function is shown below:
 {\small\setlength{\abovedisplayskip}{3pt}
 	\setlength{\belowdisplayskip}{\abovedisplayskip}
 	\setlength{\abovedisplayshortskip}{0pt}
 	\setlength{\belowdisplayshortskip}{3pt}
 	\begin{align}\label{psar-model}
  Y^{\ast}= 0.744W_{3107} Y^{\ast}+ 1.222
 	-0.284X^{\ast}_{1} - 0.451X^{\ast}_{2}+0.03X^{\ast}_{3} +\varepsilon
 	\end{align}} 
 The spatial correlation parameter $\hat{\rho} = 0.744$ indicates pretty high spatial dependence in $Y^{\ast}$ and the spatial dependence in the residuals is insignificant (Moran's test statistics = 1.38, P-value = 0.167). However, in Figure \ref{preditor-scatterplot}, there appears to be a nonlinear relationship between $Y$ and $X_2, X_3$. To address this, we would like to fit our proposed PSAR-ANN model to the same dataset and still we assume a $t(8)$ distributed error. The log-likelihood function in this case should be
 {\small\setlength{\abovedisplayskip}{3pt}
 	\setlength{\belowdisplayskip}{\abovedisplayskip}
 	\setlength{\abovedisplayshortskip}{0pt}
 	\setlength{\belowdisplayshortskip}{3pt}
 	\begin{align}\label{likelihood-example}
 	\mathcal{L}_{3107}(\boldsymbol{\theta}) &= \ln |I_{3017}-\rho W_{3017}|-4.5\sum_{s=1}^{3107}\ln(1+\frac{\varepsilon_s(\boldsymbol{\theta})^2}{6})\\
 	\boldsymbol{\varepsilon}(\boldsymbol{\theta}) & = (I_{3017}-\rho W_{3017})Y^{\ast}-X^{\ast}\beta-F(X^{\ast}\boldsymbol{\gamma}^{\prime})\lambda
 	\end{align}} 
Since the PSAR-ANN model has both linear and nonlinear components, we optimize these two parts iteratively to find the maximum likelihood estimators. Many optimization algorithm are sensitive to the choice of starting-values and people usually train neural network models starting at very small initial values. So especially each time instead of using the previous parameter estimates for neural network component, we always reinitialize the starting values for the neural network component and use L-BFGS-B algorithm \cite{byrd1995limited} to search the optimum. The optimization steps are outlined below:
 \begin{itemize}
 	\item Step 0: Based on some pre-knowledge about the parameters, set starting values $(\rho^{0}, \beta^{0}, \lambda^{0}, \boldsymbol{\gamma}^{0})$ and predetermine bounds for parameters in the optimization.
 	\item Step 1: In the $i$th iteration for the linear component optimization, fixing $\lambda^{i-1}$, $\boldsymbol{\gamma}^{i-1}$, use $(\rho^{i-1}, \beta^{i-1}, \lambda^{i-1}, \boldsymbol{\gamma}^{i-1})$ as starting values and apply L-BFGS-B algorithm\cite{byrd1995limited,zhu1997algorithm} to find $\rho^{(i)}$ and $\beta^{(i)}$ which maximize $\mathcal{L}_{3107}(\boldsymbol{\theta})$ in (\ref{likelihood-example}) given $\lambda^{i-1}$, $\boldsymbol{\gamma}^{i-1}$.
 	\item Step 2: In the $i$th iteration for the nonlinear component optimization, fixing $\rho^{(i)}, \beta^{(i)}$ from Step 1, randomly initialize $\lambda, \boldsymbol{\gamma}$ starting values  from a small interval $(0,0.05)$ (to avoid the computation overflow when calculating exponentials) and again use L-BFGS-B algorithm \cite{byrd1995limited} to find $\lambda^{(i)}, \boldsymbol{\gamma}^{(i)}$ which maximize $\mathcal{L}_{3107}(\boldsymbol{\theta})$ in (\ref{likelihood-example}) given $\rho^{(i)}, \beta^{(i)}$.
    \item Step 3: Repeat Step 1, 2 until the difference of the corresponding log-likelihood function values in Step 1 and 2 is smaller than some threshold value (for example $10^{-2}$).
 \end{itemize}
The following is the estimated PSAR-ANN model
{\small\setlength{\abovedisplayskip}{3pt}
	\setlength{\belowdisplayskip}{\abovedisplayskip}
	\setlength{\abovedisplayshortskip}{0pt}
	\setlength{\belowdisplayshortskip}{3pt}
	\begin{align}\label{psar-ann-model}\nonumber
	Y^{\ast} =&0.721W_{3107}Y^{\ast} + 1.693
	-0.185X_1^{\ast}- 0.658X_2^{\ast}+0.181X_3^{\ast}\\
	&-0.937F(1.509X_1^{\ast} -2.544X_2^{\ast}+2.268X_3^{\ast})+\hat{\boldsymbol{\varepsilon}}
	\end{align}} 
In model (\ref{psar-ann-model}), the correlation estimate is roughly the same as the model (\ref{psar-model}) indicating that people in neighboring counties tend to have similar voting preferences. The Moran's test statistic for the residuals is 1.78 with P-value =  0.0745. We compare the SAR model with our proposed model and find that even though the new model has more parameters, it has lower AIC (AIC $= 2(\#\text{parameters})-2\ln L_n(\hat{\boldsymbol{\theta}})$) compared to the original spatial autoregressive model (See table \ref{aic}). Through likelihood ratio test ($\mathcal{H}_0$: SAR model is adequate, $\mathcal{H}_1$: PSAR-ANN model is adequate), the test statistic $-2\ln L_{SAR}+2\ln L_{PSAR-ANN} = 157.45$ with $df = 4$, P-value $< 0.05$, so we rejected $\mathcal{H}_0$ and conclude that the PSAR-ANN model is a better fit than SAR model. Figure \ref{res_nn} shows the residuals (of PSAR-ANN model) heat map and its histogram. Through the residual histogram, assuming the error density as a standardized $t(8)$ ($df$ is chosen by the shape of the residual histogram) seems to be more appropriate than a standard normal distribution.
\begin{table}
	\centering
	\begin{tabular}{ccc}
		\toprule
		& SAR & PSAR-ANN  \\\midrule
		$\#$ Parameters& $5$ & $9$\\
		\multirow{2}{*}{Moran's Test}
		& $0.167$& $0.0745$ \\ [-2pt]
		& $(1.3808)$ & $(1.7836)$\\
		$-\ln L$& $1958.08$ & $1879.35$\\
		$AIC$&$3926.16$& $3776.17$ \\\bottomrule     
	\end{tabular}
	\caption{Comparison of SAR and PSAR-ANN model by $\#$ parameters, Moran's test P-value (test statistics), $-\ln L$ and $AIC$}
\label{aic}
\end{table}
\begin{figure}[h!]
	\centering
	\includegraphics[width=0.48\linewidth]{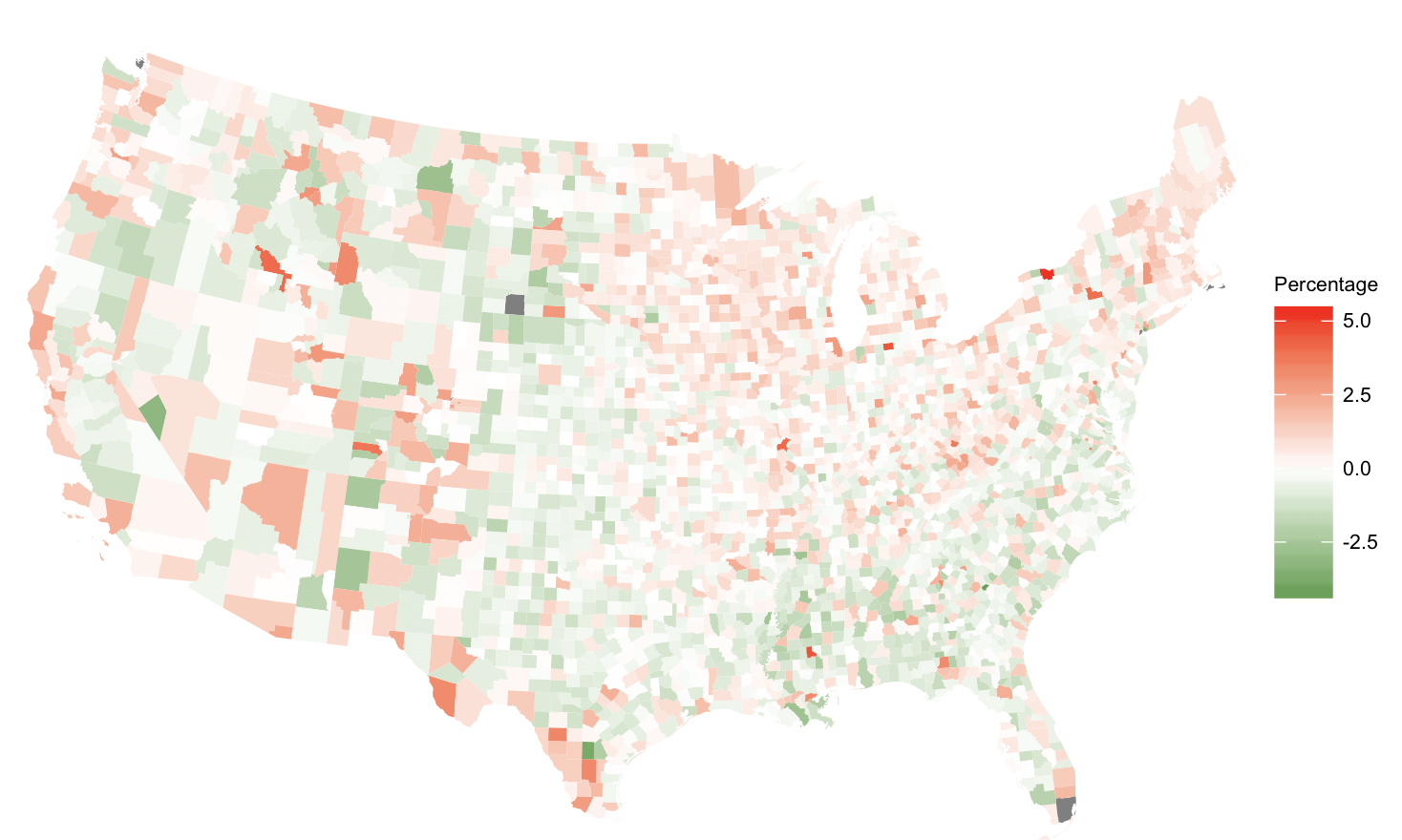}
	\includegraphics[width=0.45\linewidth]{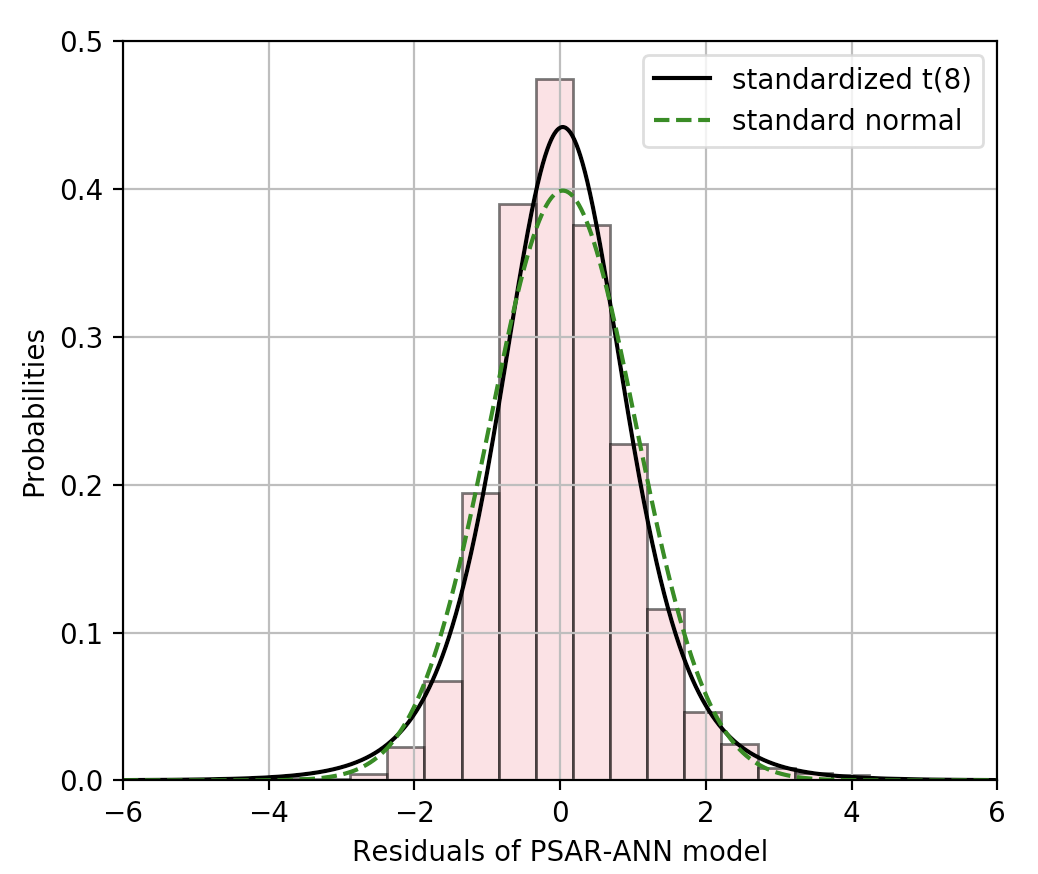}
	\caption{Heat map (left)  and histogram (right) of residuals of the PSAR-ANN model}
	\label{res_nn}
\end{figure}

The covariance matrix for parameter estimates in model (\ref{psar-ann-model}) is calculated and the 95\% confidence intervals for the model parameters are shown in table \ref{CI}. From the table, all the parameters are significant at $95\%$ significance level. Looking at the signs of parameter estimates, we can learn that a county with more young residence under 18 and white residence is more likely to support Republicans while people struggling to make ends meet are prone to support the Democrat. This opposite effect can also be observed in Figure \ref{preditor-scatterplot}. The neural network component in our model helps to capture  the nonlinear relationship between $X$ and $Y$. Parameter $\lambda$ is significant so the nonlinear component is appreciable in modeling and $\boldsymbol{\gamma}$'s are all significant at $95\%$ confidence level. Figure \ref{nn-scatter-plot} shows scatter plots of $X_1^{\ast},X_2^{\ast}, X_3^{\ast}$, where points are colored by the value of the fitted neural network component $-0.937F(1.509X_1^{\ast}-2.544X_2^{\ast}+2.268X_3^{\ast})$. Observations with green color are counties tending to have more voters for the Democratic candidate while the red points represent counties tending to have more voters for the Republican candidate. From the distribution of these colored points, it appears that counties with more people below poverty line and less white residence tend to have more Democratic voters. On the other hand, voters in counties with more children and higher percent white residence tend to be less likely to vote Democratic. These findings also correspond to the trend we can find in the linear component but they are presented in a non-linear way.
\begin{figure}[h!]
	\centering
	\includegraphics[width=0.9\linewidth]{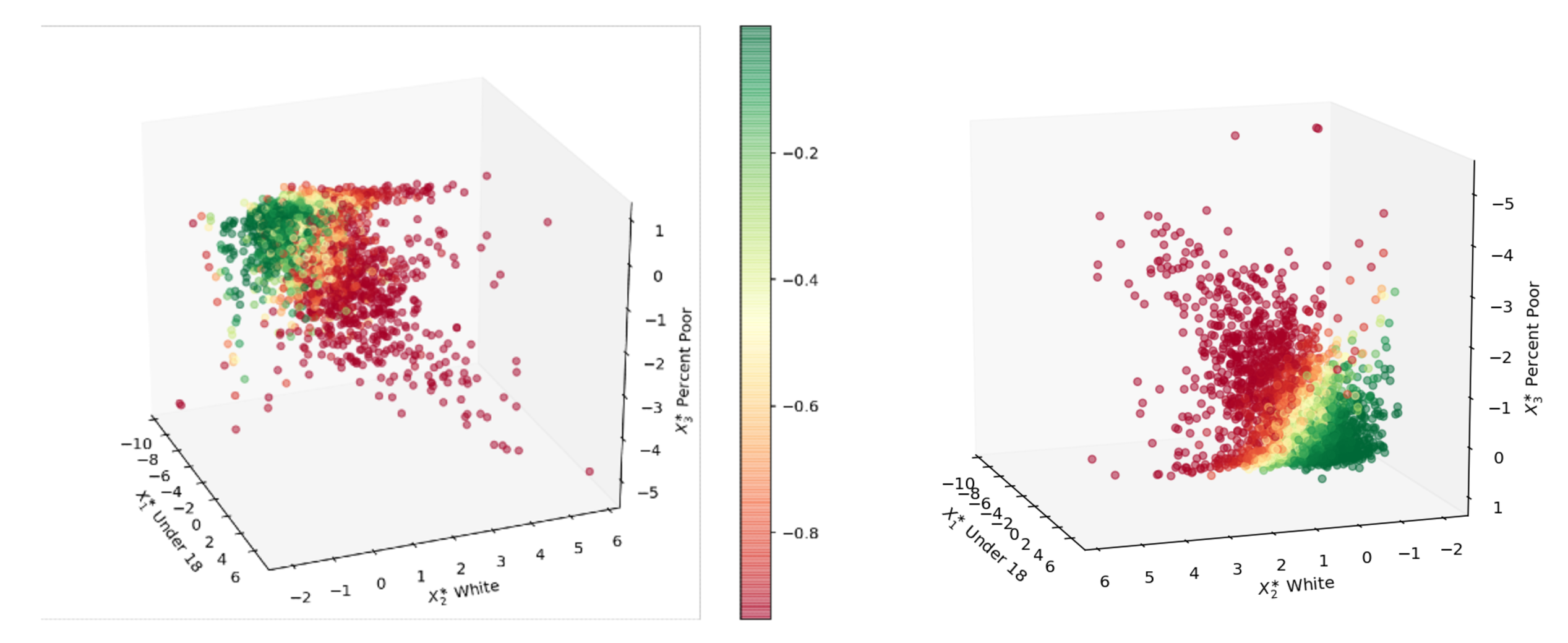}
	\caption{Scatter plot of $X_1^{\ast},X_2^{\ast}, X_3^{\ast}$ colored by the output of fitted neural network component $-0.937F(1.509X_1^{\ast}-2.544X_2^{\ast}+2.268X_3^{\ast})$.}
	\label{nn-scatter-plot}
\end{figure}

To conclude, our proposed model PSAR-ANN appears to successfully capture some spatial election dynamics. It allows for non-Gaussian random errors and is flexible in learning nonlinear relationships between the response and exogenous variables.
\setlength{\extrarowheight}{5pt}
\begin{table}[h!]
	\centering
	\begin{tabular}{cccc}
		\toprule
		Parameter& Estimate&Std.&$95\%$ C.I. \\ \midrule
		$\rho$ & 0.721 & 0.0102&  $(0.7010,0.7410)$ \\ [-5pt]
		$\beta_0$& 1.693& 0.0573& $(1.5807,1.8053)$\\ [-5pt]
		$\beta_1$& -0.185 & 0.0219 & $(-0.2279,-0.1421)$\\[-5pt]
		$\beta_2$& -0.658 & 0.0288 & $(-0.7144,-0.6016)$\\ [-5pt]
		$\beta_3$& 0.181 & 0.0243 & $(0.1334,0.2286)$\\
		$\lambda$& -0.937& 0.0581& $(-1.0464,-0.8276)$\\ [-5pt]
	    $\gamma_1$& 1.509& 0.0239& $(1.4622,1.5558)$\\ [-5pt]
	    $\gamma_2$& -2.544& 0.0137& $(-2.5709,-2.5171)$\\ [-5pt]
	    $\gamma_3$& 2.268& 0.0157& $(2.2372,2.2988)$\\
	  \bottomrule       
	\end{tabular}
\caption{Parameter estimates of PSAR-ANN model with $95\%$ confidence intervals}
\label{CI}
\end{table}

\small{
\singlespacing
\bibliography{references}
}
\end{document}